\theoremstyle{plain}
\newtheorem{theorem}{Theorem}[section]
\newtheorem{proposition}[theorem]{Proposition}
\newtheorem{lemma}[theorem]{Lemma}
\theoremstyle{definition}
\newtheorem{definition}[theorem]{Definition}
\theoremstyle{remark}
\newtheorem*{rep@theorem}{\rep@title}
\newcommand{\newreptheorem}[2]{
\newenvironment{rep#1}[1]{
 \def\rep@title{#2 \ref{##1}}
 \begin{rep@theorem}}
 {\end{rep@theorem}}}
\newcommand{\mathsep}{,~}
\newcommand{\st}{\ensuremath{~\middle|~}}
\newcommand{\card}[1]{\left\lvert{#1}\right\rvert}
\newcommand{\absv}[1]{\card{#1}}
\newcommand{\norm}[2]{\left\lVert{#1}\right\rVert_{#2}}
\newcommand{\pdf}[1]{{\bf p} \left[ #1 \right]}
\newcommand{\argmin}[2]{\underset{#1}{\text{argmin}} \set{#2}}
\newcommand{\set}[1]{\left\lbrace #1 \right\rbrace}
\newcommand{\setR}{\mathbb R}
\newcommand{\optcontributorscore}{\theta^{\textbf{raw}}}
\newcommand{\globalscore}{\rho^{\textbf{sss}}}
\newcommand{\contributor}{u}
\newcommand{\contributorbis}{v}
\newcommand{\Contributor}{U}
\newcommand{\EmailVerifiedContributor}{U_\checkmark}
\newcommand{\contributorscore}{\theta}
\newcommand{\zeroshiftquantile}{q^{zero}_{shift}}
\newcommand{\sssdevquantile}{q^{\bf sss}_{dev}}
\newcommand{\solidago}{\textsc{Solidago}}
\newcommand{\contributorloss}[1]{\mathcal L_{#1}^{score}}
\newcommand{\mehestan}{\textsc{Mehestan}}
\newcommand{\clip}{\textsc{Clip}}
\newcommand{\clippedmean}{\textsc{ClipMean}}
\newcommand{\brmean}{\textsc{BrMean}}
\newcommand{\mean}{\textsc{Mean}}
\newcommand{\qrmed}{\textsc{QrMed}}
\newcommand{\qrmedloss}[1]{\mathcal L_{\qrmed{}_{#1}}}
\newcommand{\qrdev}{\textsc{QrDev}}
\newcommand{\qruncertainty}{\textsc{QrUnc}}
\newcommand{\qrqtl}{\textsc{QrQtl}}
\newcommand{\eigentrust}{\textsc{EigenTrust}}
\newcommand{\pagerank}{\textsc{PageRank}}
\newcommand{\byztrust}{\textsc{LipschiTrust}}
\newcommand{\huber}{\textsc{Huber}}
\newcommand{\private}{\textsc{private}}
\newcommand{\accept}{\checkmark}
\newcommand{\reject}{\times}
\newcommand{\Alternative}{E}
\newcommand{\alternative}{e}
\newcommand{\alternativebis}{f}
\newcommand{\alternativeter}{g}
\newcommand{\alternativequater}{h}
\newcommand{\AlternativePair}{EF}
\newcommand{\comparison}{r}
\newcommand{\maxComparison}{R_{max}}
\newcommand{\normalizedcomparison}{\tilde r}
\newcommand{\bigO}{\mathcal O}
\newcommand{\iteration}{t}
\newcommand{\nbIterations}{T}
\newcommand{\votingright}{w}
\newcommand{\lipschitz}{L}
\newcommand{\uncertainty}{\Delta}
\newcommand{\defaultdeviation}{\uncertainty_{default}}
\newcommand{\scalinguncertainty}{\uncertainty \scaling}
\newcommand{\translationuncertainty}{\uncertainty \translation}
\newcommand{\optcontributorscoreuncertainty}{\uncertainty \optcontributorscore}
\newcommand{\scaling}{s}
\newcommand{\translation}{\tau}
\newcommand{\FractionOfNonVerified}{f_{\times}}
\newcommand{\defaultnonverifiedvotingright}{\votingright_{\times, default}^{total}}
\newcommand{\clippedmeancenter}{\textsc{center}}
\newcommand{\clippedmeanradius}{\textsc{radius}}
\newcommand{\pretrust}[1]{\textsc{trust}^{pre}_{#1}}
\newcommand{\interimtrust}[2]{\textsc{tr}^{#1}_{#2}}
\newcommand{\maxovertrust}{\overline{\textsc{OverTrust}}}
\newcommand{\vouchdecay}{\beta}
\newcommand{\trust}[1]{\textsc{trust}_{#1}}
\newcommand{\vouch}{V}
\newcommand{\Vouch}{\mathcal V}
\newcommand{\sinkvouch}{\vouch_{\accept}^{sink}}
\newcommand{\privacypenalty}{\votingright_{\private{}}^{penalty}}
\newcommand{\minvotingright}[1]{\votingright_{min, #1}^{public}}
\newcommand{\overtrust}{\textsc{overTrust}}
\newcommand{\byztrusterror}{\varepsilon_{\byztrust{}}}
\newcommand{\contributorpriorweight}{\alpha_{prior}^{user}}
\newcommand{\privacypenaltyvalue}{0.5}
\newcommand{\defaultnonverifiedvotingrightvalue}{2}
\newcommand{\FractionOfNonVerifiedValue}{0.1}
\newcommand{\eigentrusterrorvalue}{10^{-8}}
\colorlet{punct}{red!60!black}
\definecolor{background}{HTML}{EEEEEE}
\definecolor{delim}{RGB}{20,105,176}
\colorlet{numb}{magenta!60!black}
\lstdefinelanguage{json}{
    basicstyle=\small\ttfamily,
    showstringspaces=false,
    breaklines=true,
    frame=lines,
    backgroundcolor=\color{background},
    literate=
     *{0}{{{\color{numb}0}}}{1}
      {1}{{{\color{numb}1}}}{1}
      {2}{{{\color{numb}2}}}{1}
      {3}{{{\color{numb}3}}}{1}
      {4}{{{\color{numb}4}}}{1}
      {5}{{{\color{numb}5}}}{1}
      {6}{{{\color{numb}6}}}{1}
      {7}{{{\color{numb}7}}}{1}
      {8}{{{\color{numb}8}}}{1}
      {9}{{{\color{numb}9}}}{1}
      {:}{{{\color{punct}{:}}}}{1}
      {,}{{{\color{punct}{,}}}}{1}
      {\{}{{{\color{delim}{\{}}}}{1}
      {\}}{{{\color{delim}{\}}}}}{1}
      {[}{{{\color{delim}{[}}}}{1}
      {]}{{{\color{delim}{]}}}}{1},
}
\title{Solidago: A Modular Collaborative Scoring Pipeline}
\author[1,2]{Lê Nguyên Hoang}
\author[1]{Romain Beylerian}
\author[3]{Bérangère Colbois}
\author[1]{Julien Fageot}
\author[1]{Louis Faucon}
\author[1]{Aidan Jungo}
\author[1]{Alain Le Noac'h}
\author[1]{Adrien Matissart}
\author[3]{Oscar Villemaud}
\affil[1]{Tournesol Association, Switzerland}
\affil[2]{Calicarpa, Switzerland}
\affil[3]{EPFL, Switzerland}
\begin{document}
\maketitle

\begin{abstract}
This paper presents \solidago{}, 
an end-to-end modular pipeline
to allow any community of users 
to collaboratively score any number of entities.
\solidago{} proposes a six-module decomposition. 
First, it uses pretrust and peer-to-peer vouches to assign trust scores to users.
Second, based on participation, trust scores are turned into voting rights per user per entity.
Third, for each user, a preference model is learned from the user's evaluation data.
Fourth, users' models are put on a similar scale.
Fifth, these models are securely aggregated.
Sixth, models are post-processed to yield human-readable global scores.
We also propose default implementations of the six modules, 
including a novel trust propagation algorithm, 
and adaptations of state-of-the-art scaling and aggregation solutions.
Our pipeline has been successfully deployed on the open-source platform \url{tournesol.app}.
We thereby lay an appealing foundation for the collaborative,
effective, scalable, fair, interpretable and secure scoring of any set of entities.
\end{abstract}
\section{Introduction}
\label{sec:introduction}
In 2021, Twitter initiated a vote-based system,
now known as the \emph{Community Notes},
whereby users can not only propose the addition of a contextual note to the posts that are published on the platform,
but also weigh in on which of other users' contextual notes ought be to shown.
The governance of the \emph{Community Notes} is transparent and fully community-driven,
without any particular right for Twitter to intervene~\cite{communitynotes}.
While this proposal is inspiring, \cite{chuai2023roll} point out 
that the note validation process was too slow to prevent an effective reduction of misinformation spread.
More importantly, 
the \emph{Community Notes} have been argued to be infiltrated by disinformation groups,
whose coordinated attacks are endangering the value of the system~\cite{communitynotes_wired}.
Unfortunately, building secure community-driven systems 
that appropriately moderate and prioritize information 
is arguably under-researched.
As a result, 
today's algorithms are mostly designed, managed and governed 
in a relatively unilateral and opaque manner.
As exposed by the Facebook Files~\cite{facebook_files},
they are benefiting from an alarming lack of accountability.
Our paper presents a contribution to the algorithmic governance toolbox,
and to the understanding of its challenges.
More precisely, we provide an end-to-end modular pipeline, 
which we instantiate with state-of-the-art algorithms,
to allow any community of non-technical users to securely and collaboratively 
score any number of entities.
More specifically, we make the following contributions.
\paragraph{Contributions.}
Our main contribution is to introduce 
a modular end-to-end collaborative scoring pipeline called \solidago{}\footnote{
  \solidago{} stands for ``Solid Algorithmic Governance''.
  It is also the name of a flower, also known as ``goldenrods''.
}.
Its six modules are 
(1) \emph{trust propagation}, 
(2) \emph{voting rights assignment},
(3) \emph{preference learning},
(4) \emph{model scaling},
(5) \emph{model aggregation}
and (6) \emph{post-process}.
\solidago{} also proposes default implementations for each module,
based on the state of the art.
\begin{figure*}[ht]
    \centering
    \includegraphics[width=\textwidth]{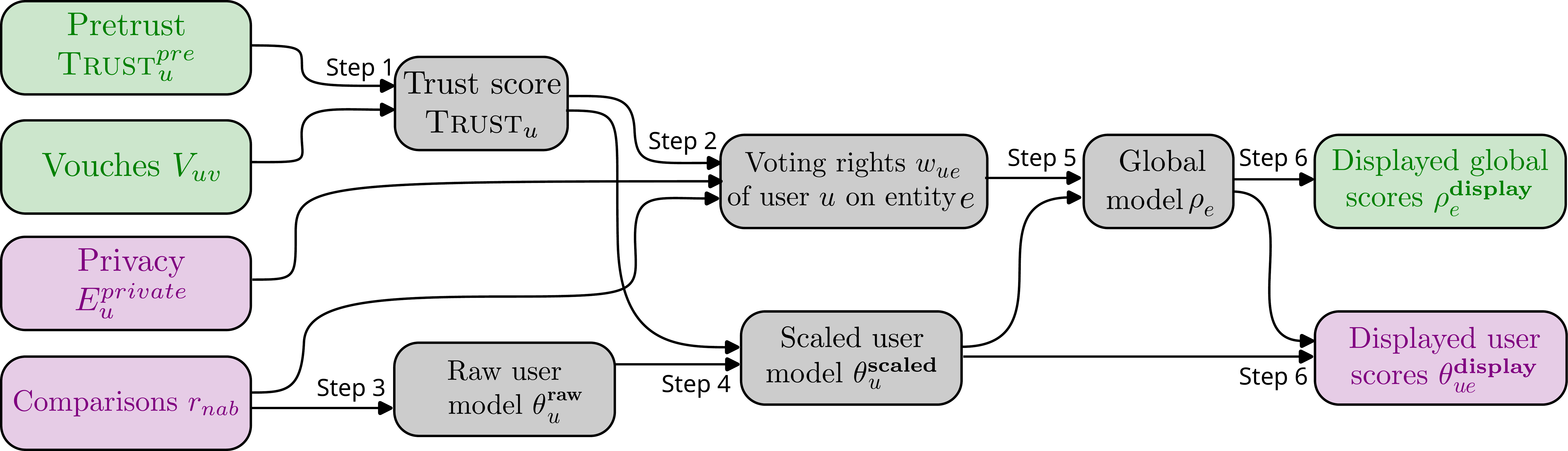}
    \caption{This figure describes the \solidago{} pipeline (slightly simplified). 
    Green boxes correspond to public data, while black boxes are kept hidden.
    The purple boxes contain both public and private data.
    The pipeline is composed of 6 steps, namely 
(1) trust propagation, 
(2) voting rights assignment,
(3) preference learning,
(4) model scaling,
(5) model aggregation
and (6) post-process.
    }
    \label{fig:tournesol_variables}
\end{figure*}
Classical solutions did not always meet the needs of our pipeline.
Our secondary contribution is to fill missing gaps,
by (i) introducing \byztrust{} for secure trust propagation over a directed social network,
(ii) presenting new Lipschitz-resilient primitives~\cite{AllouahGHV22} like \qrqtl{},
(iii) proving their security and consistency guarantees,
(iv) adding a presumption of engagement bias in favor of quality entities,
(v) proposing a mechanism to favor more consensual entities.
Finally, we present an evaluation of the pipeline, both on synthetic and on Tournesol's data.
Better yet, \solidago{} has been fully deployed as the engine of the platform \url{Tournesol.app}~\cite{tournesol} since September 2023,
to collaboratively score the recommendability of YouTube videos.

\paragraph{Related work.}
\label{sec:rel_work}
Social media have raised serious concerns, 
in terms of disinformation~\cite{woolley2020reality,piazza2022fake}, radicalization~\cite{RibeiroO0AM20,bastug2020exploring,facebook_radicalization} and mental health~\cite{gao2020mental,instagram_depression,saha2022social}.
\cite{lin2019existential} even talks of an \emph{existential threat}. 
Conversely, ``good'' information prioritization could advance important causes~\cite{hoang2020science,faucon2021recommendation},
e.g. media literacy, public health and world peace.
But who gets to decide what ``good" is?
WeBuildAI~\cite{LeeKKKYCSNLPP19} laid an inspiring foundation.
This platform essentially allows the stakeholders of a food donation system 
to collaboratively select the recipients of donations.
Similar algorithmic governance systems were proposed by~\cite{NoothigattuGADR18} for autonomous car trolley dilemmas, 
and by~\cite{FreedmanBSDC20} for kidney donation.
Essentially, these systems rely on a collaborative scoring of different options,
which we call \emph{entities}.
The key contribution of our paper is to build a pipeline that generalizes all these systems.
Like \cite{NoothigattuGADR18, LeeKKKYCSNLPP19, FreedmanBSDC20}, 
we consider preference elicitation through entity comparisons~\cite{maystre2018efficient}. 
While comparisons seem more cognitively demanding, 
implying lower user retention,
they may result in more thoughtful judgments,
which have been shown to be less exclusive of outgroup members~\cite{agan2023automating}.
Turning comparisons into scores is then a classical problem~\cite{thurstone1927method}, 
with solutions deployed at scale, e.g. to score chess players~\cite{elo1978rating}. 
\solidago{}'s default implementation
leverages a recent generalization~\cite{GBT2023} of the famous Bradley-Terry model~\cite{BradleyTerry52}.
Contrary to \cite{NoothigattuGADR18, LeeKKKYCSNLPP19, FreedmanBSDC20}, 
\solidago{} does not restrict itself to a predefined list of participants.
This exposes us to fake accounts, also known as \emph{Sybil attacks}~\cite{Douceur02}.
To limit this risk,
we leverage a set of pretrusted users and peer-to-peer vouches.
Trust is then spread through the vouching network~\cite{cheng2005,DanezisM09,TranMLS09,AndersenBCFFKMT08,PoupkoSST21}.
EigenTrust~\cite{KamvarSG03} was proposed to do so, 
and was applied to file sharing~\cite{AbramsMP05,LuWL16}, grid computing~\cite{LiHLW05} and message routing~\cite{SubbarajS14},
and has spurred many variants~\cite{FanLLS12,KalalaFK17,RguibiM19,AfanadorOBA20}.
However, none provides a guarantee on the maximal impact of a single node.
Even in permissioned systems, security issues still arise, 
as any participant may be malicious, corrupted or hacked.
To secure collaborative governance, 
we draw inspiration from the \emph{one person, one unit force} fairness principle~\cite{ElMhamdiFGH21,FarhadkhaniGHV22}.
More specifically, we adapt \textsc{Mehestan},
and provide \emph{Lipchitz-resilience} guarantees~\cite{AllouahGHV22}.
Additionally, our algorithms account for (genuine) users' noisy inputs, 
and output human-friendly scores.
\paragraph{Paper structure.}
In Section~\ref{sec:data}, we present the \solidago{} inputs and problem formulation.
Then, each \solidago{} module is explained in a separate section,
along with our proposed default implementation.
Section~\ref{sec:evaluation} evaluates our pipeline,
and Section~\ref{sec:conclusion} concludes.
Our pipeline is represented in Figure~\ref{fig:tournesol_variables}.
Details are provided in the Appendix.
The code is in the Supplementary Material.
\section{Problem formulation and data}
\label{sec:data}
The problem we consider
consists of defining a modular, well-behaved and secure pipeline 
to transform the inputs from $\Contributor$ different users 
into a human-interpretable score 
$\rho_\alternative^{\bf display} \in \setR$ 
of every entity $\alternative$
in a set $[\Alternative] \triangleq \set{1, \ldots, \Alternative}$ of entities to assess.
Below, we detail \solidago{}'s inputs.
\paragraph{Pretrust.}
To provide Sybil resilience to our system, 
we leverage \emph{pretrust} information about users.
This may be a list of known users.
In the case of Tournesol, pretrust is obtained through the validation of an email,
whose domain is consider Sybil-resilient~\cite{tournesol}.
Typically, \texttt{@who.int} would be considered Sybil-resilient,
but \texttt{@proton.me} would not,
as anyone can easily create multiple such emails.
Formally, we denote $\Contributor^{pre}_\checkmark$ the set of pretrusted users.
\paragraph{Vouch data.}
Since validating pretrusted users only is very exclusive,
\solidago{} also allows users to vouch for one another.
The dataset thereby constructed entails entries of the form $(\contributor, \contributorbis)$, 
which mean that voucher $\contributor$ vouched for vouchee $\contributorbis$.
We denote $\Vouch$ the set of such vouches.
\begin{figure}[h]
    \centering
    \includegraphics[width=\linewidth]{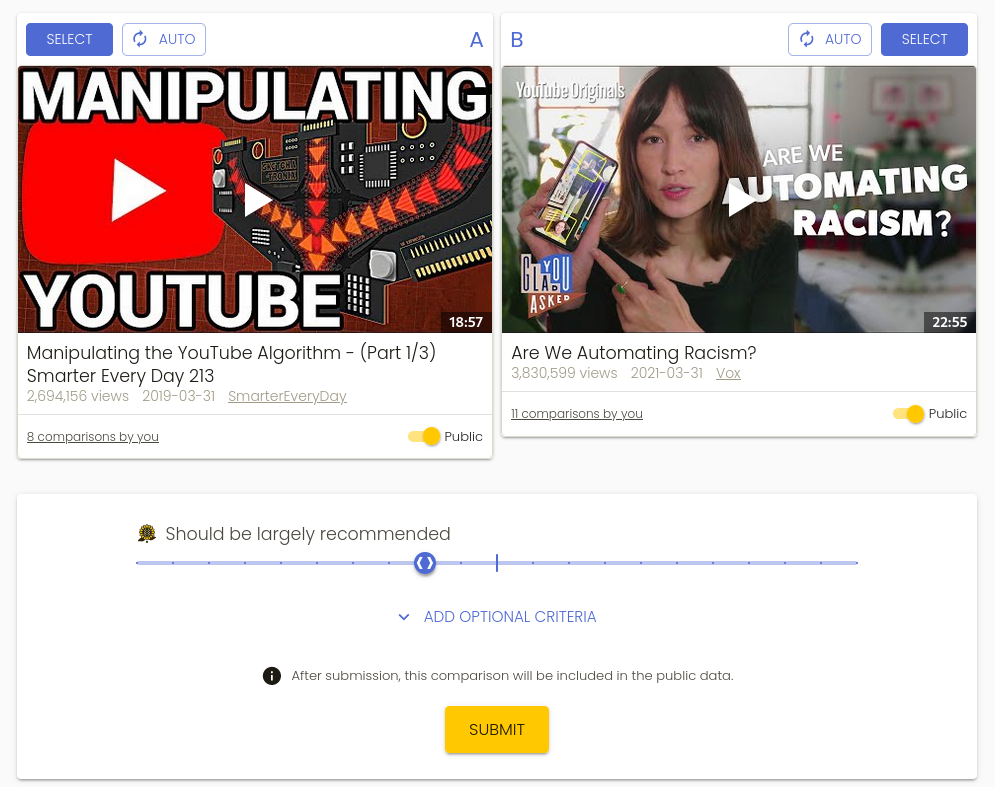}
    \caption{Tournesol's users are asked to select any two videos and to report which they would recommend more widely.}
    \label{fig:comparison_interface}
\end{figure}
\paragraph{Comparisons.}
Each user $u$'s dataset $\mathcal D_\contributor$ is a list of tuples $(\alternative, \alternativebis, \comparison)$,
each saying that contributor $\contributor \in \Contributor$ compared entity $\alternative \in \Alternative$ to entity $\alternativebis \in \Alternative$, 
and gave the judgment $\comparison \in [-\maxComparison, \maxComparison]$.
On Tournesol, the maximal comparison is $\maxComparison \triangleq 10$ (see Figure~\ref{fig:comparison_interface}).
A comparison $\comparison < 0$ means that user $\contributor$ prefers $\alternative$ to $\alternativebis$.
By considering that comparisons are anti-symmetric 
(i.e. reporting $\comparison_{\alternative \alternativebis}$ is equivalent to reporting $- \comparison_{\alternativebis \alternative}$),
we can then represent user $\contributor$'s comparisons
by an anti-symmetric matrix $\comparison_{\alternative \alternativebis} \in \left( [-\maxComparison, \maxComparison] \right) \cup \set{\perp}$.
The special value $\perp$ means that the entry is lacking, and is thus undefined.
\paragraph{Privacy}
Since privacy may be sometimes desirable, 
especially when retaliation risks are significant, 
e.g. when upvoting a content that criticizes a government, an employer or a colleague,
\solidago{} allows each user to assess an entity $\alternative$ either publicly or privately.
However, private comparisons may also be regarded as an attack vector, as such data cannot be audited by third parties.
In the sequel, $\Alternative_\contributor^{private}$ denotes the sets of entities that user $\contributor$  chose to assess privately.
We stress however that \solidago{} does \emph{not} currently provide differential privacy for displayed scores.
\paragraph{Problem formulation.}
Our pipeline constructs a map
$\solidago{} : (\Contributor_\checkmark^{pre}, \Vouch, \mathcal D_{1:\Contributor}, \Alternative_{1:\Contributor}^{private}) \mapsto \rho^{\bf display}$.
We do so by combining six modular steps,
each of which is detailed in the sequel.

\section{Trust propagation}
Step 1 of the \solidago{} pipeline combines pretrusts and vouches to assign trust scores to all users.
Our default implementation combines the introduction of a sink vouch 
and of a novel trust propagation algorithm called \byztrust{}.
\subsection{Sink vouch}
\label{sec:vouch_network}
If vouchers' vouches were unconstrained,
then a malicious user could gain an unbounded influence 
by vouching for arbitrarily many fake accounts.
To secure vouching,
it might be tempting to force vouchers' vouches to be split among their vouchees.
However, this disincentivizes vouching, 
as each new vouch reduces the value of their prior vouches.
To incentivize vouching without sacrificing security, we introduce a \emph{sink vouch} for all users.
Essentially, each user is considered to implicitly vouch for a sink (a non-user),
with an intensity equal to $\sinkvouch$ regular vouchees.
Tournesol set $\sinkvouch \triangleq 5$.
As a result,
when a voucher with much fewer than $\sinkvouch$ vouchees vouches for more vouchees, 
the total vouches given by the voucher grows almost linearly,
thereby not penalizing previous vouchees.
Vouching is thereby not (too) disincentivized.
To integrate sink vouches formally, 
we simply split voucher $\contributor$'s vouchees.
For each $(\contributor, \contributorbis) \in \Vouch$, this amounts to defining
  $\vouch_{\contributor \contributorbis} \triangleq
  1 / \left(
    \sinkvouch + 
    \card{\set{ k \st (n, k) \in \Vouch }}
  \right)$.
Clearly, the sum of weighted vouches given by any voucher to actual users is then at most one.
Thus $\vouch$ is row-substochastic.
\subsection{\byztrust{}}
\label{sec:eigentrust}
Trust propagation through the weighted vouch network, with weights $\vouch$,
is then performed by \byztrust{}.
This novel algorithm is inspired from \eigentrust{}~\cite{KamvarSG03},
whose principle is akin to the celebrated \pagerank{}~\cite{page1999}.
These algorithms have an appealing probabilistic intuition, 
which we present before providing the formal definition.
\paragraph{Probabilistic intuition.}
Consider a random walker on the vouch network, 
initially dropped at a uniformly randomly chosen pretrusted user.
Then at each iteration, the random walker first tosses a biased coin.
With probability $1-\vouchdecay$ (which will be useful for Sybil resilience), 
the random walker resets its walk.
Otherwise, the random walker randomly selects an outgoing vouch, 
with a probability equal to the vouch's weight.
If the arc goes to the sink, then the random walk is also reset.
The frequency of the random walker visits to any given node eventually converges to an ergodic limit,
which defines its trust score.
\byztrust{} essentially robustifies the random walk, 
by frequently preventing the walker from visiting too frequently visited users,
thereby bounding their maximal influence.
\paragraph{Defining \byztrust{}.}
\label{sec:eigentrust_definition}
Recall that we are given a set $\Contributor_\checkmark^{pre}$ of pretrusted users.
We define the pretrust vector by $\pretrust{\contributor} \triangleq 1$ if $\contributor \in \Contributor_\checkmark^{pre}$, and $0$ otherwise.
We then define $\interimtrust{0}{} \triangleq \pretrust{} \in \setR^\Contributor$.
For all iterations $\iteration$, using vector notations, we consider
  $\interimtrust{\iteration + 1/2}{} \triangleq
  \pretrust{}
  + \vouchdecay \vouch^T \interimtrust{\iteration}{}$,
where $\vouchdecay \in (0,1)$ is a vouch decay hyperparameter.
Tournesol set $\vouchdecay \triangleq 0.8$.
Trusts are then clipped to avoid power concentration.
  $\interimtrust{\iteration + 1}{\contributor} \triangleq
  \min \set{ 
  \interimtrust{\iteration + 1/2}{\contributor},
  1 }$.
\begin{proposition}
\label{prop:trust}
    There is a unique vector $\trust{}$ such that
    $\norm{\interimtrust{\iteration}{} - \trust{}}{1} \leq \Contributor \vouchdecay^\iteration$ for all $\iteration$.
    Moreover, this vector verifies
    $\trust{} = \min \set{
    \pretrust{} +
    \vouchdecay \vouch^T \trust{},
    1
    }$.
\end{proposition}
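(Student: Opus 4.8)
The plan is to recognize the iteration as repeated application of a single map and to show that this map is a contraction for the $\ell_1$ norm, so that the Banach fixed-point theorem delivers existence and uniqueness of $\trust{}$ together with the geometric convergence rate. Concretely, define $F : \setR^\Contributor \to \setR^\Contributor$ by $F(x) \triangleq \min\set{\pretrust{} + \vouchdecay \vouch^T x, 1}$, where the minimum with $1$ is taken componentwise. By construction $\interimtrust{\iteration + 1}{} = F(\interimtrust{\iteration}{})$, hence $\interimtrust{\iteration}{} = F^\iteration(\pretrust{})$, and the claimed fixed-point identity is exactly $\trust{} = F(\trust{})$.

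First I would prove that $F$ is $\vouchdecay$-Lipschitz in $\ell_1$. This rests on two elementary facts. The scalar map $a \mapsto \min\set{a,1}$ is non-expansive, so componentwise $\card{F(x)_\contributor - F(y)_\contributor} \leq \vouchdecay \card{(\vouch^T(x-y))_\contributor}$; summing over $\contributor$ gives $\norm{F(x) - F(y)}{1} \leq \vouchdecay \norm{\vouch^T(x-y)}{1}$. The induced $\ell_1$ operator norm of a matrix equals its largest absolute column sum, so $\norm{\vouch^T}{1}$ is the largest row sum of $\vouch$, which is at most $1$ since $\vouch$ is row-substochastic (established in Section~\ref{sec:vouch_network}). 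Therefore $\norm{F(x) - F(y)}{1} \leq \vouchdecay \norm{x - y}{1}$ with $\vouchdecay \in (0,1)$.

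Given the contraction, I would invoke the Banach fixed-point theorem on the complete space $(\setR^\Contributor, \norm{\cdot}{1})$: there is a unique $\trust{}$ with $F(\trust{}) = \trust{}$, which is the asserted equation, and $\norm{\interimtrust{\iteration}{} - \trust{}}{1} = \norm{F^\iteration(\pretrust{}) - F^\iteration(\trust{})}{1} \leq \vouchdecay^\iteration \norm{\pretrust{} - \trust{}}{1}$. To turn the last factor into $\Contributor$, I would verify by induction that every iterate lies in $[0,1]^\Contributor$: indeed $\pretrust{} \in \set{0,1}^\Contributor \subseteq [0,1]^\Contributor$, and if $x \in [0,1]^\Contributor$ then $\pretrust{} + \vouchdecay \vouch^T x \geq 0$ while the componentwise $\min$ caps it at $1$, so $F(x) \in [0,1]^\Contributor$. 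Passing to the limit, $\trust{} \in [0,1]^\Contributor$ as well, so each entry of $\pretrust{} - \trust{}$ lies in $[-1,1]$ and $\norm{\pretrust{} - \trust{}}{1} \leq \Contributor$, yielding the stated bound $\norm{\interimtrust{\iteration}{} - \trust{}}{1} \leq \Contributor \vouchdecay^\iteration$.

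Finally, for the uniqueness phrased in the statement (a unique vector satisfying the bound for all $\iteration$), any such vector must be the limit of the sequence $\interimtrust{\iteration}{}$, which is unique, so it coincides with the Banach fixed point. I expect the only genuinely delicate point to be the operator-norm step: one must remember to transpose, so that the relevant quantity is the largest \emph{row} sum of $\vouch$ rather than its column sums, and to combine this cleanly with the non-expansiveness of the clipping $a \mapsto \min\set{a,1}$. Everything else is a routine application of the contraction mapping principle.
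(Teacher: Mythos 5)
Your proposal is correct and follows essentially the same route as the paper: both rest on the $1$-Lipschitzness of $x \mapsto \min\set{x,1}$ and the fact that $\vouch^T$ is $\ell_1$-non-expansive (column-substochasticity of $\vouch^T$, i.e.\ row-substochasticity of $\vouch$) to get $\vouchdecay$-contraction, followed by the observation that all coordinates of $\pretrust{} - \trust{}$ lie in $[-1,1]$ to obtain the factor $\Contributor$. The only difference is packaging: you invoke the Banach fixed-point theorem explicitly, while the paper hand-rolls the same contraction argument on the iterates and passes the recursion to the limit.
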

\begin{proof}[Proof sketch]
    Since $\vouch$ is row-substochastic and $x \mapsto \min \set{x, 1}$ is 1-Lipschitz,
    $\interimtrust{\iteration}{}$ is $\vouchdecay$-contractive.
    Hence the convergence.
    Taking the recursion equation of $\interimtrust{\iteration}{}$ to the limit concludes the proof.
\end{proof}
As a corollary, an error of at most $\byztrusterror > 0$ can be guaranteed
by merely $\nbIterations(\varepsilon) \triangleq \left\lceil \frac{\ln(\Contributor / \byztrusterror)}{\ln(1/\vouchdecay)} \right\rceil$ iterations.
Tournesol uses $\byztrusterror \triangleq \eigentrusterrorvalue$.
\paragraph{The security guarantee of \byztrust{}.}
\label{sec:eigentrust_computation}
Our novel algorithm \byztrust{} has the following Sybil resilience guarantee
that bounds the maximal influence of a user by its received trust.
\begin{theorem}
\label{th:trust_security}
Denoting $\trust{}^{-\contributor}$ the trusts computed after removing user $\contributor$'s vouches 
\begin{equation}
    \norm{\trust{} - \trust{}^{-\contributor}}{1}\leq \frac{\vouchdecay}{1-\vouchdecay} \trust{\contributor}^{-\contributor}.
\end{equation}
\end{theorem}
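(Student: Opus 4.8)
The plan is to realize both trust vectors as fixed points of $\vouchdecay$-contractive maps and then apply a standard fixed-point perturbation argument. Define the update map $F(x) \triangleq \min\set{\pretrust{} + \vouchdecay \vouch^T x, 1}$ (coordinatewise min), whose unique fixed point is $\trust{}$ by Proposition~\ref{prop:trust}. Let $\vouch^{-\contributor}$ denote the vouch matrix with user $\contributor$'s outgoing vouches removed, i.e. with row $\contributor$ zeroed, and let $F^{-\contributor}(x) \triangleq \min\set{\pretrust{} + \vouchdecay (\vouch^{-\contributor})^T x, 1}$, whose fixed point is $\trust{}^{-\contributor}$. Since zeroing a row preserves row-substochasticity, the very same argument as in Proposition~\ref{prop:trust} (substochasticity of the matrix together with $1$-Lipschitzness of $x \mapsto \min\set{x,1}$) shows that both $F$ and $F^{-\contributor}$ are $\vouchdecay$-contractive in the $\ell_1$ norm.

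First I would telescope through the common map. Substituting the fixed-point identities $\trust{} = F(\trust{})$ and $\trust{}^{-\contributor} = F^{-\contributor}(\trust{}^{-\contributor})$ and inserting $F(\trust{}^{-\contributor})$, the triangle inequality and the $\vouchdecay$-contractivity of $F$ give
\begin{equation*}
\norm{\trust{} - \trust{}^{-\contributor}}{1}
\leq \norm{F(\trust{}) - F(\trust{}^{-\contributor})}{1}
+ \norm{F(\trust{}^{-\contributor}) - F^{-\contributor}(\trust{}^{-\contributor})}{1}
\leq \vouchdecay \norm{\trust{} - \trust{}^{-\contributor}}{1}
+ \norm{F(\trust{}^{-\contributor}) - F^{-\contributor}(\trust{}^{-\contributor})}{1}.
\end{equation*}
Rearranging yields $\norm{\trust{} - \trust{}^{-\contributor}}{1} \leq \frac{1}{1-\vouchdecay}\norm{F(\trust{}^{-\contributor}) - F^{-\contributor}(\trust{}^{-\contributor})}{1}$. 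The one genuine choice here is to measure the map discrepancy at $\trust{}^{-\contributor}$ rather than at $\trust{}$: this is exactly what makes $\trust{\contributor}^{-\contributor}$, rather than $\trust{\contributor}$, appear in the final bound.

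It then remains to bound that discrepancy. The maps $F$ and $F^{-\contributor}$ differ only through row $\contributor$ of the vouch matrix, so before clipping their arguments differ, in coordinate $\contributorbis$, by exactly $\vouchdecay \sum_\contributorter (\vouch_{\contributorter\contributorbis} - \vouch^{-\contributor}_{\contributorter\contributorbis}) \trust{\contributorter}^{-\contributor} = \vouchdecay \vouch_{\contributor\contributorbis} \trust{\contributor}^{-\contributor}$. Using that $x \mapsto \min\set{x,1}$ is $1$-Lipschitz coordinatewise, summing over $\contributorbis$, and invoking the row-substochasticity $\sum_\contributorbis \vouch_{\contributor\contributorbis} \leq 1$, I obtain
\begin{equation*}
\norm{F(\trust{}^{-\contributor}) - F^{-\contributor}(\trust{}^{-\contributor})}{1}
\leq \vouchdecay \trust{\contributor}^{-\contributor} \sum_\contributorbis \vouch_{\contributor\contributorbis}
\leq \vouchdecay \trust{\contributor}^{-\contributor}.
\end{equation*}
Combining with the previous display gives $\norm{\trust{} - \trust{}^{-\contributor}}{1} \leq \frac{\vouchdecay}{1-\vouchdecay}\trust{\contributor}^{-\contributor}$, as claimed.

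The main obstacle is conceptual rather than computational: one must handle the clipping $\min$ correctly (its $1$-Lipschitzness is what lets a bound on the pre-clip arguments pass to the clipped values) and, crucially, pick the right base point for the perturbation estimate so that the received-trust term on the right-hand side is $\trust{\contributor}^{-\contributor}$. Everything else, namely the contraction estimate and the row-sum bound, is routine.
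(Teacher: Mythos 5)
Your proof is correct, and it takes a recognizably different route from the paper's, though the two share their key computation. The paper works dynamically: it introduces the auxiliary sequence $x^0 \triangleq \trust{}^{-\contributor}$, $x^{\iteration+1} \triangleq \min \set{\pretrust{} + \vouchdecay \vouch^T x^\iteration, 1}$, proves $x^\iteration \to \trust{}$ (Lemma~\ref{lemma:x_convergence}, which requires knowing the fixed point is unique), bounds $\norm{x^1 - x^0}{1} \leq \vouchdecay \trust{\contributor}^{-\contributor}$ (Lemma~\ref{lemma:bound_first_difference}), and then telescopes the geometric series $\sum_s \vouchdecay^{s+1} \trust{\contributor}^{-\contributor}$ before passing to the limit. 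You work statically: both trust vectors are fixed points of their respective maps, a single insertion of $F(\trust{}^{-\contributor})$ plus the triangle inequality gives $\norm{\trust{} - \trust{}^{-\contributor}}{1} \leq \vouchdecay \norm{\trust{} - \trust{}^{-\contributor}}{1} + \norm{F(\trust{}^{-\contributor}) - F^{-\contributor}(\trust{}^{-\contributor})}{1}$, and rearranging (legitimate, since the norm is finite in $\setR^\Contributor$) replaces the entire series-plus-limit argument; the factor $\tfrac{1}{1-\vouchdecay}$ appears by division rather than by summation. Your discrepancy bound $\norm{F(\trust{}^{-\contributor}) - F^{-\contributor}(\trust{}^{-\contributor})}{1} \leq \vouchdecay \trust{\contributor}^{-\contributor}$ is exactly the paper's Lemma~\ref{lemma:bound_first_difference} in disguise (there $x^1 = F(\trust{}^{-\contributor})$ and $x^0 = F^{-\contributor}(\trust{}^{-\contributor})$), and you correctly identify the one non-mechanical choice both proofs make: evaluating the discrepancy at the base point $\trust{}^{-\contributor}$, so that the removed user's \emph{residual} trust, not their full trust, appears on the right-hand side. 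What each approach buys: yours is shorter and dispenses with the auxiliary sequence, the induction, and the convergence lemma, needing only contractivity of $F$ and the two fixed-point identities; the paper's iterative formulation stays closer to the algorithm actually run (the iterates $x^\iteration$ are what an implementation restarted from $\trust{}^{-\contributor}$ would compute) and reuses wholesale the machinery already set up for Proposition~\ref{prop:trust}.
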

\begin{proof}[Proof sketch]
    The proof considers a similar sequence as $\interimtrust{\iteration}{}$,
    but whose initial point is $\trust{}^{-\contributor}$.
    We conclude by bounding the difference between the first iteration and the initial point,
    and by $\vouchdecay$-contractiveness.
    The full proof is given in Appendix~\ref{app:byztrust}.
\end{proof}
\paragraph{Other properties of \byztrust{}.}
We conclude this section with two additional desirable guarantees,
whose proofs are provided in Appendix~\ref{app:lipschitrust_propositions}.
\begin{proposition}
\label{prop:byztrust_monotone}
  Vouching cannot decrease trust, i.e., for all users $\contributor$, $\trust{} \geq \trust{}^{-\contributor}$.
\end{proposition}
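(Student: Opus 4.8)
The plan is to exhibit $\trust{}$ and $\trust{}^{-\contributor}$ as limits of the \emph{same} clipped iteration run on two comparable vouch matrices, and then to propagate a coordinatewise inequality through that iteration. Removing user $\contributor$'s vouches amounts to zeroing out the row of $\vouch$ indexed by $\contributor$; writing $\vouch^{-\contributor}$ for the resulting matrix, all entries of $\vouch$ are nonnegative, so $0 \le \vouch^{-\contributor} \le \vouch$ entrywise, and $\vouch^{-\contributor}$ is again row-substochastic. By Proposition~\ref{prop:trust} applied to each matrix, the iterates started at $\interimtrust{0}{} = \pretrust{}$ converge to $\trust{}$ and to $\trust{}^{-\contributor}$ respectively, so it suffices to compare the two sequences coordinatewise and pass to the limit.

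First I would record two elementary monotonicity facts about the update operator $F(x) \triangleq \min\set{\pretrust{} + \vouchdecay \vouch^T x, 1}$, the minimum being taken coordinatewise. (i) $F$ is nondecreasing: if $x \le x'$ then, since $\vouch^T$ has nonnegative entries, $\vouch^T x \le \vouch^T x'$, and $a \mapsto \min\set{a,1}$ is nondecreasing, so $F(x) \le F(x')$. (ii) Writing $F^{-\contributor}$ for the analogous operator built from $\vouch^{-\contributor}$, we have $F^{-\contributor}(x) \le F(x)$ for every $x \ge 0$: indeed $(\vouch - \vouch^{-\contributor})^T x \ge 0$ because $\vouch - \vouch^{-\contributor} \ge 0$ and $x \ge 0$, whence $(\vouch^{-\contributor})^T x \le \vouch^T x$, and applying $\pretrust{} + \vouchdecay(\cdot)$ followed by the coordinatewise $\min\set{\cdot,1}$ preserves the inequality.

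Next I would run the induction. Denote by $\interimtrust{\iteration}{}$ the iterates for the full network and by $y^{(\iteration)}$ those for the reduced network, both initialized at $\pretrust{} \ge 0$. I claim $0 \le y^{(\iteration)} \le \interimtrust{\iteration}{}$ for all $\iteration$. The base case is immediate since $y^{(0)} = \interimtrust{0}{} = \pretrust{} \ge 0$. For the step, nonnegativity is preserved because $\pretrust{} \ge 0$, $\vouchdecay > 0$, $\vouch^{-\contributor} \ge 0$ and $y^{(\iteration)} \ge 0$ force $F^{-\contributor}(y^{(\iteration)}) \ge 0$; and chaining the two facts gives $y^{(\iteration+1)} = F^{-\contributor}(y^{(\iteration)}) \le F(y^{(\iteration)}) \le F(\interimtrust{\iteration}{}) = \interimtrust{\iteration+1}{}$, where the first inequality uses (ii) at the nonnegative point $y^{(\iteration)}$ and the second uses (i) together with the induction hypothesis. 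Letting $\iteration \to \infty$ and invoking Proposition~\ref{prop:trust} yields $\trust{}^{-\contributor} \le \trust{}$ coordinatewise, i.e.\ $\trust{} \ge \trust{}^{-\contributor}$ for every user.

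The argument is essentially routine; the only point that needs care is that fact (ii) holds solely on the nonnegative orthant, so the induction must carry nonnegativity of the reduced iterates alongside the comparison in order to chain (ii) and (i) legitimately at each step. I expect this bookkeeping to be the sole obstacle.
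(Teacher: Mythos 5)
Your proof is correct, and it reaches the conclusion by a genuinely different route than the paper. The paper does not compare two parallel iterations: it reuses the sequence $x^\iteration$ already introduced for Theorem~\ref{th:trust_security}, namely the iteration with the \emph{full} matrix $\vouch$ initialized at the reduced fixed point $x^0 \triangleq \trust{}^{-\contributor}$, and shows by induction that $x^\iteration \geq x^0$ for all $\iteration$; the inductive step combines your monotonicity fact (i) with the characteristic equation $\trust{}^{-\contributor} = \min \set{ \pretrust{} + \vouchdecay (\vouch^{-\contributor})^T \trust{}^{-\contributor}, 1 }$, which is exactly your fact (ii) evaluated at that fixed point, and the conclusion follows because $x^\iteration \rightarrow \trust{}$ (Lemma~\ref{lemma:x_convergence}). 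You instead couple the two standard iterations launched from the common initialization $\pretrust{}$, one per matrix, and propagate the coordinatewise domination $y^{(\iteration)} \leq \interimtrust{\iteration}{}$ through the update operator. Both arguments rest on the same two ingredients (monotonicity of the clipped affine update, and entrywise domination $\vouch^{-\contributor} \leq \vouch$ acting on nonnegative vectors), and both need the nonnegativity bookkeeping you emphasize; the paper uses it silently when asserting $\vouch^T x^t \geq (\vouch^{-\contributor})^T x^t$. What the paper's route buys is economy: $x^\iteration$ and its convergence to $\trust{}$ were already established for the security theorem, so the monotonicity proof takes three lines. What your route buys is self-containedness: you invoke only Proposition~\ref{prop:trust} (applied to both row-substochastic matrices, a mild extension you correctly justify) and never need the stronger fact that the full-matrix iteration converges to $\trust{}$ from an \emph{arbitrary} initial point, which is the content of Lemma~\ref{lemma:x_convergence} and rests on the contraction estimate.
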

\begin{proposition}
\label{prop:positive_trust}
    A user's trust score is positive, if and only, if there is a vouch path from a pretrusted user to them.
\end{proposition}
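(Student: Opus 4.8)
The plan is to prove the two implications separately, relying on the fixed-point characterisation $\trust{} = \min\set{\pretrust{} + \vouchdecay \vouch^T \trust{}, 1}$ from Proposition~\ref{prop:trust}, together with two elementary preliminary observations about the iterates $\interimtrust{\iteration}{}$. First I would record that every iterate is nonnegative: $\interimtrust{0}{} = \pretrust{} \geq 0$, and since $\vouch \geq 0$ entrywise and $x \mapsto \min\set{x,1}$ preserves nonnegativity, induction gives $\interimtrust{\iteration}{} \geq 0$ for all $\iteration$, hence $\trust{} \geq 0$. Second, I would note that a pretrusted user $\contributor$ has $\trust{\contributor} = 1$: indeed $\pretrust{\contributor} = 1$ and $\vouchdecay (\vouch^T \trust{})_\contributor \geq 0$, so the argument of the clip is at least $1$.

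For the ``if'' direction, suppose there is a vouch path $\contributor_0 \to \contributor_1 \to \cdots \to \contributor_k = \contributorbis$ with $\contributor_0 \in \Contributor_\checkmark^{pre}$, where each arc satisfies $(\contributor_j, \contributor_{j+1}) \in \Vouch$ and therefore $\vouch_{\contributor_j \contributor_{j+1}} > 0$. I would propagate positivity of trust along the path by induction on $j$, working directly on the fixed-point equation. The base case $\trust{\contributor_0} = 1 > 0$ is the observation above. For the inductive step, the fixed-point equation gives $\trust{\contributor_{j+1}} = \min\set{\pretrust{\contributor_{j+1}} + \vouchdecay \sum_\contributor \vouch_{\contributor \contributor_{j+1}} \trust{\contributor},\, 1}$; dropping all of the (nonnegative) terms except the one indexed by $\contributor_j$, the argument of the clip is at least $\vouchdecay \vouch_{\contributor_j \contributor_{j+1}} \trust{\contributor_j} > 0$, so $\trust{\contributor_{j+1}} > 0$. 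Taking $j = k-1$ yields $\trust{\contributorbis} > 0$.

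For the ``only if'' direction I would argue by contraposition, via the iterates rather than the fixed point. Let $S$ be the set of users reachable from a pretrusted user by a vouch path (with length-$0$ paths allowed, so $\Contributor_\checkmark^{pre} \subseteq S$), and suppose $\contributorbis \notin S$. The key structural fact is that $\bar S \triangleq \Contributor \setminus S$ is closed under taking vouchers: if $\contributorbis \in \bar S$ and $(\contributor, \contributorbis) \in \Vouch$, then $\contributor \in \bar S$, since otherwise a path reaching $\contributor$ could be extended by this arc to reach $\contributorbis$. Moreover no member of $\bar S$ is pretrusted, so $\pretrust{} = 0$ on $\bar S$. I would then show by induction on $\iteration$ that $\interimtrust{\iteration}{\contributor} = 0$ for every $\contributor \in \bar S$: the base case is $\interimtrust{0}{} = \pretrust{} = 0$ on $\bar S$, and in the inductive step the only nonzero weights in $\interimtrust{\iteration+1/2}{\contributorbis} = \pretrust{\contributorbis} + \vouchdecay \sum_\contributor \vouch_{\contributor \contributorbis} \interimtrust{\iteration}{\contributor}$ come from vouchers $\contributor \in \bar S$, whose iterate vanishes by hypothesis, so $\interimtrust{\iteration+1/2}{\contributorbis} = 0$ and then $\interimtrust{\iteration+1}{\contributorbis} = 0$. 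Passing to the limit gives $\trust{\contributorbis} = 0$.

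The main obstacle is precisely explaining why the contrapositive must be handled through the finite iterates and the closure property of $\bar S$, rather than by a one-line maximum argument on the fixed-point equation. Such an argument would want to bound $\trust{\contributorbis}$ by $\vouchdecay \sum_\contributor \vouch_{\contributor \contributorbis} \max_{\contributorter \in \bar S} \trust{\contributorter}$ and then conclude by substochasticity, but $\vouch$ is only \emph{row}-substochastic: the column sum $\sum_\contributor \vouch_{\contributor \contributorbis}$, built from the vouchers' own normalisations, can exceed $1$ for a popular vouchee, so the desired contraction fails. The iterative argument avoids this entirely, since at each finite step it only needs that $\bar S$ receives neither pretrust nor any trust from outside $\bar S$.
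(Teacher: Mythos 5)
Your proof is correct and follows essentially the same route as the paper's: positivity is propagated along the vouch path by induction for one direction, and the converse is handled by showing that the set of users unreachable from pretrusted users is closed under taking vouchers and carries zero trust at every iteration, then passing to the limit. The only cosmetic difference is that your forward induction works directly on the fixed-point equation $\trust{} = \min \set{\pretrust{} + \vouchdecay \vouch^T \trust{}, 1}$ and yields bare positivity, whereas the paper inducts on the iterates $\interimtrust{\iteration}{}$ and thereby obtains the quantitative lower bound $\trust{\contributor_s} \geq \vouchdecay^s \prod_{r=0}^{s-1} \vouch_{\contributor_r \contributor_{r+1}}$.
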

\section{Voting rights}
\label{sec:voting-rights}
Step 2 of \solidago{} leverages trust scores and user activity to determine voting rights.
As demanded by Tournesol users,
our solution values untrusted users,
without sacrificing security.
\paragraph{Overview.}
The computation of users' voting rights on an entity $\alternative$ proceeds in four steps:
\begin{itemize}
  \item[1.] Compute the cumulative trusted scores of the users that compared $\alternative$, weighed by $\privacypenalty$.
  \item[2.] Derive the maximal amount of tolerated \emph{overtrust} $\maxovertrust_{\alternative}$.
  \item[3.] Determine a minimal voting right $\minvotingright{\alternative}$ that yields the maximal tolerated overtrust.
  \item[4.] Set voting rights as the maximum of the trust score and $\minvotingright{\alternative}$, weighed by $\privacypenalty$.
\end{itemize}
On Tournesol, the \emph{privacy penalty} is set as $\privacypenalty \triangleq \privacypenaltyvalue$.
\paragraph{Cumulative trusted score}
Recall that we are given the sets $\Alternative_\contributor^{private}$ of entities
that user $\contributor$ rates privately.
From the datasets $\mathcal D_n$, we also know whether the user has rated an entity.
Define $\votingright^{penalty}_{\contributor \alternative} \triangleq 1$ if user $\contributor$ rated entity $\alternative$ publicly,
$\votingright^{penalty}_{\contributor \alternative} \triangleq \privacypenalty$ if user $\contributor$ rated entity $\alternative$ privately,
and $\votingright^{penalty}_{\contributor \alternative} \triangleq 0$ if they did not rate $\alternative$.
The cumulative trusted score is given by
\begin{equation}
  \trust{\alternative} \triangleq
  \sum_{\contributor \in [\Contributor]} \votingright^{penalty}_{\contributor \alternative} \trust{\contributor}
\end{equation}
\paragraph{Maximal tolerated overtrust}
The maximal tolerated overtrust must be asymptotically only a fraction $\FractionOfNonVerified$ of $\trust{\alternative}$,
and will be at least $\defaultnonverifiedvotingright$.
This principle is formalized by defining
\begin{equation}
  \maxovertrust_{\alternative} \triangleq
  \defaultnonverifiedvotingright
  + \FractionOfNonVerified \cdot \trust{\alternative}.
\end{equation}
Tournesol set $\defaultnonverifiedvotingright \triangleq \defaultnonverifiedvotingrightvalue$ and 
$\FractionOfNonVerified \triangleq \FractionOfNonVerifiedValue$.
\paragraph{Minimal voting right}
We now guarantee that the minimal voting right given to insufficiently trusted users yields at most an overtrust of $\maxovertrust_{\alternative}$.
To do so, define the \emph{overtrust} when setting a minimal public voting right $\minvotingright{\alternative}$ by
\begin{align}
    \overtrust_{\alternative} (\votingright_{min}) 
    \triangleq 
    \sum_{\contributor \in [\Contributor]}
    \votingright^{penalty}_{\contributor \alternative} \max \set{\votingright_{min} - \trust{\contributor}, 0} 
\end{align}
Observe that we have $\overtrust_{\alternative} (0) = 0$ 
and that  $\overtrust_{\alternative}$ is a strictly increasing function 
of $\minvotingright{\alternative}$ as soon as $\overtrust_{\alternative} (\minvotingright{\alternative}) > 0$.
Thus, given that the maximal overtrust is positive, 
either $\overtrust_{\alternative} (1) < \maxovertrust_{\alternative}$ (in which case we set $\minvotingright{\alternative} \triangleq 1$),
or there is a unique $\minvotingright{\alternative} \in (0,1]$ for which we have $\overtrust_{\alternative} (\minvotingright{\alternative}) = \maxovertrust_{\alternative}$.
We propose to compute it with dichotomic search.
\paragraph{Voting rights}
A user $\contributor$'s voting right on $\alternative$ is then defined
as the maximum between their trust and the public minimal voting right for this entity,
potentially multiplied by the privacy penalty, i.e.
\begin{equation}
    \votingright_{\contributor \alternative} \triangleq 
    \votingright_{\contributor \alternative}^{penalty} \max \left\lbrace 
    \trust{\contributor}, \minvotingright{\alternative}
    \right\rbrace.
\end{equation}

\section{User model inference}
\label{sec:raw_scores}
Step 3 of \solidago{}, which is independent from the two first steps,
consists of learning a preference model for each user $\contributor \in [\Contributor]$,
based on their entity-evaluation dataset $\mathcal D_\contributor$.
Our default implementation transforms quantitative comparisons into scores for compared entities.
\paragraph{Scoring through Generalized Bradley-Terry.}
To turn a user's comparisons into scores,
we use the recent generalization~\cite{GBT2023} of~\cite{BradleyTerry52},
which is instantiated with a uniform root law.
This yields a negative log-likelihood equal to
\begin{equation}
\label{eq:contributorloss}
    \mathcal L \left( \contributorscore{}_\contributor | \mathcal D_\contributor \right)
    \triangleq 
    \sum_{\alternative, \alternativebis, \comparison \in \mathcal D_\contributor} \ln \frac{\sinh{} (\contributorscore_{\contributor \alternative \alternativebis})}{\contributorscore_{\contributor \alternative \alternativebis}} + \frac{\comparison \contributorscore_{\contributor \alternative \alternativebis}}{\maxComparison}.
\end{equation}
Including a Gaussian prior yields the following loss, which we minimize:
\begin{equation}
    \optcontributorscore_\contributor(\comparison)
    \triangleq \argmin{\contributorscore_\contributor \in \setR^{\Alternative_\contributor}}{
        \frac{\contributorpriorweight}{2} \norm{\contributorscore_\contributor}{2}^2
        + \mathcal L \left( \contributorscore{}_\contributor | \mathcal D_\contributor \right)
    }.
\end{equation}
Tournesol uses $\contributorpriorweight \triangleq 0.02$.
Since \cite{GBT2023} guarantees that the loss is strongly convex, 
the minimum $\optcontributorscore(\comparison)$ exists, is unique and can be efficiently computed,
which we call \emph{raw scores}.
This model also has the following desirable guarantee.
\begin{proposition}[Theorem 2 in \cite{GBT2023}]
    \label{prop:monotonicity}
If $\comparison_{\alternative\alternativebis} \leq \comparison_{\alternative\alternativebis}' $ 
for all $\alternativebis$ (i.e., the comparisons are more favorable to $\alternative$ in $\comparison'$ than in $\comparison$)
and $\normalizedcomparison_{\alternativeter\alternativequater} = \normalizedcomparison_{\alternativeter\alternativequater}' $ for all $\alternativeter, \alternativequater \neq \alternative$, 
then $\optcontributorscore_\alternative(\comparison) \leq \optcontributorscore_\alternative(\comparison')$.
\end{proposition}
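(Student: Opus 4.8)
The plan is to treat $\optcontributorscore_\contributor(\comparison)$ as a smooth function of the comparison data and to prove monotonicity by a comparative-statics argument on the first-order optimality conditions. Write $F_\comparison(\contributorscore) \triangleq \frac{\contributorpriorweight}{2}\norm{\contributorscore}{2}^2 + \mathcal L(\contributorscore \mid \comparison)$ for the regularized objective. Since \cite{GBT2023} guarantees that $F_\comparison$ is strongly convex and it is clearly smooth in $(\contributorscore, \comparison)$, its minimizer $\optcontributorscore(\comparison)$ is the unique zero of $\nabla F_\comparison$ and, by the implicit function theorem (the Hessian being everywhere positive definite), depends differentiably on $\comparison$. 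It therefore suffices to sign the derivative $\partial \optcontributorscore_\alternative / \partial \comparison_{\alternative\alternativebis}$ for each $\alternativebis$ and then integrate along the straight-line path from $\comparison$ to $\comparison'$, using that the hypotheses move every comparison $\comparison_{\alternative\alternativebis}$ in the same direction while leaving the Hessian's structural form intact (comparisons enter the objective only linearly).

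Next I would identify the structure of the Hessian $H \triangleq \nabla^2 F_\comparison(\optcontributorscore)$. Writing $\psi(x) \triangleq \ln\frac{\sinh(x)}{x}$, one checks $\psi$ is even and strictly convex, so each comparison term $\psi(\contributorscore_\alternative - \contributorscore_\alternativebis)$ contributes a nonnegative weight $\psi''(\contributorscore_\alternative - \contributorscore_\alternativebis)$ to a weighted graph Laplacian $L$ on the entity set, while the linear terms $\tfrac{\comparison_{\alternative\alternativebis}}{\maxComparison}(\contributorscore_\alternative - \contributorscore_\alternativebis)$ drop out of the Hessian. Hence $H = \contributorpriorweight I + L$ with $L$ a symmetric Laplacian (nonpositive off-diagonals, zero row sums) and $\contributorpriorweight > 0$; in particular $H$ is a symmetric positive-definite M-matrix. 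Only the comparison being varied enters the mixed partial, giving $\partial_{\comparison_{\alternative\alternativebis}} \nabla F_\comparison = \pm \tfrac{1}{\maxComparison}(e_\alternative - e_\alternativebis)$, whence $\partial_{\comparison_{\alternative\alternativebis}} \optcontributorscore = \mp \tfrac{1}{\maxComparison} H^{-1}(e_\alternative - e_\alternativebis)$ and $\partial_{\comparison_{\alternative\alternativebis}} \optcontributorscore_\alternative = \mp\tfrac{1}{\maxComparison}\big[(H^{-1})_{\alternative\alternative} - (H^{-1})_{\alternative\alternativebis}\big]$.

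The crux is then the sign lemma: for $H = \contributorpriorweight I + L$ as above, $H^{-1}$ is entrywise nonnegative and its diagonal dominates every other entry in the same row, i.e. $(H^{-1})_{\alternative\alternative} \geq (H^{-1})_{\alternative\alternativebis}$. Nonnegativity is the standard characterization of symmetric M-matrices. For the row-domination I would invoke a discrete maximum principle: setting $x \triangleq H^{-1}e_\alternative \geq 0$, so that $Hx = e_\alternative$, suppose the maximum of the entries of $x$ were attained at some $\alternativeter \neq \alternative$. Reading off row $\alternativeter$ of $Hx = e_\alternative$, whose right-hand side is $0$, and bounding $\sum_{k \neq \alternativeter} H_{\alternativeter k}x_k \geq x_\alternativeter \sum_{k\neq\alternativeter}H_{\alternativeter k}$ via the nonpositivity of the off-diagonals and $x_k \leq x_\alternativeter$, one obtains $0 \geq \contributorpriorweight\, x_\alternativeter$, since $\contributorpriorweight$ is the $\alternativeter$-th row sum of $H$ (because $L$ has zero row sums). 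As $\contributorpriorweight > 0$ this forces $x_\alternativeter = 0$, hence $x \equiv 0$, contradicting $Hx = e_\alternative$. Thus the maximum of $x$ sits on coordinate $\alternative$, which by symmetry of $H^{-1}$ is exactly $(H^{-1})_{\alternative\alternative} \geq (H^{-1})_{\alternative\alternativebis}$. This pins down the sign of the derivative, and integrating along the path yields the claimed monotonicity.

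I expect the sign lemma to be the main obstacle: not the M-matrix nonnegativity, which is classical, but the strict row-domination of the inverse, which is precisely what converts ``all comparisons involving $\alternative$ move the same way'' into a definite-sign change of $\optcontributorscore_\alternative$ alone. The maximum-principle step is where both the zero-row-sum structure of $L$ and the strictly positive Gaussian-prior term $\contributorpriorweight$ are essential. A secondary point to handle with care is bookkeeping the overall sign so that the conclusion matches the orientation ``$\comparison \leq \comparison'$ more favorable to $\alternative$'' fixed in the statement, and confirming that the derivative's sign is uniform along the whole path since $H$ retains its M-matrix form at every point.
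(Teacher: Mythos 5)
Your proposal cannot be compared to a proof in the paper for a simple reason: the paper contains none. Proposition~\ref{prop:monotonicity} is imported by citation as Theorem~2 of \cite{GBT2023}, and neither the main text nor Appendix~\ref{app:betabt} argues it. Your comparative-statics argument is therefore an independent, self-contained derivation, and its core is correct. The chain (implicit function theorem, then Hessian $= \contributorpriorweight I + L$ with $L$ a weighted graph Laplacian whose weights $\psi''(\contributorscore_\alternative - \contributorscore_\alternativebis)$ are positive because $\psi = \ln(\sinh(\cdot)/\cdot)$ is the cumulant generating function of the uniform law and hence strictly convex, then the M-matrix sign lemma, then integration along the segment from $\comparison$ to $\comparison'$) is sound, and your discrete maximum principle proof of the key inequality $(H^{-1})_{\alternative\alternative} \geq (H^{-1})_{\alternative\alternativebis}$ is correct as written: the zero row sums of $L$ and the strictly positive prior weight $\contributorpriorweight$ are exactly what force the contradiction $0 \geq \contributorpriorweight x_\alternativeter$. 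Compared to resting on the citation, your route buys three things: it works verbatim for any root law with a smooth, strictly convex cumulant generating function (not just the uniform one); it isolates precisely where the Gaussian prior is needed; and the explicit derivative formula $\partial_{\comparison_{\alternative\alternativebis}} \optcontributorscore_\alternative = -\tfrac{1}{\maxComparison}\bigl[(H^{-1})_{\alternative\alternative} - (H^{-1})_{\alternative\alternativebis}\bigr]$ could even be pushed to strict monotonicity.

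The one step you deferred, the $\pm$ bookkeeping, is not a formality here, and resolving it is instructive. With the loss exactly as in Equation~\eqref{eq:contributorloss}, the linear term is $+\comparison_{\alternative\alternativebis}(\contributorscore_{\contributor\alternative} - \contributorscore_{\contributor\alternativebis})/\maxComparison$, so the mixed partial is $+\tfrac{1}{\maxComparison}({\bf e}_\alternative - {\bf e}_\alternativebis)$ and your formula gives $\partial_{\comparison_{\alternative\alternativebis}} \optcontributorscore_\alternative \leq 0$: increasing $\comparison_{\alternative\alternativebis}$ weakly \emph{decreases} $\optcontributorscore_\alternative$, and integrating yields $\optcontributorscore_\alternative(\comparison) \geq \optcontributorscore_\alternative(\comparison')$, the reverse of the displayed inequality. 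This is not a flaw in your mathematics; it exposes an inconsistency in the paper's transcription of the theorem. By the convention of Section~\ref{sec:data}, $\comparison < 0$ means the user prefers $\alternative$, so $\comparison \leq \comparison'$ makes $\comparison$, not $\comparison'$, the more favorable profile for $\alternative$, contrary to the proposition's parenthetical gloss. Your argument proves the semantically intended statement (more favorable comparisons give a weakly higher score) and shows that, to be consistent with Equation~\eqref{eq:contributorloss}, either the gloss or the direction of the displayed inequalities must be flipped. When writing this up, fix the orientation once at the outset against the paper's own loss rather than carrying $\pm$ through the whole argument, and state explicitly which form of the conclusion you obtain.
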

\paragraph{Estimating uncertainty.}
\label{sec:raw_score_uncertainty}
Learning the uncertainty of the scores is important for score aggregation; 
intuitively, a user should have less influence on the global scores if we have a high uncertainty on their scores.
We perform an asymmetric uncertainty estimation, 
based on increase of the negative log-likelihood term of the loss by one unit 
(which corresponds to a likelihood $1/e$ times smaller),
i.e. we solve for $\optcontributorscoreuncertainty_{\contributor \alternative, left} \geq 0$
and $\optcontributorscoreuncertainty_{\contributor \alternative, right} \geq 0$ such that
\begin{align}
    \mathcal L \left( 
        \contributorscore{}_\contributor - \optcontributorscoreuncertainty_{\contributor \alternative, left} {\bf e}_{\alternative} | \mathcal D_\contributor 
    \right)
    = 
    \mathcal L \left( 
        \contributorscore{}_\contributor + \optcontributorscoreuncertainty_{\contributor \alternative, left} {\bf e}_{\alternative} | \mathcal D_\contributor 
    \right)
    = 
    \mathcal L \left( 
        \contributorscore{}_\contributor | \mathcal D_\contributor 
    \right) + 1,
\end{align}
where ${\bf e}_{\alternative}$ is the $\alternative$-th vector of the canonical basis.
Note that such uncertainties may not exist, 
in which case they are set to $+\infty$.  
\section{Model scaling}
\label{sec:scaling}
Step 4 of \solidago{} performs user model scaling.
The default implementation has three steps, 
namely finding a common scale, 
accounting for scoring selection bias
and standardizing scores.
\subsection{Collaborative scaling}
\label{sec:mehestan}
As explained by~\cite{AllouahGHV22}, 
different users express themselves on different implicit scales,
which implies scale mismatch.
This issue may especially be raised if a user assesses top entities,
while another assesses bad entities only.
To put all users' scores on a common scale in a secure manner,
we adapted \mehestan{}~\cite{AllouahGHV22}.
\paragraph{Reviewing \mehestan{}.}
The original \mehestan{} has three steps:
\begin{itemize}
  \item[(i)] It individually normalizes all users' scores.
  \item[(ii)] Each user's scores are scaled, by comparing them to other users' scores for entities that both compared, using a Lipschitz-resilient primitives.
  \item[(iii)] Third, each users' scores are appropriately translated, by comparing them to other users' scores, again using a Lipschitz-resilient primitives.
\end{itemize}
\paragraph{How we adapted \mehestan{}.}
We diverge from~\cite{AllouahGHV22} in four notable ways:
\begin{itemize}
  \item[(1)] We skip step (i), as comparisons provides a reasonable multiplicative normalization.
  \item[(2)] We rely on a subset of \emph{scaling-calibration} users (detailed below).
  Other users' scales are then made to match scaling-calibration users' scale.
  \item[(3)] We robustify the estimates of the relative scaling between two users, 
  by using the primitive $\qrmed{}$ instead of an average, 
  and generalizing other primitives to account for \emph{asymmetric uncertainties} (see Appendix~\ref{app:primitives}).
  \item[(4)] We systematically track uncertainties on estimates, 
  which are used when aggregating them.
  Intuitively, high uncertainty on a scaled user score for an entity 
  implies that it will only have an effect
  if the aggregated value is farther away from the estimate than the computed uncertainty.
  We thereby contribute to the growing toolbox of Lipschitz-resilient primitives.
\end{itemize}
Our adaptation of \mehestan{} is detailed in Appendix~\ref{app:mehestan}.
\paragraph{Scaling-calibration users.}
\label{sec:voting_rights_scaling}
Adaptation (2) is motivated by the ``liveness'' guarantees of \mehestan{},
which requires users to be sufficiently active.
This is far from being the case in general, and in particular in the case of Tournesol.
Ignoring inactive users may be reminiscent of~\cite{garin2022weighting}, 
who propose to ignore insufficiently active participants to federated learning.
Additionally, (2) allows to accelerate \mehestan{}, 
whose vanilla form is too computationally demanding for frequent score updates,
especially given Tournesol's limited server with a 4-core CPU and no GPU.
To make collaborative preference scaling more accurate and faster,
we propose to select the $N^{scaling}$ users who have compared the most entities, 
among those whose trust score is at least $\trust{min}^{scaling}$. 
The values $N^{scaling}=100$ and $\trust{min}^{scaling} = 0.1$ are currently used on the platform.
\subsection{Presumption of quality of rated entities}
In the context of Tournesol, 
it has been observed that most of the entities that are rated on the platform are of high quality.
This raises an issue.
Namely, since zero is the default score in absence of comparisons,
an unrated entity would be regarded as equally good as an average rated entity.
To account for the selection bias among rated entities,
we encode a \emph{presumption of quality} for rated entity,
by shifting the scale of the scores of rated entities,
so that the $\zeroshiftquantile$-quantile of rated entities has a zero score.
Tournesol set $\zeroshiftquantile \triangleq 15\%$, 
i.e. they consider that $15\%$ of rated entities are less recommendable than a random YouTube video.
More precisely, to do this robustly, 
we define a new Lipschitz-resilient primitive, 
which we call the $q$-\emph{quadratically regularized quantile} (see Section~\ref{sec:model_aggregation}).
We then subtract to all user scores
the $\zeroshiftquantile-\qrqtl{}$ of the users' scores, given their uncertainties.
\subsection{Score standardization to control Lipschitz resilience}
Finally,
we standardize the scaled scores using a Lipchitz-resilient estimator of their standard deviation.
Note that Tournesol's \emph{collaboratively scaled scores} are very heavy-tailed,
with many values that are over 30 standard deviations away from the mean. 
This motivated us to use an estimator that intuitively fits $q^{std}_{dev} \triangleq 90\%$ of all the scores.
Essentially, this amounts to acknowledging that at most $10\%$ of the entities will have scores
that are out of scale .
More precisely, we consider the $q^{std}_{dev}-\qrqtl{}$ of the absolute deviations of the scores to the $\qrmed{}$ of the scores\footnote{
  Each user score is given a voting right inversely proportional to the number of entities the user compared.
}.
The scaled scores are standardized by dividing them by the resulting estimate $\sigma^{\bf scaled}$ of the standard deviation.
The resulting scores are the \emph{scaled user models} $\theta_{\contributor \alternative}^{\bf scaled}$,
with uncertainties $\Delta \theta_{\contributor \alternative}^{\bf scaled}$,
\section{Model aggregation}
\label{sec:model_aggregation}
Step 5 of \solidago{} consists of aggregated users' scaled models.
For each entity $\alternative$, 
we aggregate the users' \emph{scaled scores} $\theta^{\bf scaled}_{\contributor \alternative}$
using a novel primitive we call the \emph{quadratically regularized quantile}.
Namely, for $\alpha \in (0, 1)$, voting rights $\votingright_\contributor$, scalars $x_\contributor$ and uncertainties $\Delta_\contributor$ for $\contributor \in [\Contributor]$,
we define
\begin{equation}
\label{eq:qrqtl}
  \qrqtl{}_{\alpha, \lipschitz} (\votingright, x, \uncertainty)
  \triangleq \argmin{m \in \setR}{
    \frac{m^2}{2 \lipschitz} 
    + \sum_{\contributor : x_\contributor \neq \perp} 
        \votingright_\contributor 
        \huber_\alpha(m|x_\contributor, \uncertainty_\contributor)
  },
\end{equation}
where $\huber$ is an adaptation of the classical Huber loss to asymmetric uncertainties:
\begin{equation}
    \huber_\alpha(m|x, \uncertainty) \triangleq \left\lbrace
    \begin{array}{ll}
        \min \set{ 1, \frac{1-\alpha}{\alpha}} \left( 
            \sqrt{\uncertainty_{left}^2 + (x-m)^2} - \uncertainty_{left} 
        \right) & \text{for } m \leq x,  \\
        \min \set{ 1, \frac{\alpha}{1-\alpha}} \left( 
            \sqrt{\uncertainty_{right}^2 + (m - x)^2} - \uncertainty_{right}
        \right) & \text{for } m \geq x.
    \end{array}
    \right.
\end{equation}
Note that, for $\alpha = 0.5$ and $\Delta = 0$,
we retrieve the quadratically regularized median as defined by~\cite{AllouahGHV22}.
Crucially, for any $\alpha \in (0, 1)$, 
$\qrqtl_{\alpha, \lipschitz}$ remains $\lipschitz$-Lipschitz resilient,
\begin{theorem}[Lipschitz resilience of \qrqtl{}]
    Let $\alpha \in (0, 1)$.
    For any $x_{1:\Contributor} \in \setR^{U}$, $\Delta_{1:\Contributor} \in \setR_{\geq 0}^{2 \cdot U}$,
    the map $\votingright_{1:\Contributor} \mapsto \qrqtl{}_{\alpha, \lipschitz} (\votingright_{1:\Contributor}, x_{1:\Contributor}, \Delta_{1:\Contributor})$
    is $\lipschitz$-Lipschitz continuous as a function $(\setR^\Contributor, \norm{\cdot}{1}) \rightarrow (\setR, \absv{\cdot})$.
\end{theorem}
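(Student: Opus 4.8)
The plan is to combine strong convexity, which makes the minimizer single-valued, with the first-order optimality condition. Fix $x_{1:\Contributor}$ and $\uncertainty_{1:\Contributor}$, and work on the relevant domain where all voting rights satisfy $\votingright_\contributor \geq 0$. There the objective $\Phi_\votingright(m) \triangleq \frac{m^2}{2\lipschitz} + \sum_{\contributor} \votingright_\contributor \huber_\alpha(m \mid x_\contributor, \uncertainty_\contributor)$ is the sum of a $\frac{1}{\lipschitz}$-strongly convex quadratic and a nonnegative combination of convex Huber terms, hence itself $\frac{1}{\lipschitz}$-strongly convex and coercive; so the minimizer exists and is unique, and I write $m^*(\votingright)$ for it. Since each $\huber_\alpha(\cdot \mid x_\contributor, \uncertainty_\contributor)$ is differentiable in $m$, $\Phi_\votingright$ is differentiable and its minimizer is characterized by $\frac{m^*}{\lipschitz} + \sum_\contributor \votingright_\contributor g_\contributor(m^*) = 0$, where $g_\contributor(m) \triangleq \partial_m \huber_\alpha(m \mid x_\contributor, \uncertainty_\contributor)$.

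Two properties of $g_\contributor$ drive the whole argument. First, $g_\contributor$ is \emph{non-decreasing}: on each branch, $m \mapsto \sqrt{\uncertainty^2 + (x_\contributor - m)^2}$ is the composition of the convex function $t \mapsto \sqrt{\uncertainty^2 + t^2}$ with an affine map, hence convex, and the two branches glue $C^1$-ly at $m = x_\contributor$ because both one-sided derivatives vanish there ($\sqrt{\uncertainty^2 + t^2}$ has zero slope at $t=0$); thus $\huber_\alpha(\cdot)$ is convex in $m$ and $g_\contributor$ is non-decreasing. Second, $\absv{g_\contributor(m)} \leq 1$ \emph{everywhere}: on each branch $g_\contributor(m) = c \cdot \frac{\pm(m - x_\contributor)}{\sqrt{\uncertainty^2 + (m - x_\contributor)^2}}$ with prefactor $c = \min\{1, \frac{1-\alpha}{\alpha}\}$ or $c = \min\{1, \frac{\alpha}{1-\alpha}\}$, both in $(0,1]$, multiplied by a fraction of modulus at most $1$.

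With these in hand the resilience bound is a short computation. Given two weight vectors $\votingright, \votingright'$, set $m = m^*(\votingright)$ and $m' = m^*(\votingright')$, and assume without loss of generality $m \leq m'$. Subtracting the two optimality conditions gives $\frac{m' - m}{\lipschitz} = \sum_\contributor \left[ \votingright_\contributor g_\contributor(m) - \votingright'_\contributor g_\contributor(m') \right]$, and I split each summand as $\votingright_\contributor \left( g_\contributor(m) - g_\contributor(m') \right) + (\votingright_\contributor - \votingright'_\contributor) g_\contributor(m')$. The first piece is $\leq 0$ since $\votingright_\contributor \geq 0$ and $g_\contributor(m) \leq g_\contributor(m')$ by monotonicity, while the second is bounded by $\absv{\votingright_\contributor - \votingright'_\contributor}$ using $\absv{g_\contributor(m')} \leq 1$. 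Summing yields $\frac{m' - m}{\lipschitz} \leq \norm{\votingright - \votingright'}{1}$, hence $\absv{m^*(\votingright) - m^*(\votingright')} \leq \lipschitz \norm{\votingright - \votingright'}{1}$, which is the claim.

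I expect the main obstacle to be the analytic bookkeeping of the $\huber_\alpha$ derivative across its kink, rather than the resilience argument itself: one must verify that the asymmetric uncertainties ($\uncertainty_{left} \neq \uncertainty_{right}$) together with the asymmetric prefactors still produce a function that is genuinely $C^1$ and convex at $m = x_\contributor$, and that the derivative bound by $1$ holds uniformly. Once these two elementary facts are pinned down, monotonicity plus the triangle inequality close the proof immediately. A minor point worth flagging is the restriction to nonnegative voting rights, which is exactly what guarantees both the convexity underlying well-posedness of the minimizer and the sign used in the monotonicity term.
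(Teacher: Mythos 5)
Your proof is correct and takes essentially the same route as the paper: first-order optimality conditions under both weight vectors, combined with the uniform bound $\absv{g_\contributor(m)} \leq 1$ on each Huber derivative. In fact your write-up is more careful than the paper's own proof, which after subtracting the two optimality conditions silently discards the cross term $\votingright_{\contributor}\left(g_\contributor(m) - g_\contributor(m')\right)$ and jumps straight to the weight-difference bound --- precisely the term your monotonicity argument shows is harmless because it has a favorable sign when $\votingright_\contributor \geq 0$. One small caveat you partly anticipated: the statement allows $\uncertainty_{\contributor} = 0$, in which case $\huber_\alpha(\cdot \mid x_\contributor, \uncertainty_\contributor)$ degenerates to a pinball-type loss that is convex but not differentiable at $m = x_\contributor$, so the $C^1$ gluing you describe fails there; the argument survives verbatim by replacing derivatives with subgradients, since the subdifferential remains monotone and all its elements remain bounded by $1$ in absolute value.
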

\begin{proof}[Proof sketch]
    The key to the theorem is the derivative of the term that depends on a user $\contributor$ has an absolute value which is at most $\votingright_\contributor$.
    See Appendix~\ref{app:qr_quantile} for the details.
\end{proof}
Appendix~\ref{app:qr_quantile} proves several other desirable properties of \qrqtl{},
such as the fact that it asymptotically correctly estimates the quantile,
or the fact that the addition of a user with huge uncertainties does not affect the estimate.
We then define each \emph{global score} $\rho_{\alternative}$ for entity $\alternative$ as
\begin{equation}
\rho_\alternative \triangleq \qrqtl_{\alpha, \lipschitz} \left(
  \votingright_{1:\Contributor, \alternative}, 
  \frac{\theta^{\bf scaled}_{1:\Contributor, \alternative}}{\sigma^{\bf scaled}}, 
  \frac{\Delta \theta^{\bf scaled}_{1:\Contributor, \alternative}}{\sigma^{\bf scaled}}
\right).    
\end{equation}
Tournesol set $\alpha \triangleq 0.2$ and $\lipschitz \triangleq 0.1$.
\paragraph{Favoring consensus.}
By setting $\alpha < 0.5$, Tournesol favors entities that are consensually good,
over entities that are excellent according to half of the community,
but are very bad according to another half.
This measure is aligned, in spirit, with some measures of Twitter's \emph{Community Notes}.
Appendix~\ref{app:quantile_change} discusses the impact of $\alpha$.
\section{Model post-process}
\label{sec:displayed_scores}
Finally, Step 6 of \solidago{} responds to the desire for a display of scores on a bounded scale, e.g. from $(-100, 100)$,
which may be regarded as more ``human-readible''.
We do so by squashing the scores 
using the increasing homeomorphism $\varphi: x \mapsto 100 x / \sqrt{1 + x^2}$,
which transforms $\setR$ into $(-100, 100)$.
This yields the \emph{displayed user scores} $\theta^{\bf display}_{\contributor \alternative}$
and the \emph{displayed global scores} $\rho^{\bf display}_{\alternative}$.
These are formally given by
$\theta^{\bf display}_{\contributor \alternative}
    \triangleq \varphi \left( \theta^{\bf scaled}_{\contributor \alternative} / \sigma^{\bf scaled} \right)$
and $\rho^{\bf display}_{\alternative} \triangleq \varphi(\rho_\alternative)$.
\section{Evaluation}
\label{sec:evaluation}
We perform two evaluations of our pipeline.
First we discuss experiments made with a synthetic dataset, 
before discussing the large-scale deployment on Tournesol.
\subsection{Evaluation on synthetic data}
\paragraph{User generation.}
To evaluate \solidago{} on synthetic data,
we propose a user generation model,
where each user may be trustworthy or not, 
pretrusted or not.
Here, we make the simplifying assumption that pretrusted users are trustworthy.
Moreover, a user may be more or less engaged in vouching for others,
and in comparing entities.
They may also be more or less willing to provide numerous comparisons per entity.
We draw a singular-value decomposition (SVD) model of users and entities.
Namely, we draw a preference $d$-dimensional SVD vector for each trustworthy user
from a normal distribution $\mathcal N(\rho^{\bf true}, I_d)$,
with $\rho^{\bf true} \triangleq (3, 0, \ldots, 0)$ 
(and $d=2$ in our experiments).
Note that $\rho^{\bf true} \neq 0$ implies that the trustworthy users' preferences are somewhat aligned.
Untrustworthy users are given the vector representation $- \rho^{\bf true}$,
i.e. they score items oppositely from the trustworthy users' approximately consensual preferences.
Finally, some users may prefer to evaluate good videos rather than any video.
We model this with a random engagement bias for good videos.
\paragraph{Vouch generation.}
We adapt the~\cite{erdds1959random} model for vouching among trustworthy users,
where the probability of vouching is however dependent on the voucher's activeness in vouching.
Similarly, untrustworthy users vouch for one another, according to a similar model.
\paragraph{Entity generation.}
Each entity is assumed to have a feature SVD vector drawn from $\mathcal N(0, I_d)$.
\paragraph{Engagement generation.}
For each user, we define the engagement bias as
the user's implicit score for an entity scaled by their engagement bias, 
plus a random standard normal noise.
The user then engages with the entities with the best engagement score.
Among these entities, the comparisons are made based on an Erdös-Renyi model.
\paragraph{Comparison generation.}
Each engagement leads to a comparison, made by a $21$-nary generalized Bradley Terry model~\cite{GBT2023} (as Tournesol does),
given the true score defined by the scalar product between the user preference's SVD vector
and the entity's SVD vector. 
\paragraph{Reproducibility.}
The plots of Figure~\ref{fig:resilience} depict the standard deviations after running 100 random seeds.
The code is provided in the Supplementary Material.
\begin{figure}[h]
    \centering
    \includegraphics[width=0.49\linewidth]{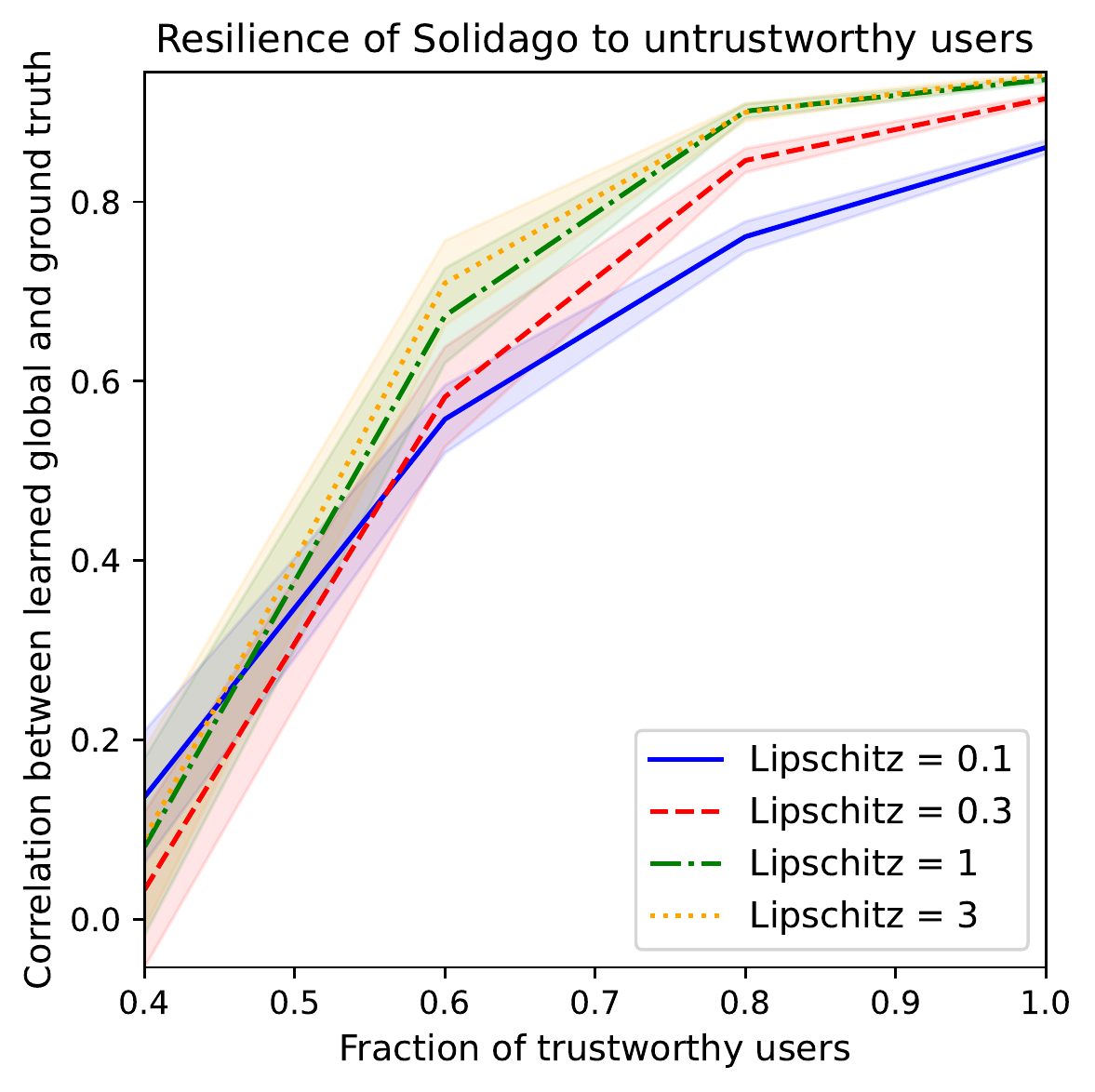}
    \includegraphics[width=0.49\linewidth]{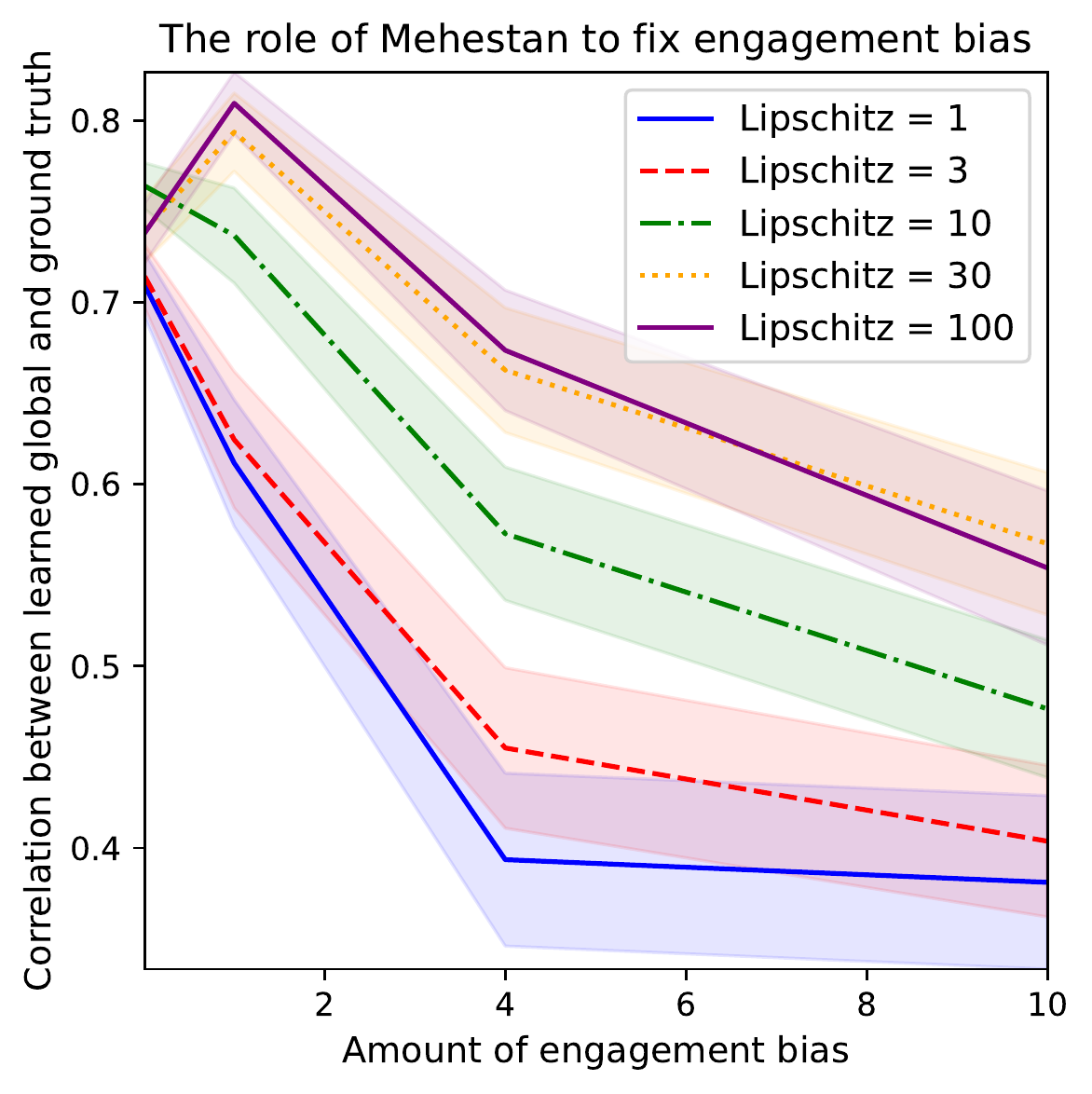}
    \caption{Correlation between the ground truth and the learned global scores, as a function of the fraction of trustworthy users (left),
    and as a function of engagement bias (right).}
    \label{fig:resilience}
\end{figure}
\paragraph{Resilience evaluation.}
Figure~\ref{fig:resilience} (left) shows how the correlation 
between the ground truth scores and the pipeline output
increases with the fraction of trustworthy users.
For a small Lipschitz constant (which corresponds to a more resilience),
this correlation is positive, even given a majority of trustworthy users,
thanks to pretrust, vouching and the pipeline resilience.
When most users are trustworthy, larger Lipschitz constants improve accuracy.
\paragraph{Evaluation of the correction of engagement bias.}
The right plot of Figure~\ref{fig:resilience} shows the harm of engagement bias,
especially Mehestan's Lipschitz resilience is too small
(which amounts to doing barely any scaling).
Overall, the plots highlight the tradeoff between resilience and accuracy.
\begin{figure}[h]
    \centering
    \includegraphics[width=\linewidth]{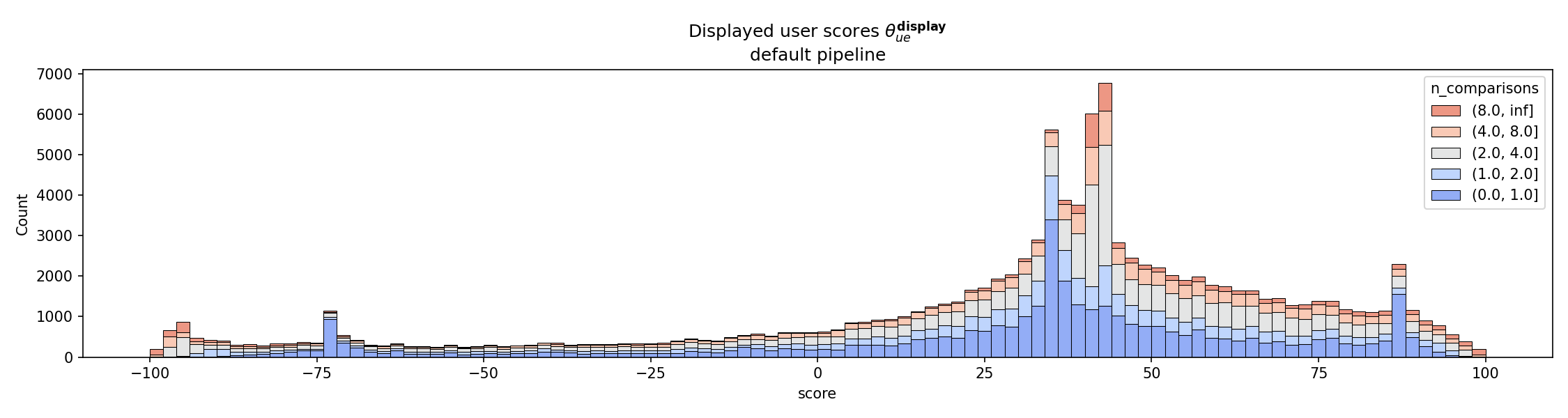} \\
    \includegraphics[width=\linewidth]{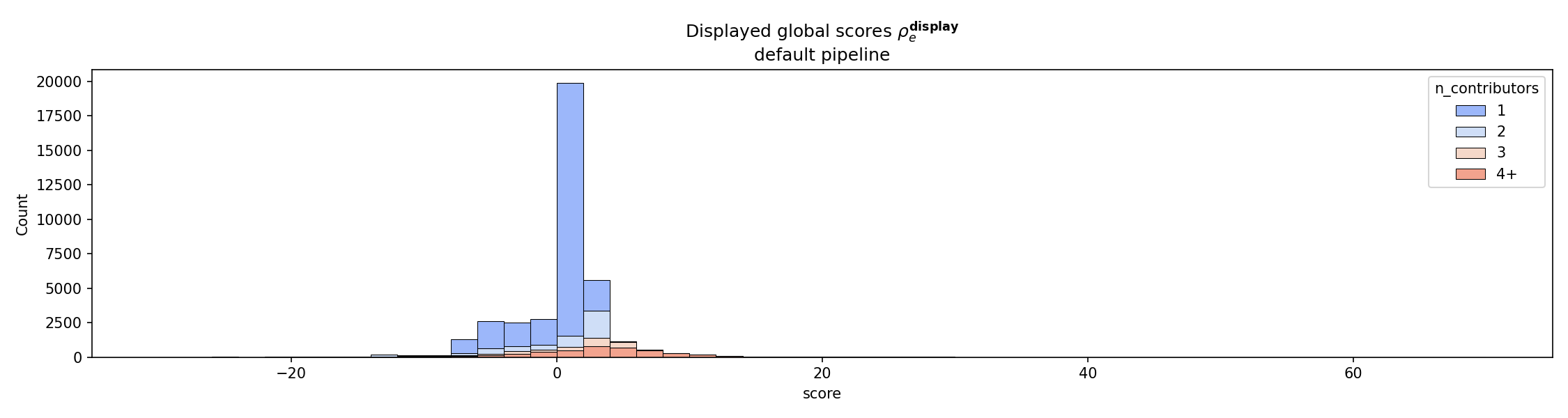} 
    \caption{Distribution of displayed user scores $\theta^{\bf display}_{\contributor \alternative}$ (top graph) and global $\rho^{\bf display}_{\alternative}$ (bottom graph). 
    Hyperparameters were selected so that users' scores have roughly a similar distribution, 
    once they were compared at least 3 times,
    and so that the global scores of highly compared videos are well spread.
    Hyperparameter selection is further discussed in Appendix~\ref{sec:section}.}
    \label{fig:sss_scores}
\end{figure}
\subsection{Large-scale deployment}
\solidago{} has been deployed on Tournesol since September 2023,
where it allows over 10,000 active users
to collaboratively score over 35,000 YouTube videos,
by aggregating their over 190,000 comparisons.
Figure~\ref{fig:sss_scores} displays the distribution of the displayed user and global scores on the platform.
Appendix~\ref{sec:section} details our hyperparameter selection, by plotting the distributions for other hyperparameters.
Note that we heavily exploited \texttt{numba} to accelerate the computations.
While no feedback elicitation has been performed,
the deployment of \texttt{solidago} was well received by several active users,
who reckoned a significant improvement of the algorithmic interpretation of their inputs by the platform.

\section{Conclusion}
\label{sec:conclusion}
This paper presented \solidago{}, 
an end-to-end modular pipeline for secure collaborative scoring.
The pipeline is composed of 6 modules, 
for which we introduced state-of-the-art default implementations.
We made these modules somewhat independent, 
to facilitate external contributions to the pipeline.
In fact, improving each module is an exciting research problem of its own,
involving, e.g., accounting for direct assessments,
learning volitions~\cite{DBLP:conf/hci/LechiakhM23},
enabling liquid democracy~\cite{DBLP:conf/sigecom/0002HJMPR23}, 
and making user models generalize (see Appendix~\ref{sec:future}).
\solidago{} is also accompanied with data generation tools,
to facilitate pipeline evaluation.
Perhaps most remarkably, 
\solidago{} has been deployed at scale on \url{tournesol.app} since September 2023.

\bibliographystyle{plain}
\bibliography{references}

\begin{thebibliography}{10}

\bibitem{AbramsMP05}
Zo{\"{e}} Abrams, Robert McGrew, and Serge~A. Plotkin.
\newblock A non-manipulable trust system based on eigentrust.
\newblock {\em SIGecom Exch.}, 5(4):21--30, 2005.

\bibitem{adams1992intention}
Frederick Adams and Alfred~R Mele.
\newblock The intention/volition debate1.
\newblock {\em Canadian Journal of Philosophy}, 22(3):323--337, 1992.

\bibitem{AfanadorOBA20}
Juan Afanador, Nir Oren, Murilo~S. Baptista, and Maria Araujo.
\newblock From eigentrust to a trust-measuring algorithm in the max-plus
  algebra.
\newblock In Giuseppe~De Giacomo, Alejandro Catal{\'{a}}, Bistra Dilkina,
  Michela Milano, Sen{\'{e}}n Barro, Alberto Bugar{\'{\i}}n, and
  J{\'{e}}r{\^{o}}me Lang, editors, {\em {ECAI} 2020 - 24th European Conference
  on Artificial Intelligence, 29 August-8 September 2020, Santiago de
  Compostela, Spain, August 29 - September 8, 2020 - Including 10th Conference
  on Prestigious Applications of Artificial Intelligence {(PAIS} 2020)}, volume
  325 of {\em Frontiers in Artificial Intelligence and Applications}, pages
  3--10. {IOS} Press, 2020.

\bibitem{agan2023automating}
Amanda~Y Agan, Diag Davenport, Jens Ludwig, and Sendhil Mullainathan.
\newblock Automating automaticity: How the context of human choice affects the
  extent of algorithmic bias.
\newblock Technical report, National Bureau of Economic Research, 2023.

\bibitem{AllouahGHV22}
Youssef Allouah, Rachid Guerraoui, L{\^{e}}{-}Nguy{\^{e}}n Hoang, and Oscar
  Villemaud.
\newblock Robust sparse voting.
\newblock In {\em International Conference on Artificial Intelligence and
  Statistics}. {PMLR}, 2024.

\bibitem{AndersenBCFFKMT08}
Reid Andersen, Christian Borgs, Jennifer~T. Chayes, Uriel Feige, Abraham~D.
  Flaxman, Adam Kalai, Vahab~S. Mirrokni, and Moshe Tennenholtz.
\newblock Trust-based recommendation systems: an axiomatic approach.
\newblock In Jinpeng Huai, Robin Chen, Hsiao{-}Wuen Hon, Yunhao Liu, Wei{-}Ying
  Ma, Andrew Tomkins, and Xiaodong Zhang, editors, {\em Proceedings of the 17th
  International Conference on World Wide Web, {WWW} 2008, Beijing, China, April
  21-25, 2008}, pages 199--208. {ACM}, 2008.

\bibitem{bastug2020exploring}
Mehmet~F Bastug, Aziz Douai, and Davut Akca.
\newblock Exploring the “demand side” of online radicalization: Evidence
  from the canadian context.
\newblock {\em Studies in Conflict \& Terrorism}, 43(7):616--637, 2020.

\bibitem{BradleyTerry52}
Ralph~Allan Bradley and Milton~E Terry.
\newblock Rank analysis of incomplete block designs: I. the method of paired
  comparisons.
\newblock {\em Biometrika}, 39(3/4):324--345, 1952.

\bibitem{brent71}
Richard~P. Brent.
\newblock An algorithm with guaranteed convergence for finding a zero of a
  function.
\newblock {\em The computer journal}, 14(4):422--425, 1971.

\bibitem{carroll1884principles}
Lewis Carroll.
\newblock {\em The principles of parliamentary representation}.
\newblock Harrison and Sons, 1884.

\bibitem{cheng2005}
Alice Cheng and Eric Friedman.
\newblock Sybilproof reputation mechanisms.
\newblock In {\em Proceedings of the 2005 ACM SIGCOMM workshop on Economics of
  peer-to-peer systems}, pages 128--132, 2005.

\bibitem{chuai2023roll}
Yuwei Chuai, Haoye Tian, Nicolas Pr{\"o}llochs, and Gabriele Lenzini.
\newblock The roll-out of community notes did not reduce engagement with
  misinformation on twitter.
\newblock {\em arXiv preprint arXiv:2307.07960}, 2023.

\bibitem{DanezisM09}
George Danezis and Prateek Mittal.
\newblock Sybilinfer: Detecting sybil nodes using social networks.
\newblock In {\em Proceedings of the Network and Distributed System Security
  Symposium, {NDSS} 2009, San Diego, California, USA, 8th February - 11th
  February 2009}. The Internet Society, 2009.

\bibitem{dekker69}
Theodorus~Jozef Dekker.
\newblock Finding a zero by means of successive linear interpolation.
\newblock {\em Constructive aspects of the fundamental theorem of algebra},
  pages 37--51, 1969.

\bibitem{Douceur02}
John~R. Douceur.
\newblock The sybil attack.
\newblock In Peter Druschel, M.~Frans Kaashoek, and Antony I.~T. Rowstron,
  editors, {\em Peer-to-Peer Systems, First International Workshop, {IPTPS}
  2002, Cambridge, MA, USA, March 7-8, 2002, Revised Papers}, volume 2429 of
  {\em Lecture Notes in Computer Science}, pages 251--260. Springer, 2002.

\bibitem{ElMhamdiFGH21}
El{-}Mahdi El{-}Mhamdi, Sadegh Farhadkhani, Rachid Guerraoui, and
  L{\^{e}}{-}Nguy{\^{e}}n Hoang.
\newblock On the strategyproofness of the geometric median.
\newblock {\em CoRR}, abs/2106.02394, 2021.

\bibitem{communitynotes_wired}
Vittoria Elliott and David Gilbert.
\newblock Elon musk’s main tool for fighting disinformation on x is making
  the problem worse, insiders claim.
\newblock {\em Wired}, 2023.

\bibitem{elo1978rating}
Arpad~E Elo and Sam Sloan.
\newblock The rating of chessplayers: Past and present.
\newblock {\em (No Title)}, 1978.

\bibitem{erdds1959random}
P~ErdöS and A~Renyi.
\newblock On random graphs i.
\newblock {\em Publicationes Mathematicae}, 6(290-297):18, 1959.

\bibitem{GBT2023}
Julien Fageot, Sadegh Farhadkhani, L{\^e}~Nguy{\^e}n Hoang, and Oscar
  Villemaud.
\newblock Generalized bradley-terry models for score estimation from paired
  comparisons.
\newblock {\em arXiv preprint arXiv:2308.08644}, 2023.

\bibitem{FanLLS12}
Xinxin Fan, Ling Liu, Mingchu Li, and Zhiyuan Su.
\newblock Eigentrustp\({}^{\mbox{++}}\): Attack resilient trust management.
\newblock In Calton Pu, James Joshi, and Surya Nepal, editors, {\em 8th
  International Conference on Collaborative Computing: Networking, Applications
  and Worksharing, CollaborateCom 2012, Pittsburgh, PA, USA, October 14-17,
  2012}, pages 416--425. {ICST} / {IEEE}, 2012.

\bibitem{FarhadkhaniGHV22}
Sadegh Farhadkhani, Rachid Guerraoui, L{\^{e}}~Nguy{\^{e}}n Hoang, and Oscar
  Villemaud.
\newblock An equivalence between data poisoning and byzantine gradient attacks.
\newblock In Kamalika Chaudhuri, Stefanie Jegelka, Le~Song, Csaba
  Szepesv{\'{a}}ri, Gang Niu, and Sivan Sabato, editors, {\em International
  Conference on Machine Learning, {ICML} 2022, 17-23 July 2022, Baltimore,
  Maryland, {USA}}, volume 162 of {\em Proceedings of Machine Learning
  Research}, pages 6284--6323. {PMLR}, 2022.

\bibitem{FreedmanBSDC20}
Rachel Freedman, Jana~Schaich Borg, Walter Sinnott{-}Armstrong, John~P.
  Dickerson, and Vincent Conitzer.
\newblock Adapting a kidney exchange algorithm to align with human values.
\newblock {\em Artif. Intell.}, 283:103261, 2020.

\bibitem{gao2020mental}
Junling Gao, Pinpin Zheng, Yingnan Jia, Hao Chen, Yimeng Mao, Suhong Chen,
  Yi~Wang, Hua Fu, and Junming Dai.
\newblock Mental health problems and social media exposure during covid-19
  outbreak.
\newblock {\em Plos one}, 15(4):e0231924, 2020.

\bibitem{garin2022weighting}
Marie Garin, Theodoros Evgeniou, and Nicolas Vayatis.
\newblock Weighting schemes for one-shot federated learning.
\newblock {\em Available at SSRN 4035732}, 2022.

\bibitem{facebook_files}
Keach Hagey and Jeff Horwitz.
\newblock Facebook tried to make its platform a healthier place. it got angrier
  instead.
\newblock {\em The Wall Street Journal}, 2023.

\bibitem{DBLP:conf/sigecom/0002HJMPR23}
Daniel Halpern, Joseph~Y. Halpern, Ali Jadbabaie, Elchanan Mossel, Ariel~D.
  Procaccia, and Manon Revel.
\newblock In defense of liquid democracy.
\newblock In Kevin Leyton{-}Brown, Jason~D. Hartline, and Larry Samuelson,
  editors, {\em Proceedings of the 24th {ACM} Conference on Economics and
  Computation, {EC} 2023, London, United Kingdom, July 9-12, 2023}, page 852.
  {ACM}, 2023.

\bibitem{hoang2020science}
L{\^e}~Nguy{\^e}n Hoang.
\newblock Science communication desperately needs more aligned recommendation
  algorithms.
\newblock {\em Frontiers in Communication}, 5:598454, 2020.

\bibitem{tournesol}
L{\^{e}}{-}Nguy{\^{e}}n Hoang, Louis Faucon, Aidan Jungo, Sergei Volodin, Dalia
  Papuc, Orfeas Liossatos, Ben Crulis, Mariame Tighanimine, Isabela Constantin,
  Anastasiia Kucherenko, Alexandre Maurer, Felix Grimberg, Vlad Nitu, Chris
  Vossen, S{\'{e}}bastien Rouault, and El{-}Mahdi El{-}Mhamdi.
\newblock Tournesol: {A} quest for a large, secure and trustworthy database of
  reliable human judgments.
\newblock {\em CoRR}, abs/2107.07334, 2021.

\bibitem{faucon2021recommendation}
Lê-Nguyên Hoang, Louis Faucon, and El-Mahdi El-Mhamdi.
\newblock Recommendation algorithms, a neglected opportunity for public health.
\newblock {\em Revue M{\'e}decine et Philosophie}, 4(2), 2021.

\bibitem{KalalaFK17}
Kalonji Kalala, Tao Feng, and Iluju Kiringa.
\newblock Eigentrust for hierarchically structured chord.
\newblock In Giovanni Livraga and Chris~J. Mitchell, editors, {\em Security and
  Trust Management - 13th International Workshop, {STM} 2017, Oslo, Norway,
  September 14-15, 2017, Proceedings}, volume 10547 of {\em Lecture Notes in
  Computer Science}, pages 203--212. Springer, 2017.

\bibitem{KamvarSG03}
Sepandar~D. Kamvar, Mario~T. Schlosser, and Hector Garcia{-}Molina.
\newblock The eigentrust algorithm for reputation management in {P2P} networks.
\newblock In Guszt{\'{a}}v Hencsey, Bebo White, Yih{-}Farn~Robin Chen,
  L{\'{a}}szl{\'{o}} Kov{\'{a}}cs, and Steve Lawrence, editors, {\em
  Proceedings of the Twelfth International World Wide Web Conference, {WWW}
  2003, Budapest, Hungary, May 20-24, 2003}, pages 640--651. {ACM}, 2003.

\bibitem{DBLP:conf/hci/LechiakhM23}
Mohamed Lechiakh and Alexandre Maurer.
\newblock Volition learning: What would you prefer to prefer?
\newblock In Helmut Degen and Stavroula Ntoa, editors, {\em Artificial
  Intelligence in {HCI} - 4th International Conference, {AI-HCI} 2023, Held as
  Part of the 25th {HCI} International Conference, {HCII} 2023, Copenhagen,
  Denmark, July 23-28, 2023, Proceedings, Part {I}}, volume 14050 of {\em
  Lecture Notes in Computer Science}, pages 555--574. Springer, 2023.

\bibitem{LeeKKKYCSNLPP19}
Min~Kyung Lee, Daniel Kusbit, Anson Kahng, Ji~Tae Kim, Xinran Yuan, Allissa
  Chan, Daniel See, Ritesh Noothigattu, Siheon Lee, Alexandros Psomas, and
  Ariel~D. Procaccia.
\newblock Webuildai: Participatory framework for algorithmic governance.
\newblock {\em Proc. {ACM} Hum. Comput. Interact.}, 3({CSCW}):181:1--181:35,
  2019.

\bibitem{LiHLW05}
Kenli Li, Yan He, Xiaoling Liu, and Ying Wang.
\newblock Security-driven scheduling algorithms based on eigentrust in grid.
\newblock In {\em Sixth International Conference on Parallel and Distributed
  Computing, Applications and Technologies {(PDCAT} 2005), 5-8 December 2005,
  Dalian, China}, pages 1068--1072. {IEEE} Computer Society, 2005.

\bibitem{lin2019existential}
Herbert Lin.
\newblock The existential threat from cyber-enabled information warfare.
\newblock {\em Bulletin of the Atomic Scientists}, 75(4):187--196, 2019.

\bibitem{LuWL16}
Kun Lu, Junlong Wang, and Mingchu Li.
\newblock An eigentrust dynamic evolutionary model in {P2P} file-sharing
  systems.
\newblock {\em Peer-to-Peer Netw. Appl.}, 9(3):599--612, 2016.

\bibitem{maystre2018efficient}
Lucas Maystre.
\newblock Efficient learning from comparisons.
\newblock Technical report, EPFL, 2018.

\bibitem{NoothigattuGADR18}
Ritesh Noothigattu, Snehalkumar (Neil)~S. Gaikwad, Edmond Awad, Sohan Dsouza,
  Iyad Rahwan, Pradeep Ravikumar, and Ariel~D. Procaccia.
\newblock A voting-based system for ethical decision making.
\newblock In Sheila~A. McIlraith and Kilian~Q. Weinberger, editors, {\em
  Proceedings of the Thirty-Second {AAAI} Conference on Artificial
  Intelligence, (AAAI-18), the 30th innovative Applications of Artificial
  Intelligence (IAAI-18), and the 8th {AAAI} Symposium on Educational Advances
  in Artificial Intelligence (EAAI-18), New Orleans, Louisiana, USA, February
  2-7, 2018}, pages 1587--1594. {AAAI} Press, 2018.

\bibitem{page1999}
Lawrence Page, Sergey Brin, Rajeev Motwani, and Terry Winograd.
\newblock The pagerank citation ranking: Bringing order to the web.
\newblock Technical report, Stanford InfoLab, 1999.

\bibitem{piazza2022fake}
James~A Piazza.
\newblock Fake news: the effects of social media disinformation on domestic
  terrorism.
\newblock {\em Dynamics of Asymmetric Conflict}, 15(1):55--77, 2022.

\bibitem{PoupkoSST21}
Ouri Poupko, Gal Shahaf, Ehud Shapiro, and Nimrod Talmon.
\newblock Building a sybil-resilient digital community utilizing trust-graph
  connectivity.
\newblock {\em {IEEE/ACM} Trans. Netw.}, 29(5):2215--2227, 2021.

\bibitem{ramos2015liquid}
Jos{\'e} Ramos.
\newblock Liquid democracy and the futures of governance.
\newblock {\em The future internet: Alternative visions}, pages 173--191, 2015.

\bibitem{RguibiM19}
Mohamed~Amine Rguibi and Najem Moussa.
\newblock Isolating {P2P} botnet using eigentrust algorithm in {P2P} file
  sharing networks.
\newblock In Mohamed Lazaar, Claude Duvallet, Mohammed~Al Achhab, Oussama
  Mahboub, and Hassan Silkan, editors, {\em BDIoT'19: The 4th International
  Conference On Big Data and Internet of Things, Rabat, Morocco, October 23-24,
  2019}. {ACM}, 2019.

\bibitem{RibeiroO0AM20}
Manoel~Horta Ribeiro, Raphael Ottoni, Robert West, Virg{\'{\i}}lio A.~F.
  Almeida, and Wagner~Meira Jr.
\newblock Auditing radicalization pathways on youtube.
\newblock In Mireille Hildebrandt, Carlos Castillo, L.~Elisa Celis, Salvatore
  Ruggieri, Linnet Taylor, and Gabriela Zanfir{-}Fortuna, editors, {\em FAT*
  '20: Conference on Fairness, Accountability, and Transparency, Barcelona,
  Spain, January 27-30, 2020}, pages 131--141. {ACM}, 2020.

\bibitem{saha2022social}
Koustuv Saha, Asra Yousuf, Ryan~L Boyd, James~W Pennebaker, and Munmun
  De~Choudhury.
\newblock Social media discussions predict mental health consultations on
  college campuses.
\newblock {\em Scientific reports}, 12(1):1--11, 2022.

\bibitem{SubbarajS14}
Surendran Subbaraj and Savarimuthu Prakash.
\newblock Eigentrust-based non-cooperative game model assisting {ACO}
  look-ahead secure routing against selfishness.
\newblock {\em {EURASIP} J. Wirel. Commun. Netw.}, 2014:78, 2014.

\bibitem{thurstone1927method}
Louis~L Thurstone.
\newblock The method of paired comparisons for social values.
\newblock {\em The Journal of Abnormal and Social Psychology}, 21(4):384, 1927.

\bibitem{TranMLS09}
Dinh~Nguyen Tran, Bonan Min, Jinyang Li, and Lakshminarayanan Subramanian.
\newblock Sybil-resilient online content voting.
\newblock In Jennifer Rexford and Emin~G{\"{u}}n Sirer, editors, {\em
  Proceedings of the 6th {USENIX} Symposium on Networked Systems Design and
  Implementation, {NSDI} 2009, April 22-24, 2009, Boston, MA, {USA}}, pages
  15--28. {USENIX} Association, 2009.

\bibitem{communitynotes}
Twitter.
\newblock Note ranking algorithm.
\newblock Twitter Blog, 2023.

\bibitem{instagram_depression}
Georgia Wells, Jeff Horwitz, and Deepa Seetharaman.
\newblock Facebook knows instagram is toxic for teen girls, company documents
  show.
\newblock {\em The Wall Street Journal}, 2021.

\bibitem{woolley2020reality}
Samuel Woolley.
\newblock {\em The Reality Game: A gripping investigation into deepfake videos,
  the next wave of fake news and what it means for democracy}.
\newblock Hachette UK, 2020.

\bibitem{facebook_radicalization}
Brandy Zadrozny.
\newblock ‘carol’s journey’: What facebook knew about how it radicalized
  users.
\newblock {\em NBC News}, 2021.

\bibitem{zhu2004understanding}
Jing Zhu.
\newblock Understanding volition.
\newblock {\em Philosophical psychology}, 17(2):247--273, 2004.

\end{thebibliography}

\appendix
\newpage
\onecolumn
\begin{center}
    {\huge Appendix}
\end{center}
\section{Voting rights (algorithms and proofs)}
In this section, we provide a pseudocode of \byztrust{}, and we prove Proposition~\ref{prop:trust} and Theorem~\ref{th:trust_security}.
\subsection{Proof of Proposition~\ref{prop:trust}}
First, let us prove the following useful lemmas.
\begin{lemma}
\label{lemma:min_lipschitz}
$x \mapsto \min \set{x, 1}$ is 1-Lipschitz.
\end{lemma}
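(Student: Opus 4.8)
The plan is to reduce the claim to the elementary principle that a pointwise minimum of two 1-Lipschitz functions is again 1-Lipschitz. First I would observe that $x \mapsto \min\set{x,1}$ is exactly the pointwise minimum of the identity map $f(x)=x$ and the constant map $g(x)=1$. The identity is 1-Lipschitz, since $\absv{f(x)-f(y)} = \absv{x-y}$, and the constant is $0$-Lipschitz, hence a fortiori 1-Lipschitz. It then suffices to establish the general fact that $\min\set{f,g}$ inherits the common Lipschitz constant, and specialize it here.

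To prove that general fact (or, equivalently, to argue directly), I would fix $x,y \in \setR$ and, without loss of generality, assume $x \leq y$. Because $t \mapsto \min\set{t,1}$ is non-decreasing, the two values are ordered, so $\absv{\min\set{x,1} - \min\set{y,1}} = \min\set{y,1} - \min\set{x,1}$, and the goal becomes the inequality $\min\set{y,1} - \min\set{x,1} \leq y - x$. I would then split into three cases according to where $x$ and $y$ sit relative to the threshold $1$: if $y \leq 1$ both minima equal their arguments and the difference is exactly $y-x$; if $x \geq 1$ both minima equal $1$ and the difference is $0 \leq y-x$; and in the mixed case $x \leq 1 \leq y$ the difference is $1 - x$, which is at most $y - x$ precisely because $1 \leq y$.

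There is no genuine obstacle in this lemma: the entire content is the case $x \leq 1 \leq y$, where the only thing to verify is $1 - x \leq y - x$, i.e. $1 \leq y$. I would therefore keep the write-up to a few lines, favoring the ``minimum of Lipschitz functions'' phrasing for brevity, and fall back on the explicit three-case bound only if a fully self-contained argument is preferred. Either way the constant $1$ is tight, as already witnessed by the identity branch.
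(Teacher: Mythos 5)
Your proposal is correct and takes essentially the same route as the paper: a direct three-way case split on where $x$ and $y$ sit relative to the threshold $1$, with the ``minimum of 1-Lipschitz functions'' phrasing being only a cosmetic repackaging of that same argument. In fact your intermediate identities are slightly cleaner than the paper's own, in which the evaluated differences of the first and third cases are inadvertently swapped (the final bounds still hold), so nothing further is needed.
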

\begin{proof}
Let $x,y \in \setR$. Without loss of generality, assume $x \geq y$.
Then we have three cases to analyze:
\begin{itemize}
    \item If $x \geq y \geq 1$, then $\absv{\min \set{x, 1} - \min \set{y, 1}} = \absv{x-y} \leq \absv{x-y}$.
    \item If $x \geq 1 \geq y$, then $\absv{\min \set{x, 1} - \min \set{y, 1}} = \absv{1-y} \leq \absv{x-y}$.
    \item If $1 \geq x \geq y$, then $\absv{\min \set{x, 1} - \min \set{y, 1}} = \absv{1-1} = 0 \leq \absv{x-y}$.
\end{itemize}
In all three cases, 1-Lipschitzness is guaranteed.
\end{proof}
\begin{lemma}
\label{lemma:column_stochastic}
  Suppose $M \in \setR_+^{\Contributor \times \Contributor}$ is column-substochastic,
  i.e. $\sum_{\contributor} M_{\contributor \contributorbis} \leq 1$ for all column $\contributorbis \in \Contributor$ and $M_{\contributor \contributorbis} \geq 0$ for all $\contributor, \contributorbis \in \Contributor$.
  Then for any $x \in \setR^\Contributor$, we have $\norm{Mx}{1} \leq \norm{x}{1}$.
\end{lemma}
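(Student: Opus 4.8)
The plan is to unfold the definition of the $\ell^1$ norm, push the absolute value inside the matrix-vector sum via the triangle inequality, and then swap the order of summation so that the column sums of $M$ appear explicitly. The column-substochasticity hypothesis is exactly what is needed to bound each of those column sums by $1$, and this finishes the argument. No fixed-point or contraction machinery is required here; this is a purely algebraic estimate, and it will later serve as the key ingredient (together with Lemma~\ref{lemma:min_lipschitz}) in establishing the $\vouchdecay$-contractiveness used in Proposition~\ref{prop:trust}.

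Concretely, I would first write out $\norm{Mx}{1} = \sum_{\contributor} \absv{(Mx)_\contributor} = \sum_{\contributor} \absv{\sum_{\contributorbis} M_{\contributor \contributorbis} x_{\contributorbis}}$. Since all entries $M_{\contributor \contributorbis}$ are nonnegative, the triangle inequality yields
\begin{equation}
\norm{Mx}{1} \leq \sum_{\contributor} \sum_{\contributorbis} M_{\contributor \contributorbis} \absv{x_{\contributorbis}}.
\end{equation}
Because all summands are nonnegative, Tonelli/Fubini lets me exchange the two finite sums freely, so that the double sum equals $\sum_{\contributorbis} \absv{x_{\contributorbis}} \left( \sum_{\contributor} M_{\contributor \contributorbis} \right)$, isolating the $\contributorbis$-th column sum of $M$.

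At this point I would invoke the column-substochasticity assumption $\sum_{\contributor} M_{\contributor \contributorbis} \leq 1$, valid for every column $\contributorbis \in \Contributor$, to obtain
\begin{equation}
\norm{Mx}{1} \leq \sum_{\contributorbis} \absv{x_{\contributorbis}} \cdot 1 = \norm{x}{1},
\end{equation}
which is the desired conclusion. I do not expect any genuine obstacle in this lemma: the only points demanding a modicum of care are that the triangle inequality step relies on the nonnegativity of $M_{\contributor \contributorbis}$ (so that $\absv{M_{\contributor\contributorbis} x_\contributorbis} = M_{\contributor\contributorbis}\absv{x_\contributorbis}$ and the factor can be pulled out cleanly), and that the interchange of summation is justified because both index sets are finite. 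Everything else is a one-line estimate.
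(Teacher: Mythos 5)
Your proof is correct and follows exactly the same route as the paper's one-line argument: expand the $\ell_1$ norm, apply the triangle inequality using the nonnegativity of the entries, exchange the finite sums to isolate the column sums, and bound each column sum by $1$ via substochasticity. No differences worth noting.
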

\begin{proof}
$\norm{M x}{1}
= \sum_{\contributor} \absv{\sum_{\contributorbis} M_{\contributor \contributorbis} x_\contributorbis}
\leq \sum_{\contributor} \sum_{\contributorbis} M_{\contributor \contributorbis} \absv{x_\contributorbis}
= \sum_{\contributorbis} \left( \sum_{\contributor} M_{\contributor \contributorbis} \right) \absv{x_\contributorbis}
\leq \sum_{\contributorbis} \absv{x_\contributorbis} = \norm{x}{1}$.
\end{proof}
Since $\vouch$ is row-substochastic, we thus have $\norm{\vouch^T x}{1} \leq \norm{x}{1}$.
This allows us to prove Proposition~\ref{prop:trust}.
\begin{lemma}
\label{lemma:trust_contractive}
    The sequence $\interimtrust{\iteration}{}$ is $\vouchdecay$-contractive.
\end{lemma}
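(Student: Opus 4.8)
The plan is to show that the one-step update is a $\vouchdecay$-contraction in the $\ell_1$ norm; contractiveness of the sequence then follows immediately by applying this bound to consecutive iterates. First I would write the update compactly as $F(x) \triangleq \min \set{\pretrust{} + \vouchdecay \vouch^T x, 1}$, where the minimum is taken coordinatewise, so that $\interimtrust{\iteration+1}{} = F(\interimtrust{\iteration}{})$ and it suffices to bound $\norm{F(x) - F(y)}{1}$ for arbitrary $x, y$.

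Next I would compare $F(x)$ and $F(y)$ coordinate by coordinate. Setting $a \triangleq \pretrust{} + \vouchdecay \vouch^T x$ and $b \triangleq \pretrust{} + \vouchdecay \vouch^T y$, the $\contributor$-th coordinate of $F(x) - F(y)$ equals $\min \set{a_\contributor, 1} - \min \set{b_\contributor, 1}$. By the $1$-Lipschitzness of $t \mapsto \min \set{t, 1}$ (Lemma~\ref{lemma:min_lipschitz}), each coordinate obeys $\absv{\min \set{a_\contributor, 1} - \min \set{b_\contributor, 1}} \leq \absv{a_\contributor - b_\contributor}$. Summing over all coordinates lifts this coordinatewise estimate to the $\ell_1$ norm, giving $\norm{F(x) - F(y)}{1} \leq \norm{a - b}{1}$.

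The key observation is that the pretrust term, being input-independent, cancels in the difference, leaving $a - b = \vouchdecay \vouch^T (x - y)$, so $\norm{a-b}{1} = \vouchdecay \norm{\vouch^T(x-y)}{1}$. Since $\vouch$ is row-substochastic, $\vouch^T$ is column-substochastic, and Lemma~\ref{lemma:column_stochastic} yields $\norm{\vouch^T(x-y)}{1} \leq \norm{x-y}{1}$. Chaining these bounds gives $\norm{F(x) - F(y)}{1} \leq \vouchdecay \norm{x-y}{1}$, which is exactly $\vouchdecay$-contractiveness of $F$ and hence of the sequence $\interimtrust{\iteration}{}$.

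There is no genuine obstacle here; the argument is a routine combination of the two preceding lemmas. If anything, the only point requiring a moment's care is the lifting of the coordinatewise $1$-Lipschitz estimate to an $\ell_1$ estimate, together with noticing that the pretrust vector cancels in the difference — this cancellation is precisely what turns an affine, a priori non-contractive map into a genuine $\vouchdecay$-contraction once the substochasticity of $\vouch$ is invoked.
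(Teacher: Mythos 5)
Your proof is correct and follows essentially the same route as the paper's: coordinatewise $1$-Lipschitzness of $t \mapsto \min\set{t,1}$, cancellation of the pretrust term, and column-substochasticity of $\vouch^T$ to get the $\ell_1$ bound. The only (cosmetic) difference is that you phrase the contraction as a property of the update map $F$ on arbitrary inputs, whereas the paper instantiates the identical chain of inequalities directly on consecutive iterates $\interimtrust{\iteration+1}{} - \interimtrust{\iteration}{}$; your formulation is, if anything, slightly cleaner to reuse later (e.g., for the auxiliary sequence $x^\iteration$ in the security theorem).
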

\begin{proof}
Denote $\Delta \interimtrust{\iteration}{} \triangleq \interimtrust{\iteration + 1}{} - \interimtrust{\iteration}{}$.
Using Lemma~\ref{lemma:min_lipschitz}, for all $\contributor \in \Contributor$,
we have 
\begin{align}
    \absv{\Delta \interimtrust{\iteration}{\contributor}}
    &= \absv{\interimtrust{\iteration + 1}{\contributor} - \interimtrust{\iteration}{\contributor}} 
    = \absv{\min \set{\interimtrust{\iteration + 1/2}{\contributor}, 1} - \min \set{\interimtrust{\iteration - 1/2}{\contributor}, 1}}
    \leq \absv{\interimtrust{\iteration + 1/2}{\contributor} - \interimtrust{\iteration - 1/2}{\contributor}} \\
    &= \absv{ 
        \left( \pretrust{\contributor} + \vouchdecay \sum_{\contributorbis} \vouch_{\contributorbis \contributor} \interimtrust{\iteration}{\contributorbis} \right) 
        - \left( \pretrust{\contributor} + \vouchdecay \sum_{\contributorbis} \vouch_{\contributorbis \contributor} \interimtrust{\iteration - 1}{\contributorbis} \right) 
    } \\
    \label{eq:bound_delta_interimtrust}
    &= \vouchdecay \absv{ \sum_{\contributorbis \in \Contributor} \vouch_{\contributorbis \contributor} \Delta \interimtrust{\iteration-1}{\contributorbis}}
    \leq \vouchdecay \sum_{\contributorbis \in \Contributor} \vouch_{\contributorbis \contributor} \absv{ \Delta \interimtrust{\iteration-1}{\contributorbis}}
    = \vouchdecay (\vouch^T)_\contributor \absv{\Delta \interimtrust{\iteration - 1}{}},
\end{align}
where $\absv{x}$ of a vector $x$ is the vector whose coordinates are the absolute values of the coordinates of $x$.
It follows that
\begin{align}
    \norm{\Delta \interimtrust{\iteration}{}}{1}
    = \norm{\interimtrust{\iteration + 1}{} - \interimtrust{\iteration}{}}{1}
    = \sum_{\contributor \in \Contributor} \absv{\interimtrust{\iteration + 1}{\contributor} -  \interimtrust{\iteration}{\contributor}} 
    \leq \vouchdecay \norm{ \vouch^T \absv{\Delta \interimtrust{\iteration - 1}{}} }{1}.
\end{align} 
By Lemma~\ref{lemma:column_stochastic}, we thus have $\norm{\interimtrust{\iteration}{}}{1} \leq \vouchdecay \norm{\Delta \interimtrust{\iteration - 1}{}}{1}$.
This amounts to saying that $\interimtrust{\iteration}{}$ is $\vouchdecay$-contractive. 
\end{proof}
\begin{repproposition}{prop:trust}
There is $\trust{}$ such that
$\norm{\interimtrust{\iteration}{} - \trust{}}{1} \leq \card{\Contributor} \vouchdecay^\iteration$.
Moreover, $\trust{} = \min \set{ \pretrust{} + \vouchdecay \vouch^T \trust{}, 1 }$.
\end{repproposition}
\begin{proof}
Lemma~\ref{lemma:trust_contractive} implies its convergence to a limit $\trust{}$.
Taking the recursive defining equation of $\interimtrust{\iteration}{}$ to the limit yields
\begin{equation*}
    \trust{} = \min \set{
    \pretrust{} +
    \vouchdecay \vouch^T \trust{},
    1
    }.
\end{equation*}
Using again Lemma~\ref{lemma:min_lipschitz}, 
for $\iteration \geq 0$, we then have
\begin{equation}
    \norm{\interimtrust{\iteration + 1}{} - \trust{}}{1} 
    \leq \vouchdecay \norm{ 
        \left( \pretrust{} + \vouchdecay \vouch^T \interimtrust{\iteration}{} \right) 
        - \left( \pretrust{} + \vouchdecay \vouch^T \trust{} \right) 
    }{1} 
    \leq \vouchdecay \norm{\vouch^T \absv{\interimtrust{\iteration}{} - \trust{}}}{1}
\end{equation}
Using now Lemma~\ref{lemma:column_stochastic} then yields
\begin{equation}
    \norm{\interimtrust{\iteration + 1}{} - \trust{}}{1} 
    \leq \vouchdecay \norm{ \interimtrust{\iteration}{} - \trust{}}{1}.
\end{equation}
By a straightforward induction, it follows that
\begin{equation}
    \norm{\interimtrust{\iteration}{} - \trust{}}{1} 
    \leq \vouchdecay^\iteration \norm{ \interimtrust{0}{} - \trust{}}{1}
    = \vouchdecay^\iteration \norm{ \pretrust{} - \trust{} }{1}.
\end{equation}
Using triangle inequality yields
\begin{equation}
    \norm{\interimtrust{\iteration}{} - \trust{}}{1} 
    \leq \vouchdecay^\iteration \norm{ \pretrust{} - \trust{}}{1}
    = \vouchdecay^\iteration \sum_{\contributor \in \Contributor} \absv{ \pretrust{} - \trust{\contributor}}
    \leq \vouchdecay^\iteration \card{\Contributor},
\end{equation}
given that each term $\interimtrust{0}{\contributor} - \trust{\contributor}$ is necessarily in $[-1,1]$, 
and thus has an absolute value at most 1.
\end{proof}
\subsection{Proof of Theorem~\ref{th:trust_security}}
\label{app:byztrust}
Intuitively the worst case holds when the Byzantine vouches for a large number of fake accounts without certified email or authentic vouches,
which themselves each vouch for a large number of fake accounts without certified email or authentic vouches, and so on.
Our formal proof instead relies on the study of the sequence $x^\iteration$,
which is defined by initialization $x^0 \triangleq \trust{}^{-\contributor}$ (where contributor $\contributor$ is the one whose influence will be studied)
and recursion $x^{\iteration+1} \triangleq \min \left\lbrace \pretrust{} + \vouchdecay \vouch^T x^{\iteration} \right\rbrace$ (the same vector recursion as for $\interimtrust{}{}$).
This sequence intuitively corresponds to the computation of $\trust{}$, starting from the case where contributor $\contributor$ vouched only for the sink.
\begin{lemma}
\label{lemma:x_convergence}
    $x^\iteration \rightarrow \trust{}$ as $\iteration \rightarrow \infty$.
\end{lemma}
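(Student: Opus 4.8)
The plan is to recognize that $x^\iteration$ is generated by iterating exactly the same map that defines $\interimtrust{\iteration}{}$, merely from a different starting point, and then to invoke the contraction argument already established for Proposition~\ref{prop:trust}. First I would introduce the iteration map $F(x) \triangleq \min\set{\pretrust{} + \vouchdecay \vouch^T x, 1}$, so that both sequences satisfy $x^{\iteration+1} = F(x^\iteration)$ and $\interimtrust{\iteration+1}{} = F(\interimtrust{\iteration}{})$; they differ only in their initializations, namely $x^0 = \trust{}^{-\contributor}$ versus $\interimtrust{0}{} = \pretrust{}$.

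The key point, which is the engine of the whole argument, is that $F$ is $\vouchdecay$-Lipschitz in the $\ell^1$ norm. This is precisely what the proof of Lemma~\ref{lemma:trust_contractive} established: combining the $1$-Lipschitzness of $x \mapsto \min\set{x,1}$ (Lemma~\ref{lemma:min_lipschitz}) with the non-expansiveness of $\vouch^T$ (Lemma~\ref{lemma:column_stochastic}) yields $\norm{F(x) - F(y)}{1} \leq \vouchdecay \norm{x - y}{1}$ for all $x, y \in \setR^\Contributor$, a bound that makes no reference to the initial point.

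Next I would recall from Proposition~\ref{prop:trust} that $\trust{}$ is the fixed point of $F$, i.e.\ $F(\trust{}) = \trust{}$. Applying the Lipschitz bound with $y = \trust{}$ gives $\norm{x^{\iteration+1} - \trust{}}{1} = \norm{F(x^\iteration) - F(\trust{})}{1} \leq \vouchdecay \norm{x^\iteration - \trust{}}{1}$, and a straightforward induction then produces $\norm{x^\iteration - \trust{}}{1} \leq \vouchdecay^\iteration \norm{x^0 - \trust{}}{1}$. Since $\vouchdecay \in (0,1)$, the right-hand side tends to $0$, whence $x^\iteration \to \trust{}$.

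There is essentially no obstacle here: the only thing to notice is that the contraction used for Proposition~\ref{prop:trust} was really a statement about the map $F$ alone, and therefore applies verbatim to any initialization, in particular to $x^0 = \trust{}^{-\contributor}$. The content of the lemma is thus purely that this modified starting point does not affect the limit, which is exactly what the subsequent proof of Theorem~\ref{th:trust_security} will exploit when it bounds $\norm{\trust{} - \trust{}^{-\contributor}}{1}$ by controlling the first step from $x^0$ to $x^1$ and summing the resulting geometric tail.
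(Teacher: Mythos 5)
Your proof is correct and follows essentially the same route as the paper: both arguments rest on the $\vouchdecay$-contraction of the common iteration map (Lemma~\ref{lemma:trust_contractive}) together with the fixed-point equation for $\trust{}$ from Proposition~\ref{prop:trust}. The only cosmetic difference is that the paper concludes via ``the sequence converges and the recursion has a unique limit,'' whereas you contract directly toward the known fixed point, which makes the geometric rate $\vouchdecay^\iteration \norm{x^0 - \trust{}}{1}$ explicit.
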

\begin{proof}
Lemma~\ref{lemma:trust_contractive} actually applies to $x^\iteration$ as well, since it only leverages the recursion of $\interimtrust{\iteration}{}$,
which is the same as that of $x^\iteration$.
This not only implies that $x^\iteration$ converges, but also that the recursion yields a unique limit.
Thus $x^\iteration$ has the same limit as $\interimtrust{\iteration}{}$,
which is $\trust{}$.
\end{proof}
\begin{lemma}
\label{lemma:bound_first_difference}
    $\norm{x^1 - x^0}{1} \leq \beta \trust{\contributor}^{-\contributor}$.
\end{lemma}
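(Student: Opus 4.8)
The plan is to reduce the difference $x^1 - x^0$ to a one-row perturbation of the vouch matrix. The key observation is that $x^0 = \trust{}^{-\contributor}$ is, by Proposition~\ref{prop:trust} applied to the network with $\contributor$'s vouches removed, a fixed point of the recursion driven by the \emph{reduced} matrix $\vouch^{-\contributor}$, i.e. $x^0 = \min\set{\pretrust{} + \vouchdecay (\vouch^{-\contributor})^T x^0, 1}$, whereas $x^1 = \min\set{\pretrust{} + \vouchdecay \vouch^T x^0, 1}$ applies one step of the recursion driven by the \emph{full} matrix $\vouch$. Hence the entire gap between $x^1$ and $x^0$ is governed by $\vouch - \vouch^{-\contributor}$.

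First I would record the structural fact that, because of the sink-vouch normalization, removing $\contributor$'s vouches only zeros out row $\contributor$ of $\vouch$ and leaves every other row unchanged: the denominator $\sinkvouch + \card{\set{k \st (\contributorter, k) \in \Vouch}}$ that normalizes row $\contributorter$ depends only on $\contributorter$'s own outgoing vouches, so deleting the pairs $(\contributor, \cdot)$ does not touch the rows of other vouchers. Consequently $A \triangleq \vouch - \vouch^{-\contributor}$ has a single nonzero row, namely row $\contributor$, equal to $\contributor$'s vouch weights $\vouch_{\contributor \cdot}$.

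Next, applying the coordinatewise $1$-Lipschitzness of $x \mapsto \min\set{x,1}$ (Lemma~\ref{lemma:min_lipschitz}) and summing over coordinates — the $\pretrust{}$ terms cancel in the difference — I would bound
\begin{equation*}
  \norm{x^1 - x^0}{1}
  \leq \vouchdecay \norm{(\vouch - \vouch^{-\contributor})^T x^0}{1}
  = \vouchdecay \norm{A^T x^0}{1}.
\end{equation*}
Since $A$ has only row $\contributor$ nonzero, $A^T$ has only column $\contributor$ nonzero, so $(A^T x^0)_{\contributorbis} = \vouch_{\contributor \contributorbis}\, x^0_\contributor = \vouch_{\contributor \contributorbis}\, \trust{\contributor}^{-\contributor}$. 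Taking the $\ell_1$ norm and using that $\vouch$ is row-substochastic ($\sum_{\contributorbis} \vouch_{\contributor \contributorbis} \leq 1$) gives $\norm{A^T x^0}{1} = \trust{\contributor}^{-\contributor} \sum_{\contributorbis} \vouch_{\contributor \contributorbis} \leq \trust{\contributor}^{-\contributor}$, and hence $\norm{x^1 - x^0}{1} \leq \vouchdecay \trust{\contributor}^{-\contributor}$, which is the claim.

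The only delicate point, and the one I would state most carefully, is the structural claim that deleting $\contributor$'s vouches leaves all other rows of $\vouch$ intact; everything after that is a direct computation using row-substochasticity and the $1$-Lipschitz clipping already established. In particular, no contractiveness argument is needed for this lemma — that is reserved for combining it with Lemma~\ref{lemma:x_convergence} to push the one-step bound to the limit and obtain Theorem~\ref{th:trust_security}.
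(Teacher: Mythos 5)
Your proof is correct and takes essentially the same route as the paper's: both compare $x^1$ against the characteristic equation $x^0 = \trust{}^{-\contributor} = \min \set{ \pretrust{} + \vouchdecay (\vouch^{-\contributor})^T \trust{}^{-\contributor}, 1 }$, apply the $1$-Lipschitzness of $x \mapsto \min \set{x, 1}$ so that the clipping and pretrust terms drop out, note that $\vouch - \vouch^{-\contributor}$ is supported on row $\contributor$ alone, and conclude by row-substochasticity of $\vouch$. The only cosmetic difference is that you derive the single-row-perturbation fact from the sink-vouch normalization, whereas the paper simply takes the zeroing of row $\contributor$ as the definition of $\vouch^{-\contributor}$.
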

\begin{proof}
    Denote $\vouch^{-\contributor}$ the vouch matrix obtained from $\vouch$, 
    but modified to reflect the fact that contributor $\contributor$ only vouches for the sink, i.e.
    \begin{equation}
        \vouch^{-\contributor}_{\contributorbis k} \triangleq \left\lbrace 
        \begin{array}{ll}
            \vouch_{\contributorbis k} & \text{if } \contributorbis \neq \contributor,  \\
            0 & \text{if } \contributorbis = \contributor.
        \end{array}
        \right.
    \end{equation}
    The characteristic equation verified by $\trust{}^{-\contributor}$ is then given by
    \begin{equation}
        \trust{}^{-\contributor} = \min \set{ \pretrust{} + \vouchdecay (\vouch^{-\contributor})^T \trust{}^{-\contributor}, 1 }.
    \end{equation}
    But then we can bound the difference between $x^1$ and $x^0 = \trust{}^{-\contributor}$ as (using also Proposition~\ref{prop:trust} and Lemma~\ref{lemma:min_lipschitz})
    \begin{align}
        \norm{x^1 - x^0}{1}
        &= \sum_{\contributorbis \in \Contributor} \absv{ 
            \min \set{\pretrust{\contributorbis} + \vouchdecay \sum_{k \in \Contributor} \vouch_{k \contributorbis} \trust{k}^{-\contributor}, 1} 
            - \min \set{\pretrust{\contributorbis} + \vouchdecay \sum_{k \in \Contributor} \vouch_{k \contributorbis}^{-\contributor} \trust{k}^{-\contributor}, 1} 
        } \\
        &\leq \sum_{\contributorbis \in \Contributor} \absv{ 
            \left( \pretrust{\contributorbis} + \vouchdecay \sum_{k \in \Contributor} \vouch_{k \contributorbis} \trust{k}^{-\contributor} \right)
            - \left( \pretrust{\contributorbis} + \vouchdecay \sum_{k \in \Contributor} \vouch_{k \contributorbis}^{-\contributor} \trust{k}^{-\contributor} \right)
        } \\
        &= \sum_{\contributorbis \in \Contributor} \absv{ 
            \vouchdecay \vouch_{\contributor \contributorbis} \trust{\contributor}^{-\contributor}
        } 
        = \vouchdecay \trust{\contributor}^{-\contributor} \sum_{\contributorbis \in \Contributor} \vouch_{\contributor \contributorbis} 
        \leq \vouchdecay \trust{\contributor}^{-\contributor},
    \end{align}
    which is the lemma.
\end{proof}
\begin{reptheorem}{th:trust_security}
Denoting $\trust{}^{-\contributor}$ the trusts computed after removing $\contributor$'s vouches (and thus having $\contributor$ vouching only for the sink), 
\begin{equation}
    \sum_{\contributorbis \in \Contributor} \absv{\trust{\contributorbis} - \trust{\contributorbis}^{-\contributor}} \leq \frac{\vouchdecay}{1-\vouchdecay} \trust{\contributor}^{-\contributor}.
\end{equation}
\end{reptheorem}
\begin{proof}
    Using Lemma~\ref{lemma:bound_first_difference}, $\vouchdecay$-contractiveness and induction, 
    it follows that $\norm{x^{\iteration + 1} - x^\iteration}{1} \leq \vouchdecay^\iteration \norm{x^1 - x^0}{1} \leq \vouchdecay^{\iteration + 1} \trust{\contributor}^{-\contributor}$.
    But then,
    \begin{equation}
        \norm{x^{\iteration + 1} - x^0}{1}
        \leq \sum_{s = 0}^\iteration \norm{x^{s+1} - x^s}{1}
        \leq \sum_{s = 0}^\iteration \vouchdecay^{s + 1} \trust{\contributor}^{-\contributor}
        \leq \trust{\contributor}^{-\contributor} \sum_{s = 0}^\infty \vouchdecay^{s + 1}
        = \frac{\vouchdecay}{1-\vouchdecay} \trust{\contributor}^{-\contributor}.
    \end{equation}
    Taking this inequality to the limit $\iteration \rightarrow \infty$ and invoking Lemma~\ref{lemma:x_convergence} yields the theorem.
\end{proof}
\subsection{Other propositions}
\label{app:lipschitrust_propositions}
\begin{repproposition}{prop:byztrust_monotone}
  For all contributors $\contributor$, $\trust{} \geq \trust{}^{-\contributor}$.
\end{repproposition}
\begin{proof}
    We consider the same notations as in the proof of Lemma~\ref{lemma:bound_first_difference}.
    Since $\vouch \geq \vouch^{-\contributor}$, it is clear that $x^1 \geq x^0$.
    But then, by induction, given $\vouch \geq \vouch^{-\contributor}$ and $x^t \geq x^0$,
    it follows from the recursion of $x^t$ and the characteristic equation of $\trust{}^{-\contributor}$ that $x^{t+1} \geq x^0$.
    Taking this to the limit yields the proposition.
\end{proof}
\begin{repproposition}{prop:positive_trust}
    A contributor has a positive trust score, if and only if, there is a vouch path from a pretrusted contributor to them.
\end{repproposition}
\begin{proof}
    Consider a sequence $\contributor_0, \contributor_1, \ldots, \contributor_k$ 
    such that $\contributor_0$ is pretrusted and 
    such that $\contributor_s$ has vouched for $\contributor_{s+1}$ for all $s \in \set{0, 1, \ldots, k-1}$.
    By induction over $s$, let us prove that, for all $\iteration \geq s$, $\interimtrust{\iteration}{\contributor_s} \geq \vouchdecay^s \pretrust{\accept} \prod_{r=0}^{s-1} \vouch_{\contributor_{r} \contributor_{r+1}}$.
    \begin{itemize}
        \item For $s = 0$, this trivially follows from the fact that $\interimtrust{\iteration}{\contributor_0} 
        \geq \min \set{\pretrust{\contributor_0}, 1} 
        = \min \set{\pretrust{\accept}, 1} 
        = \pretrust{\accept}$.
        \item Now assume that it holds for $s$, and let $\iteration \geq s+1$.
        Then $\interimtrust{\iteration}{\contributor_{s+1}} \geq \min \set{ \vouchdecay \vouch_{\contributor_s \contributor_{s+1}} \interimtrust{\iteration - 1}{s}, 1 }$.
        Leveraging our induction assumption allows to close the inductive proof.
    \end{itemize}
    Now taking the induction guarantee to the limit $\iteration \rightarrow \infty$ implies $\trust{\contributor_s}{} \geq \vouchdecay^s \pretrust{\accept} \prod_{r=0}^{s-1} \vouch_{\contributor_{r} \contributor_{r+1}} > 0$.
    Thus, if there is a vouch path from a pretrusted contributor to a contributor, then this latter contributor has a positive trust score.
    Conversely, let us prove that, in the absence of such a vouch path, the contributor has a zero trust score.
    More precisely, define $\Contributor_{\reject \reject}$ the set of contributors that are not pretrusted, and for which no path exists from pretrusted contributors to them.
    We prove by induction over $\iteration$ that $\forall \contributor \in \Contributor_{\reject \reject} \mathsep \interimtrust{\iteration}{\contributor} = 0$.
    \begin{itemize}
        \item For $\iteration = 0$, this trivially follows from the fact that contributors in $\Contributor_{\times \times}$ are not pretrusted.
        \item Assume it holds for iteration $\iteration$. 
        Note that all contributors $\contributor \in \Contributor_{\reject \reject}$ are necessarily only vouched by other contributors in $\Contributor_{\reject \reject}$, since, otherwise, there would be a path from a pretrusted contributor to them.
        As a result, if $\vouch_{\contributorbis \contributor} > 0$, then $\contributorbis \in \Contributor_{\reject \reject}$, 
        and thus, by induction $\interimtrust{\iteration}{\contributorbis} = 0$.
        But then, $(\vouch^T)_\contributor \interimtrust{\iteration}{} = 0$.
        Since $\pretrust{\contributor} = 0$, computing the recursion then yields $\interimtrust{\iteration + 1}{\contributor} = 0$.
        This concludes the induction.
    \end{itemize}
    Taking $\forall \contributor \in \Contributor_{\reject \reject} \mathsep \interimtrust{\iteration}{\contributor} = 0$ then yields the other implication of the proposition.
\end{proof}
\section{Generalized Bradley-Terry model}
\label{app:betabt}
We rely on the generalized Bradley-Terry (GBT) model that has been recently introduced in~\cite{GBT2023}. 
In this model, the conditional law of the comparisons $\normalizedcomparison_{\alternative\alternativebis}$ with respect to the score vector $\contributorscore$ is of the form
\begin{equation}
 \pdf{\normalizedcomparison_{\alternative\alternativebis} | \contributorscore} = \frac{ \exp( - \contributorscore_{\alternative \alternativebis} \normalizedcomparison_{\alternative \alternativebis})
    \pdf{\normalizedcomparison_{\alternative \alternativebis} |  0}}
    {\int_{-1}^1  \exp(- \contributorscore_{\alternative \alternativebis} s)
    \pdf{s | 0} \mathrm{d}s}
\end{equation}
for some fixed probability law $\pdf{\normalizedcomparison|0}$ which corresponds to the probability law of a comparison for null scores.
The denominator ensures that $ \pdf{\normalizedcomparison_{\alternative\alternativebis} | \contributorscore}$ is a pdf with integral $1$. 
The function $\contributorscore \mapsto \int_{\mathbb{R}} \exp(\contributorscore s)
    \pdf{s | 0} \mathrm{d}s$ is the moment generating function of $\normalizedcomparison_{\alternative\alternativebis} | \contributorscore_{\alternative\alternativebis} = 0$. Its logarithm is the cumulant generating function given by
    $\Phi(\contributorscore) = \log  ( \int_{-1}^1 \exp(- \contributorscore s)
    \pdf{s | 0} \mathrm{d}s  )$. 
    The cumulant generating function  $\Phi$ is known to be strictly convex~\cite{GBT2023}. Its expression is moreover known for many probability law $ \pdf{s | 0}$. 
    In this paper, we consider the case where $ \pdf{s | 0} = 1/2$, i.e. the uniform law. This corresponds to 
    \begin{equation}
                \Phi(\contributorscore) = \log \left( \frac{1}{2} \int_{-1}^1 \exp( - \contributorscore s)
\mathrm{d}s \right) = \log \frac{\sinh(\theta)}{\theta}.
    \end{equation}
    Using this relation, we recover the expression~\eqref{eq:contributorloss}. 
\paragraph{The spring analogy}
To provide more insights into our model, a physical analogy may be useful.
Interestingly, the loss of Bayesian GBT models corresponds precisely to the total energy of a physical system, where each alternative is a pebble put in a parabolic valley, and where (non-Hookean) springs of length at rest $t^*_{\alternative \alternativebis}$ (which verifies $\normalizedcomparison_{\alternative \alternativebis} = \frac{1}{t^*_{\alternative \alternativebis}} - \coth{} (t^*_{\alternative \alternativebis})$) are used to attach alternatives $\alternative$ to $\alternativebis$.
A pebblenet would then be obtained, where pebbles naturally fall towards the bottom of the valley, but where springs may push them away from the bottom.
If pebbles are then forced to lie along a one-dimensional line of the valley, then their equilibrium position along this line would correspond to the scores assigned to the alternatives when minimizing the contributor's loss function $\contributorloss{} (\contributorscore | \normalizedcomparison)$.
\section{Primitives and implementation details}
\label{app:primitives}
To satisfy provable security guarantees, 
our algorithms repeatedly leverage a set of key Lipschitz-resilient primitives, 
many of which are adapted from~\cite{AllouahGHV22}.
In particular, we generalize them to account for \emph{asymmetric uncertainties} on the inputs.
But first, let us recall what is precisely meant by $\lipschitz$-Lipschitz resilience.
\begin{definition}[\cite{AllouahGHV22}]
    Consider a function $f : \setR_+^\Contributor \times Z^\Contributor \rightarrow X$ 
    that maps a pair $(\votingright, z)$ of contributors' voting rights $\votingright_\contributor$ and of their inputs $z_n$
    to outputs $f(\votingright, z)$ that lie in a metric space $(X, d_X)$.
    We say that $f$ is $(\lipschitz, d_X)$-Lipschitz resilient 
    if it is $\lipschitz$-Lipschitz continuous in voting rights, under $\ell_1$ norm for the voting rights, i.e.
    \begin{equation}
        \forall \votingright, \votingright' \in \setR_+^\Contributor \mathsep
        \forall z \in Z^\Contributor \mathsep
        d_X(f(\votingright, z), f(\votingright', z)) 
        \leq \lipschitz \norm{w - w'}{1}.
    \end{equation}
\end{definition}
In the context of Tournesol, $z_\contributor$ is tyically the comparison tensor $\comparison_\contributor$ reported by the contributor.
But it may also be quantities derived from the tensor, such as scores with uncertainties.
Note that this definition clearly implies that the maximal influence of a contributor $\contributor$
is proportional to its voting right.
Indeed, measuring this influence amounts to comparing the effective voting rights $\votingright$
to the ones obtained by only canceling the contributor $\contributor$'s voting right,
i.e. $\votingright'_\contributor \triangleq 0$ and $\votingright'_\contributorbis \triangleq \votingright_\contributorbis$ for $\contributorbis \neq \contributor$.
\subsection{Asymmetric Huber loss}
Note that the asymmetric Huber loss is continuous and twice continuously differentiable in $\setR$, 
with a derivative given by
\begin{equation}
    \frac{d}{dm} \huber_\alpha (m|x, \uncertainty) \triangleq \left\lbrace
    \begin{array}{ll}
        \min \set{ 1, \frac{1-\alpha}{\alpha}} \frac{m-x}{ \sqrt{\uncertainty_{left}^2 + (x-m)^2 } } & \text{for } m \leq x,  \\
        \min \set{ 1, \frac{\alpha}{1-\alpha}} \frac{m-x}{ \sqrt{\uncertainty_{right}^2 + (x-m)^2 } } & \text{for } m \geq x.
    \end{array}
    \right.
\end{equation}
It is also easy to see that the second derivative is nonnegative, 
which proves that the asymmetric Huber loss is convex.
In particular, its unique minimum is reached at $m=x$.
\subsection{The quadratically regularized median \qrmed{}~\cite{AllouahGHV22}}
\paragraph{Definition.}
The key primitive used by \mehestan{} is the \emph{quadratically regularized median} \qrmed{},
which we generalize to account for asymmetric uncertainties. 
Namely, given voting rights $\votingright \in \setR_+^\Contributor$, 
partial scores $x \in (\setR \cup \set{\perp})^\Contributor$ 
and partial score left and right uncertainties $\uncertainty_{left}, \uncertainty_{right} \in (\setR \cup \set{\perp})^\Contributor$,
we define their $\lipschitz$-quadratically regularized median by
\begin{equation}
\label{eq:qrmed}
  \qrmed{}_\lipschitz (\votingright, x, \uncertainty)
  \triangleq \argmin{m \in \setR}{
    \qrmedloss{\lipschitz} (m | \votingright, x, \uncertainty) \triangleq \frac{m^2}{2 \lipschitz} + \sum_{\contributor : x_\contributor \neq \perp} \votingright_\contributor \huber_{1/2}(m|x_\contributor, \uncertainty_\contributor)
  }.
\end{equation}
This operator can be made to ignore some contributors.
More precisely, if we want to restrict it to the subset
$\EmailVerifiedContributor$ of email-verified contributors,
then we may write it as
$\qrmed{}_\lipschitz (\votingright_\contributor, x_\contributor, \uncertainty_\contributor | \contributor \in \EmailVerifiedContributor)$.
The key safety property of \qrmed{} is $\lipschitz$-Byzantine resilience, which we redefine here.
\begin{proposition}[adapted from \cite{AllouahGHV22}]
\label{prop:qrmed}
  $\qrmed{}_\lipschitz$ is well-defined, $\lipschitz$-Lipschitz resilient and $1$-Lipschitz continuous in $x$ with respect to the $\ell_\infty$ norm.
\end{proposition}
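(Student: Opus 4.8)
The plan is to exploit the strong convexity of the objective in~\eqref{eq:qrmed} together with a uniform bound on the per-contributor gradients. \textbf{Well-definedness.} The map $m \mapsto \qrmedloss{\lipschitz}(m \mid \votingright, x, \uncertainty)$ is the sum of the strictly convex, coercive quadratic $m^2/(2\lipschitz)$ and of the terms $\votingright_\contributor \huber_{1/2}(m \mid x_\contributor, \uncertainty_\contributor)$, each convex (its second derivative is nonnegative, as recorded above) and nonnegatively weighted since $\votingright_\contributor \geq 0$. Hence the objective is strictly convex and coercive, so a unique global minimizer exists and $\qrmed_\lipschitz$ is well-defined; moreover the quadratic term makes the objective $1/\lipschitz$-strongly convex in $m$, which is the single fact I will use repeatedly.

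\textbf{Lipschitz resilience in $\votingright$.} Write $g_\contributor(m) \triangleq \frac{d}{dm}\huber_{1/2}(m \mid x_\contributor, \uncertainty_\contributor)$. From the derivative formula of the asymmetric Huber loss, the numerator $\absv{m - x_\contributor}$ is dominated by the side's denominator, so $\absv{g_\contributor(m)} \leq 1$ for every $m$; this is exactly the paper's ``key'' observation that a contributor's term has derivative at most $\votingright_\contributor$ in magnitude. Fix $x, \uncertainty$ and let $m, m'$ be the minimizers for $\votingright, \votingright'$. Both satisfy the stationarity condition, and since the quadratic term does not depend on $\votingright$, the difference of the two gradients evaluated at $m'$ equals $\sum_{\contributor} (\votingright_\contributor - \votingright'_\contributor) g_\contributor(m')$. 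Feeding this into the $1/\lipschitz$-strong-convexity inequality $\frac{1}{\lipschitz}(m - m')^2 \leq \big(\qrmedloss{\lipschitz}'(m' \mid \votingright) - \qrmedloss{\lipschitz}'(m' \mid \votingright')\big)(m' - m)$ (the right-hand side being $\qrmedloss{\lipschitz}'(m' \mid \votingright)\,(m'-m)$, as $\qrmedloss{\lipschitz}'(m'\mid\votingright')=0$) and bounding $\big|\sum_\contributor (\votingright_\contributor - \votingright'_\contributor) g_\contributor(m')\big| \leq \norm{\votingright - \votingright'}{1}$ yields, after dividing by $\absv{m - m'}$, the claimed $\absv{m - m'} \leq \lipschitz \norm{\votingright - \votingright'}{1}$.

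\textbf{$1$-Lipschitz continuity in $x$ under $\ell_\infty$.} Here the strong-convexity bound is too lossy (its constant would scale with the voting rights), so I argue instead by translation and monotonicity. First, $\qrmed_\lipschitz$ is nondecreasing in each $x_\contributor$: raising $x_\contributor$ lowers the stationary derivative at any fixed $m$ (since $g_\contributor$ is increasing), pushing its unique root upward. Second, each Huber term depends on $x_\contributor$ only through $m - x_\contributor$; hence if all inputs are translated by a common $\delta \geq 0$, then evaluating the stationary derivative of the shifted problem at $m_0 + \delta$, where $m_0 \triangleq \qrmed_\lipschitz(\votingright, x, \uncertainty)$, gives exactly $\delta/\lipschitz \geq 0$, so convexity forces $\qrmed_\lipschitz(\votingright, x + \delta \ones, \uncertainty) \leq m_0 + \delta$, and symmetrically $\qrmed_\lipschitz(\votingright, x - \delta\ones, \uncertainty) \geq m_0 - \delta$. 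Given any $x'$ with $\norm{x - x'}{\infty} = \delta$ we have $x - \delta\ones \leq x' \leq x + \delta\ones$ coordinatewise, so monotonicity sandwiches $\qrmed_\lipschitz(\votingright, x', \uncertainty)$ between $m_0 - \delta$ and $m_0 + \delta$, i.e. $\absv{\qrmed_\lipschitz(\votingright, x', \uncertainty) - m_0} \leq \norm{x - x'}{\infty}$.

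\textbf{Main obstacle and loose ends.} The delicate point is obtaining the $x$-continuity constant of exactly $1$, independent of the voting rights: this hinges on the exact cancellation captured by the translation argument (equivalently, an implicit-function-theorem computation showing the sensitivities $\partial m_0/\partial x_\contributor$ are nonnegative and sum to $S/(1/\lipschitz + S) \leq 1$, with $S \triangleq \sum_\contributor \votingright_\contributor g_\contributor'(m_0)$). The rest is routine: entries with $x_\contributor = \perp$ are simply excluded from the sum and never enter the estimates; all inequalities extend to the boundary $\votingright_\contributor = 0$ by continuity of the minimizer; and every gradient manipulation is legitimate because the asymmetric Huber loss is $C^2$ with nonnegative second derivative, as established earlier.
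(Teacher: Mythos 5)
Your proof is correct, but it takes a genuinely different route from the paper's, because the paper does not prove this proposition from scratch at all: its proof observes that the argument in \cite{AllouahGHV22} (given there without uncertainties, i.e.\ $\uncertainty_\contributor \rightarrow 0$) hinges only on the fact that the per-contributor term of $\qrmedloss{\lipschitz}'$ is at most $\votingright_\contributor$ in absolute value, notes that this bound survives the generalization to asymmetric uncertainties, and concludes that the cited proof carries over. You instead reconstruct the whole argument: well-definedness from strict convexity and coercivity; resilience from $1/\lipschitz$-strong convexity combined with stationarity at the two minimizers (your bound $\absv{g_\contributor(m)} \leq 1$ is precisely the ``key property'' the paper isolates); and the $\ell_\infty$ part from translation comparison plus coordinatewise monotonicity of the minimizer. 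What each buys: the paper's proof is two sentences but is only as verifiable as the reader's access to the cited proof, and it never explains why the constant $1$ in the $x$-continuity is independent of the voting rights; your translation/monotonicity argument makes that transparent, and you correctly note that naive strong convexity would give a constant scaling with $\norm{\votingright}{1}$ instead. Two small caveats on your side: your strong-convexity display is stated as a difference of gradients of two different objectives at the same point $m'$, whereas what you actually use is strong monotonicity of $\qrmedloss{\lipschitz}'(\cdot \mid \votingright)$ between $m$ and $m'$ together with $\qrmedloss{\lipschitz}'(m'\mid\votingright') = 0$ --- the chain is right but should be stated in that order; and when some $\uncertainty_\contributor = 0$ the Huber term degenerates to $\absv{m - x_\contributor}$ and is not differentiable at $x_\contributor$ (so not $C^2$, despite the paper's own claim elsewhere), but your manipulations survive with subgradients, since you only use convexity, the bound $\absv{g_\contributor} \leq 1$, and monotonicity of the (sub)derivative.
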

\begin{proof}
    Note that~\cite{AllouahGHV22} discusses \qrmed{} without uncertainty, which essentially corresponds to $\Delta_\contributor \rightarrow 0$.
    However, their proof relies on controlling the derivative of $\qrmedloss{\lipschitz}' (m | \votingright, x, \uncertainty)$,
    and in particular on the fact that, if we focus only the term of $\qrmedloss{\lipschitz}' (m | \votingright, x, \uncertainty)$ that corresponds to a contributor $\contributor$,
    then the derivative is at most $\votingright_\contributor$ in absolute value.
    These key properties still hold for our generalization of $\qrmed{}$.
    Thus their proof applies to our setting as well.
\end{proof}
\qrmed{} has another desirable property.
Namely, for any $\varepsilon > 0$, when the number of contributors with unit voting right is large enough (especially compared to $\lipschitz$), then $\qrmed{}$ returns a value that lies between the quantiles $1/2-\varepsilon$ and $1/2+\varepsilon$.
In the limit of a very large number of contributors, \qrmed{} thus behaves essentially like a median.
\paragraph{Computation.}
Interestingly, \qrmed{} can be provably efficiently computed through dichotomic search.
In practice, however, we run the Brent-Dekker method~\cite{dekker69,brent71}, which provides better empirical performances,
despite the lack of theoretical guarantee,
by leveraging the smoothness of $\qrmedloss{}$.
Below we briefly describe this method, which relies on updating a lower bound $\underline{m}$ and an upper bound $\overline{m}$,
by leveraging the value of the derivative of the loss of Equation~\eqref{eq:qrmed}.
This derivative is given by
\begin{equation}
\label{eq:qrmed_derivative}
  \qrmedloss{\lipschitz}'(m | \votingright, x, \uncertainty) =  \frac{m}{\lipschitz} + \sum_{\contributor : x_\contributor \neq \perp} \votingright_\contributor \frac{ m - x_\contributor }{\sqrt{ \uncertainty_{\contributor, left}^2 {\bf 1}[m \leq x] + \uncertainty_{\contributor, right}^2 {\bf 1}[m>x] + ( x_\contributor - m )^2 }}.
\end{equation}
The Brent-Dekker algorithm is implemented in the python library\footnote{For Tournesol, we set $\varepsilon \triangleq 0.01$.} \texttt{scipy.optimize.brentq}, which we adapted (so that we can run it with \texttt{numba} for better performances).
\begin{proposition}
  The initialization step takes $\bigO \left(\log \left( 1+\absv{\qrmed{}_\lipschitz (\votingright, x, \uncertainty)} \right) \right)$ iterations.
  The dichotomic step takes $\bigO (\log (1/\varepsilon))$ iterations.
  In both cases, each iteration requires computing $\qrmedloss{\lipschitz}'$, which requires $\bigO(\card{\Contributor})$ steps.
\end{proposition}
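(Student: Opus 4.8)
The plan is to treat the Brent--Dekker procedure as a root-finding method for the map $g(m) \triangleq \qrmedloss{\lipschitz}'(m \mid \votingright, x, \uncertainty)$ of Equation~\eqref{eq:qrmed_derivative}, and to exploit that $g$ is continuous and \emph{strictly increasing}. The first step I would establish is this monotonicity: differentiating once more, the quadratic term contributes $1/\lipschitz$ to $\qrmedloss{\lipschitz}''$, while each asymmetric Huber term is convex (its second derivative is nonnegative, as recalled above), so $\qrmedloss{\lipschitz}''(m) \geq 1/\lipschitz > 0$ everywhere. Since moreover $g(m) - m/\lipschitz = \sum_\contributor \votingright_\contributor\,(\text{factor in }[-1,1])$ is bounded in absolute value by $\sum_\contributor \votingright_\contributor$, the function $g$ diverges to $\pm\infty$ as $m \to \pm\infty$, so it has a unique zero, which is exactly $m^\star \triangleq \qrmed{}_\lipschitz(\votingright, x, \uncertainty)$, and this zero satisfies $\absv{m^\star} \leq \lipschitz \sum_\contributor \votingright_\contributor < \infty$. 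Crucially, for every probe point $m$ the sign of $g(m)$ then decides unambiguously whether $m < m^\star$ or $m > m^\star$; this is what makes both bracketing and dichotomy correct.

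For the initialization bound I would use exponential (doubling) bracketing. Evaluate $g(0)$; if nonzero its sign fixes the side of the root, so assume without loss of generality $m^\star > 0$. Probe $g$ at $2^0, 2^1, 2^2, \dots$ and let $k$ be the first index with $g(2^k) \geq 0$. Strict monotonicity forces $2^{k-1} \leq m^\star \leq 2^k$, so $k \leq 1 + \log_2(1 + \absv{m^\star})$ and the doubling phase uses $\bigO\bigl(\log(1 + \absv{m^\star})\bigr)$ probes. A further $\bigO\bigl(\log(1 + \absv{m^\star})\bigr)$ bisections shrink the bracket $[2^{k-1}, 2^k]$ to one of width at most a fixed constant while keeping $m^\star$ enclosed. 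This whole phase, which outputs a bounded-width bracket straddling $m^\star$, is the ``initialization step'' and costs $\bigO\bigl(\log(1 + \absv{\qrmed{}_\lipschitz(\votingright, x, \uncertainty)})\bigr)$ iterations.

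Starting from this constant-width bracket, the ``dichotomic step'' halves the width at each iteration, again using the sign of $g$ to retain the subinterval containing $m^\star$. After $j$ halvings the width is $\bigO(2^{-j})$, so $j = \lceil \log_2(1/\varepsilon) \rceil = \bigO(\log(1/\varepsilon))$ iterations suffice to reach tolerance $\varepsilon$, independently of $\absv{m^\star}$. The per-iteration cost is then immediate: each iteration of either phase evaluates $g$ once through the closed form of Equation~\eqref{eq:qrmed_derivative}, a sum of at most $\card{\Contributor}$ terms each computable in $\bigO(1)$ arithmetic operations, hence $\bigO(\card{\Contributor})$ per iteration.

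The main obstacle I anticipate is not a hard inequality but the bookkeeping that cleanly separates the two logarithmic terms. The raw doubling bracket has width proportional to $\absv{m^\star}$, so without the intermediate narrowing a plain dichotomy to absolute precision $\varepsilon$ would cost $\bigO\bigl(\log(\absv{m^\star}/\varepsilon)\bigr)$, mixing the two regimes. The proof must therefore either charge the narrowing-to-bounded-width work to the initialization accounting (as above), or read $\varepsilon$ as a \emph{relative} tolerance, in which case reaching relative precision $\varepsilon$ from a bracket of width $\Theta(\absv{m^\star})$ directly costs $\bigO(\log(1/\varepsilon))$ halvings. Either convention yields the stated bounds, and in both the substantive content is the strict monotonicity of $g$ together with the finiteness of $m^\star$, which guarantee that bracketing terminates and that every sign test is decisive.
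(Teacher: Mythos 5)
The paper never proves this proposition: it is stated bare, immediately after the description of the Brent--Dekker computation and the derivative formula of Equation~\eqref{eq:qrmed_derivative}, and the text moves straight on to the subsampling paragraph. So there is no official proof to compare against; your proposal supplies the missing argument, and it is correct. The two ingredients you isolate are the right ones: (i) the loss is $\frac{1}{\lipschitz}$-strongly convex (the quadratic term contributes $1/\lipschitz$ to the second derivative, and each asymmetric Huber term is convex, as the paper itself notes), so its derivative $g$ is strictly increasing and every sign test is decisive; (ii) each Huber derivative term is at most $\votingright_\contributor$ in absolute value, so the minimizer satisfies $\absv{m^\star} \leq \lipschitz \norm{\votingright}{1} < \infty$, which guarantees the doubling phase terminates after $\bigO\left(\log\left(1+\absv{m^\star}\right)\right)$ probes. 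Your bookkeeping point is also well taken: the statement's clean separation into an initialization cost $\bigO(\log(1+\absv{m^\star}))$ and a dichotomic cost $\bigO(\log(1/\varepsilon))$ only holds under a convention such as yours (charge the narrowing of the doubling bracket to constant width to the initialization phase, or read $\varepsilon$ as a relative tolerance); either way the total iteration count is the same, so this is accounting rather than mathematics.

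One caveat you should fold in: your premise that $g$ is continuous holds only when all uncertainties are positive. The paper explicitly allows $\uncertainty = 0$ (this is how it recovers the original \qrmed{} of the prior work), and in that case the Huber term degenerates to $\absv{x_\contributor - m}$, so $g$ jumps at the points $x_\contributor$ and need not have a zero at all. This does not break your proof, but it changes its phrasing: strict convexity of the loss still gives the implications $g(m) < 0 \Rightarrow m < m^\star$ and $g(m) > 0 \Rightarrow m > m^\star$, where $m^\star$ is now the unique point at which the subdifferential contains zero, and both your doubling phase and your bisection phase use only these sign implications. Replacing ``unique zero of $g$'' by ``unique sign-change point of $g$'' makes the argument cover the degenerate case without any other modification.
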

\paragraph{Subsampling for faster computation.}
Note that the computation of $\qrmedloss{\lipschitz}'$ may be costly, if there is a very large number of contributors with well-defined imputs.
If so, then a random sampling of the contributors may be used instead, especially for early steps of the initialization and dichotomic search.
In, investigating how subsampling should be done and what theoretical guarantees it provides is however left for future work.
\subsection{The Byzantine-robustified mean \brmean{}~\cite{AllouahGHV22}}
We will use another robust statistics primitive,
which successfully returns the mean of bounded inputs,
when the amount of allocated voting rights is large enough.
We call it \brmean{}, for \emph{Byzantine-Robustified Mean}.
To define it, we first introduce the \emph{Clipped Mean} $\clippedmean{}$ centered on $\clippedmeancenter$ and of radius $\clippedmeanradius$ by
\begin{align}
    \clippedmean{} (\votingright, x, \uncertainty | \clippedmeancenter, \clippedmeanradius)
    &\triangleq \mean (\votingright, \clip(x | \clippedmeancenter, \clippedmeanradius)) \\
    &= \frac{1}{\sum_{\contributor \in \Contributor} \votingright_\contributor}
    \sum_{\contributor \in \Contributor} \votingright_{\contributor}
    \clip (x_\contributor | \clippedmeancenter, \clippedmeanradius)
\end{align}
where $\clip( x | \clippedmeancenter, \clippedmeanradius) \triangleq \max \set{ \clippedmeancenter - \clippedmeanradius, \min \set{ \clippedmeancenter + \clippedmeanradius, x } }$ clips $x$ within the interval $[\clippedmeancenter - \clippedmeanradius, \clippedmeancenter + \clippedmeanradius]$.
Note that, while we made it appear explicitly to conform with other primitives, 
the uncertainty $\uncertainty$ is never used.
\brmean{} is then obtained by executing $\clippedmean{}$, centered on $\qrmed{}$, with a radius that grows linearly with the total amounts of votes:
\begin{equation}
    \brmean_{\lipschitz} (\votingright, x, \uncertainty )
    \triangleq \clippedmean{} \left( \votingright, x, \uncertainty \st \qrmed_{4 \lipschitz} (\votingright, x, \uncertainty ), \frac{\lipschitz}{4} \sum_{\contributor \in \Contributor} \votingright_\contributor \right).
\end{equation}
Crucially for our purposes, \brmean{} has the following properties.
\begin{proposition}[\cite{AllouahGHV22}]
\label{prop:brmean}
$\brmean_{\lipschitz}$ is $\lipschitz$-Byzantine resilient.
Moreover, if there exists $\clippedmeanradius>0$ such that $\lipschitz \sum_{\contributor \in \Contributor} \votingright_\contributor \geq 8 \clippedmeanradius$ and $x_\contributor \in [-\clippedmeanradius, \clippedmeanradius]$ for all $\contributor \in \Contributor$,
then $\brmean_{\lipschitz} (\votingright, x, \uncertainty) = \mean{} (\votingright, x)$.
\end{proposition}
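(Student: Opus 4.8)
The plan is to treat the two assertions separately, as they rest on different mechanisms. Throughout, write $c \triangleq \qrmed{}_{4\lipschitz}(\votingright, x, \uncertainty)$ for the center and $R \triangleq \frac{\lipschitz}{4}\sum_{\contributor \in \Contributor}\votingright_\contributor$ for the clipping radius, so that by definition $\brmean_{\lipschitz}(\votingright, x, \uncertainty) = \clippedmean(\votingright, x, \uncertainty \mid c, R)$ is the $\votingright$-weighted average of the clipped values $\clip(x_\contributor \mid c, R)$, each of which lies in $[c - R, c + R]$.

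For the ``moreover'' part (the claim that no clipping occurs), I would first show that the center stays inside the data range, $c \in [-\clippedmeanradius, \clippedmeanradius]$. This follows from the sign of the loss derivative $\qrmedloss{4\lipschitz}'$ at the two endpoints: since every $x_\contributor \leq \clippedmeanradius$, at $m = \clippedmeanradius$ each asymmetric Huber term contributes a nonnegative derivative (we are in the branch $m \geq x_\contributor$), while the regulariser contributes $\clippedmeanradius/(4\lipschitz) > 0$, so $\qrmedloss{4\lipschitz}'(\clippedmeanradius) > 0$; symmetrically $\qrmedloss{4\lipschitz}'(-\clippedmeanradius) < 0$. Strict convexity of the loss then forces its minimiser $c$ into $(-\clippedmeanradius, \clippedmeanradius)$. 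Combining $\absv{c} \leq \clippedmeanradius$ with the hypothesis $R \geq 2\clippedmeanradius$ (which is exactly $\lipschitz \sum_\contributor \votingright_\contributor \geq 8\clippedmeanradius$) gives $c - R \leq -\clippedmeanradius$ and $c + R \geq \clippedmeanradius$, hence $[-\clippedmeanradius, \clippedmeanradius] \subseteq [c - R, c + R]$. Every $x_\contributor$ then lies in the clipping window, so $\clip(x_\contributor \mid c, R) = x_\contributor$ and $\brmean_{\lipschitz}(\votingright, x, \uncertainty) = \mean(\votingright, x)$, which is the claim.

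For the Lipschitz-resilience part, the plan is to transfer the argument of~\cite{AllouahGHV22}, observing first that our only departure, the generalisation to asymmetric uncertainties, enters $\brmean$ solely through the center: $\clippedmean$ ignores $\uncertainty$, and the resilience of the center $\qrmed{}_{4\lipschitz}$ is precisely Proposition~\ref{prop:qrmed}, which already accommodates asymmetric uncertainties. Fixing $x$ and $\uncertainty$, I would bound the sensitivity of $\votingright \mapsto \brmean_{\lipschitz}(\votingright, x, \uncertainty)$ by splitting a perturbation of $\votingright$ into three channels: (i) the direct reweighting of the clipped values, which moves a weighted average of points confined to an interval of width $2R$ and is thus controlled by $2R/\sum_\contributor \votingright_\contributor = \lipschitz/2$; (ii) the displacement of the center $c$, governed by the $4\lipschitz$-Lipschitz-resilience of $\qrmed{}_{4\lipschitz}$ from Proposition~\ref{prop:qrmed}; and (iii) the widening of the radius $R$, which is linear in $\sum_\contributor \votingright_\contributor$ with slope $\lipschitz/4$.

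The hard part will be combining channels (ii) and (iii) with (i): a shift of $c$ displaces every clipped value simultaneously, so the influence of a single voting right is not merely the direct term (i). The decisive structural fact, and the reason the bounded radius is essential rather than incidental, is that the clipping confines all these displacements, so that the worst-case influence of contributor $\contributor$ stays proportional to $\votingright_\contributor$; turning this into the stated constant is exactly the bookkeeping carried out in~\cite{AllouahGHV22}, which transfers here because the asymmetry of the uncertainties is invisible to $\clippedmean$ and is absorbed into the already-established resilience of the center.
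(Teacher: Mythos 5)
A preliminary remark on the comparison: the paper never proves Proposition~\ref{prop:brmean} at all — it is imported verbatim from \cite{AllouahGHV22}, the only added commentary being that $\clippedmean{}$ ignores $\uncertainty$. Your attempt therefore supplies arguments the paper does not contain. Your proof of the ``moreover'' half is correct and essentially complete: since every $x_\contributor \in [-\clippedmeanradius, \clippedmeanradius]$, the loss defining the center is strictly increasing on $(\clippedmeanradius, \infty)$ and strictly decreasing on $(-\infty, -\clippedmeanradius)$, so the center $c$ (in your notation) lies in $[-\clippedmeanradius, \clippedmeanradius]$; the only cosmetic repair is at kinks (some $x_\contributor = \clippedmeanradius$ with $\uncertainty_\contributor = 0$), where you should argue on the open ray $m > \clippedmeanradius$ rather than via the derivative at the endpoint. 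Combined with $R \geq 2\clippedmeanradius$, no input is clipped and $\brmean{}_\lipschitz = \mean{}$, as you say. This half is also insensitive to the issue raised below.

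The resilience half has a genuine gap, and it is quantitative, not a matter of deferred bookkeeping. The three channels, with the constants you assign them, add up to $\frac{\lipschitz}{2} + 4\lipschitz + \frac{\lipschitz}{4} = \frac{19\lipschitz}{4}$, and no cleverness in combining them can bring this down to $\lipschitz$, because under the paper's literal conventions the claim is in fact false. Counterexample: let all honest users, of total weight $W'$, report values tending to $+\infty$, and let one user $\contributor$ of weight $\votingright_\contributor$ report a value tending to $-\infty$. In the limit, the center equals $4\lipschitz W'$ without $\contributor$ and $4\lipschitz(W' - \votingright_\contributor)$ with $\contributor$ (saturating Proposition~\ref{prop:qrmed}), every honest value is clipped to the upper edge $c + R$, and a direct computation gives an output change of $\frac{17\lipschitz}{4} \votingright_\contributor$, violating $\lipschitz$-resilience. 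What saves the proposition is a notational mismatch you inherited from the paper: in \cite{AllouahGHV22} the subscript of $\qrmed{}$ is a \emph{voting resilience} $W$, whose associated Lipschitz constant is $1/W$, so the center ``$\qrmed{}_{4W}$'' used by their BrMean is, in this paper's convention, $\qrmed{}_{\lipschitz/4}$ — a $\frac{\lipschitz}{4}$-resilient center, not a $4\lipschitz$-resilient one. With that correction your decomposition closes, and the combination step you flag as ``the hard part'' is a plain triangle inequality: first change the weights with center and radius frozen (your channel (i), bounded by $\frac{2R}{\sum_\contributor \votingright_\contributor} = \frac{\lipschitz}{2}$ per unit of $\ell_1$ change, via translating the clipped values so they lie in $[-R, R]$), then move the center, then the radius, using that the clipped value — hence also the weighted mean of clipped values — is $1$-Lipschitz in the center parameter and in the radius parameter. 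The three contributions then add to $\frac{\lipschitz}{2} + \frac{\lipschitz}{4} + \frac{\lipschitz}{4} = \lipschitz$ exactly.
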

We stress, however, that \brmean{} is oblivious to uncertainty.
It should thus only be used when there are enough data, so that the uncertainty can be neglected.
\subsection{The quadratically regularized quantile \qrqtl{}}
\label{app:qr_quantile}
We now introduce a new estimator, called the quadratically regularized quantile \qrqtl{},
which depends on a quantile $\alpha$ and on a Lipschitz resilience $\lipschitz$.
We define it by
\begin{equation}
\label{eq:qrqtl2}
  \qrqtl{}_{\alpha, \lipschitz} (\votingright, x, \uncertainty)
  \triangleq \argmin{m \in \setR}{
    \frac{m^2}{2 \lipschitz} 
    + \sum_{\contributor : x_\contributor \neq \perp} 
        \votingright_\contributor
        \huber_\alpha (m|x_\contributor, \uncertainty_\contributor)
  }.
\end{equation}
\begin{proposition}
    $\qrqtl{}_{\alpha, \lipschitz}$ is $\lipschitz$-Lipschitz resilient.
\end{proposition}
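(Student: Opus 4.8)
The plan is to mirror the proof of Lipschitz resilience of $\qrmed{}$ (Proposition~\ref{prop:qrmed}), exploiting that the objective in~\eqref{eq:qrqtl2} is strictly convex and that each contributor's contribution to its gradient is bounded by that contributor's voting right. First I would establish that $\qrqtl{}_{\alpha, \lipschitz}$ is well-defined: the regularizer $m \mapsto m^2 / (2\lipschitz)$ is strictly convex and coercive, while each asymmetric Huber term $\huber_\alpha(m|x_\contributor, \uncertainty_\contributor)$ is convex (its second derivative being nonnegative, as noted in Appendix~\ref{app:primitives}). Hence the objective is strictly convex and coercive, so it admits a unique minimizer, characterized by the vanishing of its derivative.

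Write $g_\contributor(m) \triangleq \frac{d}{dm}\huber_\alpha(m|x_\contributor, \uncertainty_\contributor)$ and let $F(m, \votingright) \triangleq \frac{m}{\lipschitz} + \sum_{\contributor : x_\contributor \neq \perp} \votingright_\contributor\, g_\contributor(m)$ be the derivative of the objective. The crux, as flagged in the proof sketch, is the uniform bound $\absv{g_\contributor(m)} \leq 1$ for all $m$: using the explicit Huber derivative, on $\set{m \leq x_\contributor}$ one has $\min\set{1, \frac{1-\alpha}{\alpha}} \leq 1$ and $\absv{m - x_\contributor} / \sqrt{\uncertainty_{left}^2 + (x_\contributor - m)^2} \leq 1$, and symmetrically on $\set{m \geq x_\contributor}$. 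Consequently the contributor-$\contributor$ term of $F$ has absolute value at most $\votingright_\contributor$. Moreover, since each $g_\contributor$ is nondecreasing (convexity of $\huber_\alpha$) and $\votingright_\contributor \geq 0$, the map $m \mapsto F(m, \votingright)$ is strictly increasing with $\partial_m F \geq 1/\lipschitz$ everywhere.

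With $m^* \triangleq \qrqtl{}_{\alpha, \lipschitz}(\votingright, x, \uncertainty)$ and $m'^* \triangleq \qrqtl{}_{\alpha, \lipschitz}(\votingright', x, \uncertainty)$, both satisfying $F(m^*, \votingright) = F(m'^*, \votingright') = 0$, I would compare the two first-order conditions. Evaluating $F(\cdot, \votingright)$ at $m'^*$ and subtracting $F(m'^*, \votingright') = 0$ gives $F(m'^*, \votingright) = \sum_\contributor (\votingright_\contributor - \votingright'_\contributor)\, g_\contributor(m'^*)$, so the per-contributor bound yields $\absv{F(m'^*, \votingright)} \leq \norm{\votingright - \votingright'}{1}$. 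On the other hand, because $F(\cdot, \votingright)$ is increasing with derivative at least $1/\lipschitz$ and vanishes at $m^*$, the fundamental theorem of calculus gives $\absv{F(m'^*, \votingright)} = \absv{F(m'^*, \votingright) - F(m^*, \votingright)} \geq \frac{1}{\lipschitz}\absv{m'^* - m^*}$. Chaining the two inequalities produces $\absv{m^* - m'^*} \leq \lipschitz \norm{\votingright - \votingright'}{1}$, which is exactly $\lipschitz$-Lipschitz resilience.

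The main obstacle is not conceptual but the care needed around the derivative bound and monotonicity: one must verify that $\absv{g_\contributor} \leq 1$ holds on both branches of the asymmetric Huber loss — including the potentially large prefactors $\frac{1-\alpha}{\alpha}$ and $\frac{\alpha}{1-\alpha}$, which the $\min\set{1, \cdot}$ caps at $1$ — and that $g_\contributor$ is genuinely nondecreasing across the join at $m = x_\contributor$, so that $\partial_m F \geq 1/\lipschitz$ holds globally and the mean-value step is legitimate. Since~\cite{AllouahGHV22} already carries out this argument for the $\alpha = 1/2$ case, and their reasoning depends only on these two properties — both of which survive the generalization to arbitrary $\alpha$ and to asymmetric uncertainties — the remaining work is to check that nothing in their derivation used $\alpha = 1/2$ beyond these facts.
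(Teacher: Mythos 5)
Your proof is correct and follows essentially the same route as the paper's: both rest on the first-order optimality condition for the strictly convex objective, the per-contributor bound $\absv{\tfrac{d}{dm}\huber_\alpha(m|x_\contributor,\uncertainty_\contributor)} \leq 1$, and the $1/\lipschitz$-strong convexity contributed by the quadratic regularizer. If anything, your mean-value step comparing $F(m'^*,\votingright)$ to $F(m^*,\votingright)$ makes explicit the monotonicity argument that the paper's terse difference-of-optimality-conditions bound leaves implicit.
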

\begin{proof}
    For any $x, \uncertainty$, the loss 
    $\ell(m | x, \uncertainty) \triangleq \huber(m|x, \uncertainty)$ has a derivative
    $\ell'(m | x, \uncertainty) = \frac{d}{dm} \huber(m|x, \uncertainty)$,
    whose absolute value can be bounded by
    $\absv{\ell'(m | x, \uncertainty)} \leq 1$.
    Fix $\votingright_1, \votingright_2 \in \setR_+^\Contributor$,
    and denote $q_1 \triangleq \qrqtl_{\alpha, \lipschitz} (\votingright_1, x, \uncertainty)$
    and $q_2 \triangleq \qrqtl_{\alpha, \lipschitz} (\votingright_2, x, \uncertainty)$.
    By optimality of \qrqtl{},
    we know that the derivative of the loss must cancel, which implies
    \begin{align}
        q_1 = - \lipschitz \sum_\contributor \votingright_{1,\contributor} \ell'(q | x_\contributor, \uncertainty_\contributor) \quad \text{and} \quad
        q_2 = - \lipschitz \sum_\contributor \votingright_{2, \contributor} \ell'(q' | x_\contributor, \uncertainty_\contributor).
    \end{align}
    Taking the difference and bounding its absolute value then yields
    \begin{align}
        \absv{q_1 - q_2} 
        \leq \lipschitz \sum_\contributor \absv{\votingright_{1,\contributor} - \votingright_{2,\contributor}} \absv{\ell'(q' | x_\contributor, \uncertainty_\contributor)} 
        \leq \lipschitz \norm{\votingright_1 - \votingright_2}{1}.
    \end{align}
    This is precisely the needed Lipschitz resilience guarantee.
\end{proof}
\begin{proposition}
    Assume $\uncertainty = 0$ and denote $q_\beta$ a weighted $\beta$-quantile of $(\votingright, x)$.
    If $\beta \leq \alpha - \absv{q_\beta} / 2 \lipschitz \norm{\votingright}{1}$,
    then $\qrqtl{}_{\alpha, \lipschitz } (\votingright, x, \uncertainty) \geq q_\beta$.
    Similarly, if $\beta \geq \alpha + \absv{q_\beta} / 2 \lipschitz \norm{\votingright}{1}$,
    then $\qrqtl{}_{\alpha, \lipschitz} (\votingright, x, \uncertainty) \leq q_\beta$.
    In particular, in the limit $L \rightarrow \infty$ (no resilience)
    or $\norm{w}{1} \rightarrow \infty$ (high participation), 
    $\qrqtl_{\alpha + o(1), L}$ becomes a great estimate of $q_\beta$.
\end{proposition}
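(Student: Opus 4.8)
The plan is to deduce both inequalities from the sign of a single one-sided derivative of the strictly convex objective, evaluated at $q_\beta$. Write
\[
  f(m) \triangleq \frac{m^2}{2\lipschitz} + \sum_{\contributor : x_\contributor \neq \perp} \votingright_\contributor \huber_\alpha(m | x_\contributor, \uncertainty_\contributor)
\]
for the objective of~\eqref{eq:qrqtl2}. The quadratic term makes $f$ strongly convex and each Huber summand is convex, so $f$ has a unique minimizer $m^\star = \qrqtl_{\alpha,\lipschitz}(\votingright,x,\uncertainty)$, and its derivative is nondecreasing. Hence $m^\star \geq q_\beta$ is equivalent to $f'(q_\beta^+) \leq 0$, and $m^\star \leq q_\beta$ is equivalent to $f'(q_\beta^-) \geq 0$. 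It therefore suffices to control the sign of a one-sided derivative of $f$ at $q_\beta$, which is exactly what the two hypotheses are calibrated to do.

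First I would specialize the asymmetric Huber derivative to $\uncertainty = 0$, where it collapses to the two constants $-a$ for $m < x_\contributor$ and $+b$ for $m > x_\contributor$, with $a \triangleq \min\set{1, \tfrac{1-\alpha}{\alpha}}$ and $b \triangleq \min\set{1, \tfrac{\alpha}{1-\alpha}}$. A one-line case split on whether $\alpha \lessgtr \tfrac{1}{2}$ gives the clean identities $\tfrac{a}{a+b} = 1-\alpha$ and $\tfrac{b}{a+b} = \alpha$, which convert a statement about slopes into one about quantile masses. Grouping the contributors by their position relative to $q_\beta$ then gives
\[
  f'(q_\beta^+) = \frac{q_\beta}{\lipschitz} + b \sum_{\contributor : x_\contributor \leq q_\beta} \votingright_\contributor - a \sum_{\contributor : x_\contributor > q_\beta} \votingright_\contributor,
\]
together with the analogous expression for $f'(q_\beta^-)$ in which the first sum ranges over $x_\contributor < q_\beta$ and the second over $x_\contributor \geq q_\beta$.

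Next I would invoke the defining property of a weighted $\beta$-quantile, that the voting-right mass strictly below $q_\beta$ is at most $\beta\norm{\votingright}{1}$ while the mass at or below $q_\beta$ is at least $\beta\norm{\votingright}{1}$. Feeding these bounds into the two derivative expressions reduces each claim to a scalar inequality whose only nuisance term is the regularizer contribution $q_\beta/\lipschitz$; bounding it by $\absv{q_\beta}/\lipschitz$ shows that the slack $\absv{q_\beta}/(2\lipschitz\norm{\votingright}{1})$ in each hypothesis is precisely the amount needed to force $f'(q_\beta^+) \leq 0$ (respectively $f'(q_\beta^-) \geq 0$). I expect the main obstacle to be exactly this bookkeeping: one must treat the contributors tied at $q_\beta$ with care---this is why the two claims use one-sided derivatives together with the strict and non-strict versions of the quantile mass---and keep the asymmetric constants $a, b$ oriented so that the threshold fraction determined by $\alpha$ lines up against $\beta$ with the correct sign. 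Finally, the limiting statement follows at once: as $\lipschitz \to \infty$ or $\norm{\votingright}{1} \to \infty$ the slack $\absv{q_\beta}/(2\lipschitz\norm{\votingright}{1}) \to 0$, so the two hypotheses sandwich $\qrqtl_{\alpha,\lipschitz}$ between neighboring quantiles that converge to $q_\beta$, pinning the estimate onto $q_\beta$.
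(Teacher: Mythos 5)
Your overall plan coincides with the paper's own proof strategy: examine the subgradient (one-sided derivatives) of the strongly convex \qrqtl{} objective at $q_\beta$, feed in the defining mass inequalities of the weighted $\beta$-quantile, and let the slack in the hypothesis absorb the regularizer term $q_\beta/\lipschitz$ (the paper's explicit case split on $\alpha \lessgtr 1/2$ is what your constants $a,b$ encode). However, two steps of your sketch break when carried out. First, your pairing of one-sided derivatives with the two claims is backwards. For strictly convex $f$ the usable equivalences are: minimizer $\geq q_\beta$ iff $f'(q_\beta^-) \leq 0$, and minimizer $\leq q_\beta$ iff $f'(q_\beta^+) \geq 0$. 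You instead aim to prove $f'(q_\beta^+) \leq 0$ for the first claim; that is sufficient but it is not derivable from the quantile inequalities, and is in fact false in cases where the claim holds. Take all voting mass at $x_\contributor = 0$ and $q_\beta = 0$ (a legitimate $\beta$-quantile for every $\beta$): the hypothesis holds for any $\beta \leq \alpha$, the minimizer equals $q_\beta$, yet $f'(q_\beta^+) = b\norm{\votingright}{1} > 0$. The point is that the quantile definition upper-bounds only the strict mass $\votingright(x < q_\beta)$, never $\votingright(x \leq q_\beta)$, so only the left derivative can be bounded above; ties at $q_\beta$ are exactly what your pairing cannot handle, despite your remark that the pairing was chosen to deal with them.

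Second, the bookkeeping you defer does not come out as asserted. Using your own (correct) identities $a/(a+b) = 1-\alpha$ and $b/(a+b) = \alpha$, the bound on the left derivative reads $f'(q_\beta^-) \leq \frac{q_\beta}{\lipschitz} + (a+b)\norm{\votingright}{1}\left(\beta - (1-\alpha)\right)$, so the sign condition compares $\beta$ with $1-\alpha$, not with $\alpha$: with the asymmetric \huber{} loss exactly as displayed in the paper (factor $\min\set{1,\tfrac{\alpha}{1-\alpha}}$ on the branch $m \geq x$), the data term is minimized at the weighted $(1-\alpha)$-quantile, and both claims would have to be stated with $1-\alpha$ in place of $\alpha$. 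This mismatch is actually an inconsistency in the paper itself: the paper's proof writes a subgradient corresponding to the opposite orientation of the \huber{} branches (the one under which \qrqtl{} tracks the $\alpha$-quantile, consistent with the ``favoring consensus'' discussion), and even then the stated slack $\absv{q_\beta}/2\lipschitz\norm{\votingright}{1}$ only closes the argument if $a+b \geq 2$, whereas $a+b \in [1,2]$, so the cleanly provable slack is $\absv{q_\beta}/\lipschitz\norm{\votingright}{1}$. So your plan is the right plan, but the claim that the hypotheses are ``precisely calibrated'' skips the very computation in which both the orientation of $\alpha$ and the constant $2$ fail; a complete write-up must fix the derivative pairing, and either swap the \huber{} branches explicitly or prove the statement with $1-\alpha$.
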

\begin{proof}
    Define $\Contributor_<(q)$, $\Contributor_=(q)$ and $\Contributor_>(q)$ the subsets of contributors $\contributor$
    whose scores $x_\contributor$ are respectively smaller, equal and larger than $q$.
    Denote $\votingright_?(q) \triangleq \sum_{\contributor \in \Contributor_?(q)} \votingright_\contributor$
    for $? \in \set{<, =, >}$.
    By definition of the weighted $\beta$-quantile, we must have
    \begin{align}
        \label{eq:q_beta1}
        \votingright_<(q_\beta) \leq \beta \norm{\votingright}{1}
        \quad &\text{and} \quad 
        \votingright_<(q_\beta) + \votingright_=(q_\beta) \geq \beta \norm{\votingright}{1}, \\        
        \label{eq:q_beta2}
        \votingright_>(q_\beta) \leq (1-\beta) \norm{\votingright}{1}
        \quad &\text{and} \quad 
        \votingright_>(q_\beta) + \votingright_=(q_\beta) \geq (1-\beta) \norm{\votingright}{1}.
    \end{align}
    Now assume $\alpha \leq 0.5$. 
    The subgradient $\nabla_{q_\beta}$ of the loss of \qrqtl{} at $q_\beta$ is given by
    \begin{align}
        \nabla_{q_\beta} &= \frac{q_\beta}{\lipschitz} 
        - \votingright_< (q_\beta)  
        + \votingright_= (q_\beta) \left[ - 1, \frac{\alpha}{1 - \alpha} \right]
        + \frac{\alpha}{1 - \alpha} \votingright_> (q_\beta) \\
    \end{align}
    Using the right equation of~\eqref{eq:q_beta1} and the left equation of~\eqref{eq:q_beta2},
    we have
    \begin{align}
        (1-\alpha) \min \nabla_{q_\beta} 
        &= (1-\alpha) \frac{q_\beta}{\lipschitz} - (1-\alpha) \votingright_<(q_\beta) - (1-\alpha) \votingright_=(q_\beta) + \alpha \votingright_>(q_\beta) \\
        &\leq (1-\alpha) \frac{q_\beta}{\lipschitz} - (1 - \alpha) \beta \norm{w}{1} + \alpha (1- \beta) \norm{w}{1} \\
        &= (1-\alpha) \frac{q_\beta}{\lipschitz} + (\alpha- \beta) \norm{w}{1}.
    \end{align}
    Assuming $\beta \leq \alpha - \absv{q_\beta} / 2 \lipschitz \norm{\votingright}{1}$
    implies $\beta \leq \alpha - (1-\alpha) \absv{q_\beta} / 2 \lipschitz \norm{\votingright}{1}$,
    we have the guarantee $\min \nabla_{q_\beta} \leq 0$,
    which implies that the minimum of \qrqtl{} is to the right of $q_\beta$.
    Considering now the other side, we have
    \begin{align}
        (1-\alpha) \max \nabla_{q_\beta} 
        &= (1-\alpha) \frac{q_\beta}{\lipschitz} - (1-\alpha) \votingright_<(q_\beta) + \alpha \votingright_=(q_\beta) + \alpha \votingright_>(q_\beta) \\
        &\geq (1-\alpha) \frac{q_\beta}{\lipschitz} - (1 - \alpha) \beta \norm{w}{1} + \alpha (1- \beta) \norm{w}{1} \\
        &= (1-\alpha) \frac{q_\beta}{\lipschitz} + (\alpha- \beta) \norm{w}{1}.
    \end{align}
    Like earlier, in the case $\beta \geq \alpha + \absv{q_\beta} / 2 \lipschitz \norm{\votingright}{1}$,
    we have $\max \nabla_{q_\beta} \geq 0$, thereby implying that the minimum is to the left of $q_\beta$.
    The case $\alpha \geq 0.5$ is dealt with similarly,
    using $\alpha \nabla_{q_\beta} = \alpha \frac{q_\beta}{L} - (1-\alpha) w_<(q_\beta)
    + w_=(q_\beta) \left[ - (1-\alpha), \alpha \right] + \alpha w_>(q_\beta)$
    in this setting.
\end{proof}
\begin{proposition}
\label{prop:infinite_uncertainty}
    Let $\contributor \in [\Contributor]$.
    For any $\alpha \in (0, 1)$, $L \geq 0$ and $w, x, \Delta$, we have
    \begin{equation}
        \lim_{\Delta_{\contributor, left}', \Delta'_{\contributor, right} \rightarrow \infty}
        \qrqtl_{\alpha, L} ( w, x, (\Delta_\contributor', \Delta_{-\contributor})) =
        \qrqtl_{\alpha, L} ( (0, w_{-\contributor}), x, \Delta).
    \end{equation}
    Put differently, a user with extremely large left and right uncertainties
    is asymptotically equivalent to a user with zero voting right.
\end{proposition}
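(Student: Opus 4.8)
The plan is to combine the first-order optimality conditions of the two minimization problems with the strong convexity furnished by the quadratic regularizer. Write $g(m) \triangleq \frac{m^2}{2\lipschitz} + \sum_{\contributorbis \neq \contributor,\, x_\contributorbis \neq \perp} \votingright_\contributorbis \huber_\alpha(m \mid x_\contributorbis, \Delta_\contributorbis)$ for the loss in which user $\contributor$ has been dropped; its unique minimizer is exactly the right-hand side $m^\star \triangleq \qrqtl_{\alpha, \lipschitz}((0, \votingright_{-\contributor}), x, \Delta)$. For a chosen pair of uncertainties $\Delta'_\contributor = (\Delta'_{\contributor, left}, \Delta'_{\contributor, right})$, let $m' \triangleq \qrqtl_{\alpha, \lipschitz}(\votingright, x, (\Delta'_\contributor, \Delta_{-\contributor}))$ denote the minimizer of the full loss $g(m) + \votingright_\contributor \huber_\alpha(m \mid x_\contributor, \Delta'_\contributor)$. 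Both minimizers exist and are unique, since the Huber terms are convex and $m \mapsto m^2/2\lipschitz$ is $\frac{1}{\lipschitz}$-strongly convex (assuming $\lipschitz > 0$; the degenerate case $\lipschitz = 0$ pins $m \equiv 0$ on both sides and is trivial).

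First I would establish a uniform a priori bound on $m'$. Reading off the derivative formula of the asymmetric Huber loss (Appendix~\ref{app:primitives}), one has $\absv{\frac{d}{dm}\huber_\alpha(m \mid x, \Delta)} \leq 1$ for all arguments, so the vanishing-derivative condition for the full loss forces $\absv{m'}/\lipschitz \leq \sum_\contributorbis \votingright_\contributorbis = \norm{\votingright}{1}$, i.e. $\absv{m'} \leq \lipschitz \norm{\votingright}{1}$ regardless of $\Delta'_\contributor$. Hence every such minimizer lies in the fixed compact interval $I \triangleq [-\lipschitz\norm{\votingright}{1}, \lipschitz\norm{\votingright}{1}]$, independently of how large the uncertainties grow — this is the step that lets the subsequent constants be chosen uniformly.

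Next I would show that user $\contributor$'s contribution to the gradient of $g$ vanishes. By first-order optimality of the full loss and the definition of $g$, we have $g'(m') = -\votingright_\contributor \frac{d}{dm}\huber_\alpha(m' \mid x_\contributor, \Delta'_\contributor)$. The derivative formula gives $\absv{\frac{d}{dm}\huber_\alpha(m' \mid x_\contributor, \Delta'_\contributor)} \leq \absv{m' - x_\contributor} / \min\set{\Delta'_{\contributor, left}, \Delta'_{\contributor, right}}$, and since $m' \in I$ the numerator is bounded by the constant $C \triangleq \sup_{m \in I}\absv{m - x_\contributor}$. Therefore $\absv{g'(m')} \leq \votingright_\contributor\, C / \min\set{\Delta'_{\contributor, left}, \Delta'_{\contributor, right}}$, which tends to $0$ as both $\Delta'_{\contributor, left}$ and $\Delta'_{\contributor, right}$ tend to infinity. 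Finally, strong convexity of $g$ together with $g'(m^\star) = 0$ yields $\absv{g'(m')} \geq \frac{1}{\lipschitz}\absv{m' - m^\star}$, so $\absv{m' - m^\star} \leq \lipschitz\absv{g'(m')} \to 0$, which is precisely the claimed limit.

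The main obstacle is the uniform bound of the second paragraph: without it the constant $C$ could itself blow up with $\Delta'_\contributor$ and the argument would collapse. The other point requiring care is that the limit is a \emph{joint} one in the two uncertainties, which is why the decisive bound is expressed through $\min\set{\Delta'_{\contributor, left}, \Delta'_{\contributor, right}}$ rather than through either uncertainty alone; if only one uncertainty diverged, the relevant one-sided Huber derivative on the opposite side of $x_\contributor$ need not vanish.
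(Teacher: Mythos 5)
Your proof is correct, and it rests on the same two pillars as the paper's own argument: the $1/\lipschitz$-strong convexity supplied by the quadratic regularizer, and the vanishing of user $\contributor$'s asymmetric Huber derivative as both of their uncertainties diverge. The two arguments are, however, arranged in opposite directions. The paper anchors everything at the reduced minimizer $m^\star$: fixing $\varepsilon > 0$, strong convexity makes the reduced loss derivative at least $\varepsilon/\lipschitz$ in magnitude at $m^\star \pm \varepsilon$, user $\contributor$'s Huber derivative is shown to converge to zero \emph{uniformly} on $[m^\star - \varepsilon, m^\star + \varepsilon]$, and a sign check on the full loss derivative at the two endpoints traps the perturbed minimizer inside that interval. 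You anchor instead at the perturbed minimizer $m'$: first-order optimality transfers user $\contributor$'s small Huber derivative onto $g'(m')$, and the strong-convexity error bound $\absv{m' - m^\star} \leq \lipschitz \absv{g'(m')}$ pulls $m'$ back to $m^\star$. This reversal is exactly what forces your extra step, the a priori bound $\absv{m'} \leq \lipschitz \norm{\votingright}{1}$ (the paper needs no control on where the perturbed minimizer lies), and it is also what the reversal buys you: your argument yields an explicit convergence rate, $\absv{m' - m^\star} \leq \lipschitz \votingright_\contributor C / \min \set{\Delta'_{\contributor, left}, \Delta'_{\contributor, right}}$, whereas the paper's uniform-convergence argument is purely qualitative. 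One technicality applies to both proofs equally: if some other user has zero uncertainty, $g$ may have kinks, so ``derivative'' should be read as ``subgradient'' throughout (your optimality condition becomes $-\votingright_\contributor \tfrac{d}{dm}\huber_\alpha(m' | x_\contributor, \Delta'_\contributor) \in \partial g(m')$, and the strong-convexity bound holds for every subgradient); this changes nothing of substance in either argument.
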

\begin{proof}
    Let $q = \qrqtl_{\alpha, L} ( (0, w_{-\contributor}), x, \Delta)$ and $\varepsilon > 0$.    
    Now, since the loss of \qrqtl{} is a sum 
    of a $\frac{1}{L}$-strongly convex term (the regularization)
    and of other convex terms,
    it is itself $\frac{1}{L}$-strongly convex.
    Thus for $q'$ such that $\absv{q' - q} \geq \varepsilon$,
    the derivative of \qrqtl{} for inputs $(0, w_{-\contributor}), x, \Delta$ at $q'$
    is, in absolute value, at least $\varepsilon / L$.
    Now observe that, on input $w, x, \Delta$,
    this derivative is the same as above, 
    plus a term $w_u \frac{d}{dq} \huber_\alpha (q' | x_u, \Delta_u)$ 
    for user $u$.
    Note however that, on $[q - \varepsilon, q + \varepsilon]$ (in fact, on any interval),
    the term $w_u \frac{d}{dq} \huber_\alpha (q' | x_u, \Delta_u')$ converges uniformly to zero
    as $\Delta_{u, left}', \Delta_{u, right}' \rightarrow \infty$.
    More precisely, there exists $A$ such that,
    if $\Delta_{u, left}' \geq A$ and $\Delta_{u, right}' \geq A$,
    we have $\absv{w_u \frac{d}{dq} \huber_\alpha (q' | x_u, \Delta_u')} \leq \varepsilon / 2L$
    for all $q' \in [q - \varepsilon, q + \varepsilon]$.
    This then implies that the derivative of the \qrqtl{} loss 
    for inputs $w, x, (\Delta_\contributor', \Delta_{-\contributor})$ 
    at $q + \varepsilon$ will be at least $\frac{\varepsilon}{L} - \frac{\varepsilon}{2L} > 0$.
    Thus the minimum is on the left of $q+\varepsilon$.
    Similarly, we can prove that the minimum must be on the right of $q-\varepsilon$.
    Thus the \qrqtl{} must be $[q-\varepsilon, q+\varepsilon]$,
    which concludes the proof.
\end{proof}
\subsection{The quadratically regularized deviation \qrdev{}}
We now propose a Lipschitz-resilient measure of standard deviation, which we call the \emph{Quadratically Regularized Deviation}, or $\qrdev{}$.
We essentially define it as a \qrqtl{} of the deviation to \qrmed{}.
More specifically,
given the quadratically regularized median $m \triangleq \qrmed{}_\lipschitz (\votingright, x, \uncertainty)$ and a default deviation $\defaultdeviation$,
$\qrdev{}$ is computed by
\begin{equation}
  \qrdev{}_{\alpha, \lipschitz, \defaultdeviation} (\votingright, x, \uncertainty) \triangleq
  \defaultdeviation + \qrqtl{}_{\alpha, \lipschitz} \left(
    \votingright_\contributor,
    \absv{x_\contributor - m} - \defaultdeviation,
    \uncertainty_\contributor
    \st \contributor \in \Contributor
  \right).
\end{equation}
Note that, rather than a measure of deviation within the values of $x_\contributor$,
$\qrdev{}$ should rather be regarded as a deviation between these values and $\qrmed{}$.
It may also be understood as a measure of polarization.
More precisely, it would take large values if most contributors disagree with the quadratically regularized median.
\begin{proposition}
  $\qrdev{}_{\alpha, \lipschitz, \uncertainty_{default}}$ is $2 \lipschitz$-Lipschitz resilient.
\end{proposition}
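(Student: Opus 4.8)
The plan is to exploit the two-level dependence of $\qrdev{}$ on the voting rights $\votingright$: it enters both directly, through the outer $\qrqtl{}$, and indirectly, through the recentering median $m \triangleq \qrmed{}_\lipschitz(\votingright, x, \uncertainty)$. Fixing $x$ and $\uncertainty$ and comparing two voting-right vectors $\votingright, \votingright'$, I note first that the additive constant $\defaultdeviation$ cancels, so it suffices to bound $\absv{\qrqtl{}_{\alpha,\lipschitz}(\votingright, y, \uncertainty) - \qrqtl{}_{\alpha,\lipschitz}(\votingright', y', \uncertainty)}$, where $y_\contributor \triangleq \absv{x_\contributor - m} - \defaultdeviation$, $y'_\contributor \triangleq \absv{x_\contributor - m'} - \defaultdeviation$, and $m' \triangleq \qrmed{}_\lipschitz(\votingright', x, \uncertainty)$. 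I would then insert the intermediate quantity $\qrqtl{}_{\alpha,\lipschitz}(\votingright', y, \uncertainty)$ and apply the triangle inequality, splitting the total deviation into a \emph{voting-rights part} (varying $\votingright \to \votingright'$ at fixed inputs $y$) and an \emph{input part} (varying $y \to y'$ at fixed voting rights $\votingright'$).

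For the voting-rights part, I would invoke the $\lipschitz$-Lipschitz resilience of $\qrqtl{}$ just established, which gives a bound of $\lipschitz \norm{\votingright - \votingright'}{1}$.

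For the input part, I need a second, complementary regularity property: that $\qrqtl{}_{\alpha,\lipschitz}$ is $1$-Lipschitz continuous in its scalar inputs under the $\ell_\infty$ norm, exactly as recorded for $\qrmed{}$ in Proposition~\ref{prop:qrmed}. I expect this to be the main obstacle, since it is not among the statements proved earlier and must be argued here. I would establish it by the standard strongly-convex-plus-monotone argument: the loss of $\qrqtl{}$ is $\frac{1}{\lipschitz}$-strongly convex, and its derivative in $m$ has the form $\frac{m}{\lipschitz} + \sum_\contributor \votingright_\contributor \psi_\contributor(m - x_\contributor)$, where each $\psi_\contributor$ is the derivative of the (convex) asymmetric Huber term, hence depends on $x_\contributor$ only through $m - x_\contributor$, is monotone nondecreasing, and has absolute value at most $1$. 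Shifting every input by at most $\varepsilon \triangleq \norm{y - y'}{\infty}$ and evaluating this derivative at the unshifted minimizer displaced by $\pm \varepsilon$ preserves its sign (by monotonicity of each $\psi_\contributor$ and of $m \mapsto m/\lipschitz$), so the perturbed minimizer lies within $\varepsilon$ of the original; this is the needed $1$-Lipschitz bound.

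Finally I would control $\norm{y - y'}{\infty}$. By the reverse triangle inequality, for each $\contributor$ we have $\absv{\absv{x_\contributor - m} - \absv{x_\contributor - m'}} \leq \absv{m - m'}$, so $\norm{y - y'}{\infty} \leq \absv{m - m'}$; and since $m, m'$ are two evaluations of $\qrmed{}_\lipschitz$ differing only in voting rights, Proposition~\ref{prop:qrmed} yields $\absv{m - m'} \leq \lipschitz \norm{\votingright - \votingright'}{1}$. Adding the two parts gives $\lipschitz \norm{\votingright - \votingright'}{1} + \lipschitz \norm{\votingright - \votingright'}{1} = 2\lipschitz \norm{\votingright - \votingright'}{1}$, which is exactly the claimed $2\lipschitz$-Lipschitz resilience.
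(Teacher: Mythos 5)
Your proof is correct and follows essentially the same route as the paper's: a triangle-inequality split of the \qrdev{} difference into a voting-rights part, bounded by the $\lipschitz$-Lipschitz resilience of \qrqtl{}, and a recentering part, bounded by combining the reverse triangle inequality, the $1$-Lipschitz $\ell_\infty$-continuity of \qrqtl{} in its scalar inputs, and the $\lipschitz$-Lipschitz resilience of \qrmed{} applied to $\absv{m - m'}$. The one genuine difference is that the paper merely asserts the $1$-Lipschitz input-continuity of \qrqtl{} (Proposition~\ref{prop:qrmed} records it only for \qrmed{}), whereas you actually prove it via the strong-convexity-plus-monotone-derivative argument, which is a gap worth filling and your argument for it is sound.
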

\begin{proof}
  Essentially, the proposition results from the Lipschitz resilience of \qrmed{} and \qrqtl{},
  and the 1-Lipchitz continuity of \qrqtl{} with respect to a translation of its second inputs.
  Let $w,w' \in \setR_+^\Contributor$.
  By Proposition~\ref{prop:qrmed}, denoting $m \triangleq \qrmed{}_\lipschitz (\votingright, x, \uncertainty)$ and $m' \triangleq \qrmed{}_\lipschitz (\votingright', x, \uncertainty)$,
  we know that $\absv{m - m'} \leq \norm{w - w'}{1} / \lipschitz$.
  Denoting $d$ and $d'$ the $\qrdev{}$ for $w$ and $w'$, we have
  \begin{align}
      \absv{d - d'} 
      &\leq \absv{ \qrqtl{}_{\alpha, \lipschitz} ( \votingright_\contributor, \absv{x_\contributor - m} - \defaultdeviation, \uncertainty_\contributor) - \qrqtl{}_{\alpha, \lipschitz} ( \votingright_\contributor, \absv{x_\contributor - m'} - \defaultdeviation, \uncertainty_\contributor)} \nonumber \\
      &+ \absv{ \qrqtl{}_{\alpha, \lipschitz} ( \votingright_\contributor, \absv{x_\contributor - m'} - \defaultdeviation, \uncertainty_\contributor) - \qrqtl{}_{\alpha, \lipschitz} ( \votingright'_\contributor, \absv{x_\contributor - m'} - \defaultdeviation, \uncertainty_\contributor)} \\
      &\leq \absv{m - m'} + \lipschitz \norm{w - w'}{1}
      \leq 2 \lipschitz \norm{w - w'}{1},
  \end{align}
  which amounts to saying that $\qrdev{}_\lipschitz$ is $(2 \lipschitz)$-Byzantine resilient.
\end{proof}
\subsection{The quadratically regularized uncertainty}
The quadratically regularized uncertainty aims to provide a meaningful estimate on the estimation of $\qrmed{}$.
In our default pipeline the quadratically regularized uncertainty is simply made to return $\qrdev{}$.
This is a pessimistic assumption, if the data is i.i.d.
However, given that we do not expect i.i.d. data, we considered this more conservative estimator 
of the uncertainty on \qrmed{}.
We leave the problem of deriving more principled uncertainty estimates open.
\section{Mehestan's algorithms}
\label{app:mehestan}
This section details the algorithms used by Mehestan.
\subsection{Collaborative preference scaling for scaling-calibration contributors}
For any contributor $\contributor \in \Contributor$, 
define the set of clearly ordered pairs of alternatives by $\contributor$, by
\begin{equation}
\label{eq:alternativepair_nonequal}
  \AlternativePair_{\contributor}^* \triangleq
  \set{
    (\alternative, \alternativebis) \in \AlternativePair_{\contributor}
    \st 
    \optcontributorscore_{\contributor \alternative}
    \geq \optcontributorscore_{\contributor \alternativebis}
    + 2 \optcontributorscoreuncertainty_{\contributor \alternative, left} 
    + 2 \optcontributorscoreuncertainty_{\contributor \alternativebis, right}
    \text{ or }
    \optcontributorscore_{\contributor \alternativebis}
    \geq \optcontributorscore_{\contributor \alternative}
    + 2 \optcontributorscoreuncertainty_{\contributor \alternative, right} 
    + 2 \optcontributorscoreuncertainty_{\contributor \alternativebis, left}
  }.
\end{equation}
We then say that two contributors $\contributor$ and $\contributorbis$ are \emph{comparable},
if and only if, there exists at least one pair of alternatives that both scored sufficiently differently,
i.e. $\AlternativePair_{\contributor}^* \cap \AlternativePair_{\contributorbis}^* \neq \emptyset$.
For any two comparable contributors $\contributor$ and $\contributorbis$,
each pair $(\alternative, \alternativebis) \in \AlternativePair_{\contributor}^* \cap \AlternativePair_{\contributorbis}^*$ may then serve as an anchor point to appropriately scale the two contributors' scores.
This leads us to define
$\scaling_{\contributor \contributorbis \alternative \alternativebis} \triangleq  \frac{\absv{ \optcontributorscore_{\contributorbis \alternative} - \optcontributorscore_{\contributorbis \alternativebis} }}{\absv{ \optcontributorscore_{\contributor \alternative} - \optcontributorscore_{\contributor \alternativebis} }}$.
We also compute their left uncertainty as
\begin{align}
   \scalinguncertainty_{\contributor \contributorbis \alternative \alternativebis, left} \triangleq \left\lbrace
   \begin{array}{cc}
      \scaling_{\contributor \contributorbis \alternative \alternativebis}  -
      \frac{
          \optcontributorscore_{\contributorbis \alternative} - \optcontributorscore_{\contributorbis \alternativebis}
          - \optcontributorscoreuncertainty_{\contributorbis \alternative, left} 
          - \optcontributorscoreuncertainty_{\contributorbis \alternativebis, right} 
      }{
          \optcontributorscore_{\contributor \alternative} - \optcontributorscore_{\contributor \alternativebis}
          + \optcontributorscoreuncertainty_{\contributor \alternative, right} 
          + \optcontributorscoreuncertainty_{\contributor \alternativebis, left}
      }
      & \text{ if } \optcontributorscore_{\contributorbis \alternative} \geq \optcontributorscore_{\contributorbis \alternativebis} 
      \text{ and } \optcontributorscore_{\contributor \alternative} \geq \optcontributorscore_{\contributor \alternativebis}, \\
      \scaling_{\contributor \contributorbis \alternative \alternativebis}  -
      \frac{
          \optcontributorscore_{\contributorbis \alternativebis} - \optcontributorscore_{\contributorbis \alternative}
          - \optcontributorscoreuncertainty_{\contributorbis \alternative, right} 
          - \optcontributorscoreuncertainty_{\contributorbis \alternativebis, left} 
      }{
          \optcontributorscore_{\contributor \alternative} - \optcontributorscore_{\contributor \alternativebis}
          + \optcontributorscoreuncertainty_{\contributor \alternative, right} 
          + \optcontributorscoreuncertainty_{\contributor \alternativebis, left}
      }
      & \text{ if } \optcontributorscore_{\contributorbis \alternative} \leq \optcontributorscore_{\contributorbis \alternativebis} 
      \text{ and } \optcontributorscore_{\contributor \alternative} \geq \optcontributorscore_{\contributor \alternativebis}, \\
      \scaling_{\contributor \contributorbis \alternative \alternativebis}  -
      \frac{
          \optcontributorscore_{\contributorbis \alternative} - \optcontributorscore_{\contributorbis \alternativebis}
          - \optcontributorscoreuncertainty_{\contributorbis \alternative, left} 
          - \optcontributorscoreuncertainty_{\contributorbis \alternativebis, right} 
      }{
          \optcontributorscore_{\contributor \alternativebis} - \optcontributorscore_{\contributor \alternative}
          + \optcontributorscoreuncertainty_{\contributor \alternative, left} 
          + \optcontributorscoreuncertainty_{\contributor \alternativebis, right}
      }
      & \text{ if } \optcontributorscore_{\contributorbis \alternative} \geq \optcontributorscore_{\contributorbis \alternativebis} 
      \text{ and } \optcontributorscore_{\contributor \alternative} \leq \optcontributorscore_{\contributor \alternativebis}, \\
      \scaling_{\contributor \contributorbis \alternative \alternativebis}  -
      \frac{
          \optcontributorscore_{\contributorbis \alternativebis} - \optcontributorscore_{\contributorbis \alternative}
          - \optcontributorscoreuncertainty_{\contributorbis \alternative, right} 
          - \optcontributorscoreuncertainty_{\contributorbis \alternativebis, left} 
      }{
          \optcontributorscore_{\contributor \alternativebis} - \optcontributorscore_{\contributor \alternative}
          + \optcontributorscoreuncertainty_{\contributor \alternative, left} 
          + \optcontributorscoreuncertainty_{\contributor \alternativebis, right}
      }
      & \text{ if } \optcontributorscore_{\contributorbis \alternative} \leq \optcontributorscore_{\contributorbis \alternativebis} 
      \text{ and } \optcontributorscore_{\contributor \alternative} \leq \optcontributorscore_{\contributor \alternativebis}, \\
   \end{array}   
   \right.
\end{align}
and similarly for the right uncertainty:
\begin{align}
   \scalinguncertainty_{\contributor \contributorbis \alternative \alternativebis, right} \triangleq \left\lbrace
   \begin{array}{cc}
      \frac{
          \optcontributorscore_{\contributorbis \alternative} - \optcontributorscore_{\contributorbis \alternativebis}
          + \optcontributorscoreuncertainty_{\contributorbis \alternative, right} 
          + \optcontributorscoreuncertainty_{\contributorbis \alternativebis, left} 
      }{
          \optcontributorscore_{\contributor \alternative} - \optcontributorscore_{\contributor \alternativebis}
          - \optcontributorscoreuncertainty_{\contributor \alternative, left} 
          - \optcontributorscoreuncertainty_{\contributor \alternativebis, right}
      }
      - \scaling_{\contributor \contributorbis \alternative \alternativebis}
      & \text{ if } \optcontributorscore_{\contributorbis \alternative} \geq \optcontributorscore_{\contributorbis \alternativebis} 
      \text{ and } \optcontributorscore_{\contributor \alternative} \geq \optcontributorscore_{\contributor \alternativebis}, \\
      \frac{
          \optcontributorscore_{\contributorbis \alternativebis} - \optcontributorscore_{\contributorbis \alternative}
          + \optcontributorscoreuncertainty_{\contributorbis \alternative, left} 
          + \optcontributorscoreuncertainty_{\contributorbis \alternativebis, right} 
      }{
          \optcontributorscore_{\contributor \alternative} - \optcontributorscore_{\contributor \alternativebis}
          - \optcontributorscoreuncertainty_{\contributor \alternative, left} 
          - \optcontributorscoreuncertainty_{\contributor \alternativebis, right}
      }
      - \scaling_{\contributor \contributorbis \alternative \alternativebis}
      & \text{ if } \optcontributorscore_{\contributorbis \alternative} \leq \optcontributorscore_{\contributorbis \alternativebis} 
      \text{ and } \optcontributorscore_{\contributor \alternative} \geq \optcontributorscore_{\contributor \alternativebis}, \\
      \frac{
          \optcontributorscore_{\contributorbis \alternative} - \optcontributorscore_{\contributorbis \alternativebis}
          + \optcontributorscoreuncertainty_{\contributorbis \alternative, right} 
          + \optcontributorscoreuncertainty_{\contributorbis \alternativebis, left} 
      }{
          \optcontributorscore_{\contributor \alternativebis} - \optcontributorscore_{\contributor \alternative}
          - \optcontributorscoreuncertainty_{\contributor \alternative, right} 
          - \optcontributorscoreuncertainty_{\contributor \alternativebis, left}
      }
      - \scaling_{\contributor \contributorbis \alternative \alternativebis}
      & \text{ if } \optcontributorscore_{\contributorbis \alternative} \geq \optcontributorscore_{\contributorbis \alternativebis} 
      \text{ and } \optcontributorscore_{\contributor \alternative} \leq \optcontributorscore_{\contributor \alternativebis}, \\
      \frac{
          \optcontributorscore_{\contributorbis \alternativebis} - \optcontributorscore_{\contributorbis \alternative}
          + \optcontributorscoreuncertainty_{\contributorbis \alternative, left} 
          + \optcontributorscoreuncertainty_{\contributorbis \alternativebis, right} 
      }{
          \optcontributorscore_{\contributor \alternativebis} - \optcontributorscore_{\contributor \alternative}
          - \optcontributorscoreuncertainty_{\contributor \alternative, right} 
          - \optcontributorscoreuncertainty_{\contributor \alternativebis, left}
      }
      - \scaling_{\contributor \contributorbis \alternative \alternativebis}
      & \text{ if } \optcontributorscore_{\contributorbis \alternative} \leq \optcontributorscore_{\contributorbis \alternativebis} 
      \text{ and } \optcontributorscore_{\contributor \alternative} \leq \optcontributorscore_{\contributor \alternativebis}, \\
   \end{array}   
   \right.
\end{align}
Assume $\AlternativePair_{\contributor}^* \cap \AlternativePair_{\contributorbis}^* \neq \emptyset$.
Now, to mitigate noise, we estimate the relative scaling between two contributors $\contributor$ and $\contributorbis$, using a mean robust estimator.
Namely, we consider the quadratically regularized median and the quadratically regularized uncertainty estimator, which yields
\begin{equation}
\label{eq:comparative_scaling_aggregated_calibration}
  \scaling_{\contributor \contributorbis} \triangleq
  1 + \qrmed_{\lipschitz_{user}} \left(
    1,
    \scaling_{\contributor \contributorbis \alternative \alternativebis} - 1,
    \scalinguncertainty_{\contributor \contributorbis \alternative \alternativebis}
    \st (\alternative, \alternativebis) \in \AlternativePair_{\contributor}^* \cap \AlternativePair_{\contributorbis}^*
  \right).
\end{equation}
\begin{equation}
\label{eq:comparative_scaling_uncertainty_aggregated}
  \scalinguncertainty_{\contributor \contributorbis} \triangleq
  \qruncertainty_{\lipschitz_{user},1} \left(
    1,
    \scaling_{\contributor \contributorbis \alternative \alternativebis} - 1,
    \scalinguncertainty_{\contributor \contributorbis \alternative \alternativebis}
    \st (\alternative, \alternativebis) \in \AlternativePair_{\contributor}^* \cap \AlternativePair_{\contributorbis}^*
  \right).
\end{equation}
Note that these estimates aggregate the influences from a single other contributor $\contributorbis$.
This is why they only need robustness against noise,
not against malicious nodes.
Hence the use of a ``large'' Lipschitz parameter $\lipschitz_{user} = 10$.
To determine the preference re-scaling of a \emph{scaling-calibration} contributor $\contributor$,
we leverage the relative re-scaling with respect to all comparable \emph{scaling-calibration} contributors, by using the Byzantine-robustified mean, as
\begin{equation}
\label{eq:scaling_calibration}
  \scaling_{\contributor} \triangleq
  1+ \qrmed_{\lipschitz / 8 \norm{\optcontributorscore_\contributor}{\infty}} \left(
    \votingright_\contributorbis^{scaling},
    \scaling_{\contributor \contributorbis} - 1,
    \scalinguncertainty_{\contributor \contributorbis}
    \st \contributorbis \in \Contributor_{\checkmark, \contributor}^{scaling}
  \right),
\end{equation}
\begin{equation}
\label{eq:scaling_uncertainty_calibration}
  \scalinguncertainty_{\contributor} \triangleq
  \qruncertainty_{\lipschitz / 8 \norm{\optcontributorscore_\contributor}{\infty}, 1} \left(
    \votingright_\contributorbis^{scaling},
    \scaling_{\contributor \contributorbis} - 1,
    \scalinguncertainty_{\contributor \contributorbis}
    \st \contributorbis \in \Contributor_{\checkmark, \contributor}^{scaling}
  \right),
\end{equation}
where $\Contributor_{\checkmark, \contributor}^{scaling} \triangleq \set{\contributorbis \in \Contributor_{\checkmark}^{scaling} \st \AlternativePair_{\contributor}^* \cap \AlternativePair_{\contributorbis}^* \neq \emptyset}$ 
is the set of \emph{scaling-calibration} contributors comparable to $\contributor$.
Note that the contributor should include themselves, 
with $\votingright_\contributorbis^{scaling} = 1$, $\scaling_{\contributor \contributor} = 1$ and $\scalinguncertainty_{\contributor \contributor} = 0$
We define similarly the translation normalization $\translation_{\contributor \contributorbis}$ of contributor $\contributor$, compared to contributor $\contributorbis$.
To do so, let us first denote $\Alternative_{\contributor \contributorbis}
\triangleq \set{\alternative \in \Alternative \st \exists \alternativebis \in \Alternative \mathsep \comparison_{\contributor \alternative \alternativebis} \neq \perp~\text{and}~\comparison_{\contributorbis \alternative \alternativebis} \neq \perp}$
the set of alternatives that both contributors $\contributor$ and $\contributorbis$ scored.
We then define
\begin{align}
  \label{eq:comparative_translation_calibration}
  \translation_{\contributor \contributorbis} &\triangleq
  \qrmed_{1} \left(
    1,
    \scaling_\contributorbis \optcontributorscore_{\contributorbis \alternative} - \scaling_{\contributor} \optcontributorscore_{\contributor \alternative},
    \scaling_{\contributor} \optcontributorscoreuncertainty_{\contributor \alternative} + \scaling_\contributorbis \optcontributorscoreuncertainty_{\contributorbis \alternative}
    \st \alternative \in \Alternative_{\contributor \contributorbis}
  \right), \\
  \label{eq:comparative_translation_uncertainty_calibration}
  \translationuncertainty_{\contributor \contributorbis} &\triangleq
  \qruncertainty_{1,1} \left(
    1,
    \scaling_\contributorbis \optcontributorscore_{\contributorbis \alternative} - \scaling_{\contributor} \optcontributorscore_{\contributor \alternative},
    \scaling_{\contributor} \optcontributorscoreuncertainty_{\contributor \alternative} + 
    \scaling_\contributorbis \optcontributorscoreuncertainty_{\contributorbis \alternative}
    \st \alternative \in \Alternative_{\contributor \contributorbis}
  \right). 
\end{align}
To determine the preference translation of a contributor $\contributor$,
we then leverage the relative translation with respect to all comparable contributors, by defining
\begin{align}
  \label{eq:translation_calibration}
  \translation_{\contributor} &\triangleq
  \qrmed_{\lipschitz / 8} \left(
    \votingright_\contributorbis^{scaling},
    \translation_{\contributor \contributorbis},
    \translationuncertainty_{\contributor \contributorbis}
    \st \contributorbis \in \Contributor_{\checkmark, \contributor}^{scaling}
  \right) \\
  \label{eq:translation_uncertainty_calibration}
  \translationuncertainty_{\contributor} &\triangleq
  \qruncertainty_{\lipschitz / 8, 1} \left(
    \votingright_\contributorbis^{scaling},
    \translation_{\contributor \contributorbis},
    \translationuncertainty_{\contributor \contributorbis}
    \st \contributorbis \in \Contributor_{\checkmark, \contributor}^{scaling}
  \right).
\end{align}
Again each user should include themselves, with $\translation_{\contributor \contributor} = 0$ and $\translationuncertainty_{\contributor \contributor} = 0$. 
Scaling-calibration contributor $\contributor$'s \mehestan{}-scaled score is then given by
\begin{equation}
  \theta^{\mehestan{}}_{\contributor \alternative} \triangleq
  \scaling_{\contributor} \optcontributorscore_{\contributor \alternative}
  + \translation_{\contributor}.
\end{equation}
We define the uncertainty on the scaled score by
\begin{equation}
  \uncertainty \theta^{\mehestan{}}_{\contributor \alternative} \triangleq
  \scaling_{\contributor} \optcontributorscoreuncertainty_{\contributor \alternative}
  + \translationuncertainty_{\contributor}
  + \left\lbrace 
  \begin{array}{ll}
      \scalinguncertainty_{\contributor} \optcontributorscore_{\contributor \alternative} & \text{ if } \optcontributorscore_{\contributor \alternative} \geq 0 \\
      \absv{\optcontributorscore_{\contributor \alternative}} 
      (\scalinguncertainty_{\contributor, right}, \scalinguncertainty_{\contributor, left}) & \text{ if } \optcontributorscore_{\contributor \alternative} < 0
  \end{array}
  \right.
\end{equation}
\subsection{Scaling of non-scaling-calibration contributors}
Nonscalers are then made to fit the scalers' scaled scores $(\theta^{\mehestan{}}, \uncertainty \theta^{\mehestan{}})$, using similar techniques.
One slight difference occurs in equations \eqref{eq:scaling_calibration} and \eqref{eq:translation_calibration},
where $\brmean$ is used instead of $\qrmed$,
and where the scaled user excludes themselves.
\section{Hyperparameters of the synthetic evaluation}
\label{app:synthetic}
In this section we detail the hyperparameters of the experiments.
Note that they are also provided as is in the Supplementary Material,
and that experiments can be effortlessly reproduced by specifying their use.
\subsection{Resilience}
Below, we list the hyperparameters of the resilience experiment.
\begin{lstlisting}[language=json,firstnumber=1]
{
    "title": "Resilience of Solidago to untrustworthy users",
    "ylegend": "Correlation between learned global and ground truth",
    "xlegend": "Fraction of trustworthy users",
    "xparameter": "generative_model.user_model.p_trustworthy",
    "xvalues": [0.01, 0.2, 0.5, 0.8, 1.0], 
    "zparameter": "pipeline.aggregation.lipschitz", 
    "zvalues": [0.1, 0.3, 1, 3],
    "zlegends": ["Lipschitz = 0.1", "Lipschitz = 0.3", "Lipschitz = 1", "Lipschitz = 3"],
    "n_users": 30, 
    "n_entities": 50,
    "n_seeds": 100, 
    "generative_model": {
        "user_model": ["NormalUserModel", {
            "p_trustworthy": 0.8, 
            "p_pretrusted": 0.2, 
            "zipf_vouch": 2.0, 
            "zipf_compare": 1.5, 
            "poisson_compare": 30.0, 
            "n_comparisons_per_entity": 3.0, 
            "multiplicator_std_dev": 1.0,
            "svd_mean": [3.0, 0.0],
            "engagement_bias_std_dev": 0.0
        }], 
        "vouch_model": ["ErdosRenyiVouchModel"], 
        "entity_model": ["NormalEntityModel", {
            "mean": [0.0, 0.0]
        }], 
        "engagement_model": ["SimpleEngagementModel", {
            "p_per_criterion": {"0": 1.0}, 
            "p_private": 0.2
        }], 
        "comparison_model": ["KnaryGBT", {
            "n_options": 21, 
            "comparison_max": 10
        }]
    }, 
    "pipeline": {
        "trust_propagation": ["LipschiTrust", {
            "pretrust_value": 0.8, 
            "decay": 0.8, 
            "sink_vouch": 5.0, 
            "error": 1e-08
        }], 
        "voting_rights": ["AffineOvertrust", {
            "privacy_penalty": 0.5, 
            "min_overtrust": 2.0, 
            "overtrust_ratio": 0.1
        }], 
        "preference_learning": ["UniformGBT", {
            "prior_std_dev": 7, 
            "comparison_max": 10, 
            "convergence_error": 1e-05, 
            "cumulant_generating_function_error": 1e-05
        }], 
        "scaling": ["ScalingCompose", [
                ["Mehestan", {
                    "lipschitz": 1, 
                    "min_activity": 1, 
                    "n_scalers_max": 100, 
                    "privacy_penalty": 0.5, 
                    "user_comparison_lipschitz": 10, 
                    "p_norm_for_multiplicative_resilience": 4.0, 
                    "error": 1e-05
                }], 
                ["QuantileZeroShift", {
                    "zero_quantile": 0.15, 
                    "lipschitz": 0.1, 
                    "error": 1e-05
                }]
            ]
        ], 
        "aggregation": ["QuantileStandardizedQrMedian", {
            "dev_quantile": 0.9, 
            "lipschitz": 0.1, 
            "error": 1e-05
        }], 
        "post_process": ["Squash", {
            "score_max": 100
        }]
    }
}
\end{lstlisting}
\subsection{Engagement bias}
Below, we list the hyperparameters of the engagement-bias experiment.
\begin{lstlisting}[language=json,firstnumber=1]
{
    "title": "The role of Mehestan to fix engagement bias",
    "ylegend": "Correlation between learned global and ground truth",
    "xlegend": "Amount of engagement bias",
    "xparameter": "generative_model.user_model.engagement_bias_std_dev",
    "xvalues": [0.01, 1.0, 4.0, 10.0], 
    "zparameter": "pipeline.scaling.scalings.0.lipschitz", 
    "zvalues": [1.0, 3.0, 10.0, 30.0, 100.0],
    "zlegends": ["Lipschitz = 1", "Lipschitz = 3", "Lipschitz = 10", "Lipschitz = 30", "Lipschitz = 100"],
    "n_users": 30, 
    "n_entities": 100,
    "n_seeds": 100, 
    "generative_model": {
        "user_model": ["NormalUserModel", {
            "p_trustworthy": 0.8, 
            "p_pretrusted": 0.5, 
            "zipf_vouch": 1.5, 
            "zipf_compare": 1.5, 
            "poisson_compare": 30.0, 
            "n_comparisons_per_entity": 3.0, 
            "multiplicator_std_dev": 1.0,
            "svd_mean": [3.0, 0.0],
            "engagement_bias_std_dev": 10.0
        }], 
        "vouch_model": ["ErdosRenyiVouchModel"], 
        "entity_model": ["NormalEntityModel", {
            "mean": [0.0, 0.0]
        }], 
        "engagement_model": ["SimpleEngagementModel", {
            "p_per_criterion": {"0": 1.0}, 
            "p_private": 0.2
        }], 
        "comparison_model": ["KnaryGBT", {
            "n_options": 21, 
            "comparison_max": 10
        }]
    }, 
    "pipeline": {
        "trust_propagation": ["LipschiTrust", {
            "pretrust_value": 0.8, 
            "decay": 0.8, 
            "sink_vouch": 5.0, 
            "error": 1e-08
        }], 
        "voting_rights": ["AffineOvertrust", {
            "privacy_penalty": 0.5, 
            "min_overtrust": 2.0, 
            "overtrust_ratio": 0.1
        }], 
        "preference_learning": ["UniformGBT", {
            "prior_std_dev": 7, 
            "comparison_max": 10, 
            "convergence_error": 1e-05, 
            "cumulant_generating_function_error": 1e-05
        }], 
        "scaling": ["ScalingCompose", [
                ["Mehestan", {
                    "lipschitz": 1, 
                    "min_activity": 1, 
                    "n_scalers_max": 100, 
                    "privacy_penalty": 0.5, 
                    "user_comparison_lipschitz": 10, 
                    "p_norm_for_multiplicative_resilience": 4.0, 
                    "error": 1e-05
                }], 
                ["QuantileZeroShift", {
                    "zero_quantile": 0.15, 
                    "lipschitz": 0.1, 
                    "error": 1e-05
                }]
            ]
        ], 
        "aggregation": ["QuantileStandardizedQrMedian", {
            "dev_quantile": 0.9, 
            "lipschitz": 0.1, 
            "error": 1e-05
        }], 
        "post_process": ["Squash", {
            "score_max": 100
        }]
    }
}
\end{lstlisting}
\section{Selecting the hyperparameters}
\label{sec:section}
In this section, we discuss our hyperparameter selection,
by analyzing how the hyperparameters affect the distributions of individual and global squashed scores.
In particular, we study how the distributions of the scores vary,
depending on the number of comparisons and of contributors the alternatives received.
Recall that, for Tournesol, the selected hyperparameters are
$\contributorpriorweight \triangleq 0.02$,
$\zeroshiftquantile \triangleq 0.15$,
$\sssdevquantile \triangleq 0.9$ and
$\lipschitz \triangleq 0.1$.
The graphs are all obtained based on Tournesol's full dataset, which contains private inputs.
Note that similar graphs can be obtained using only Tournesol's public dataset, which can be downloaded from Tournesol's website at~\texttt{https://tournesol.app/\#research}.
\subsection{The prior weight $\contributorpriorweight$}
For the relatively large value $\contributorpriorweight = 0.2$ (see bottom left graph of Figure~\ref{fig:hyperparameter_alpha},
we see that alternatives can only reach very high and very low scores if they are compared a large number of times.
This is typical of a regularization that is too large.
On the other hand, our concern for $\contributorpriorweight = 0.002$ was that the noise could be too amplified,
i.e. a video may already have an extreme score despite being compared only once.
\begin{figure}[ht]
    \centering
    \includegraphics[width=.49\linewidth]{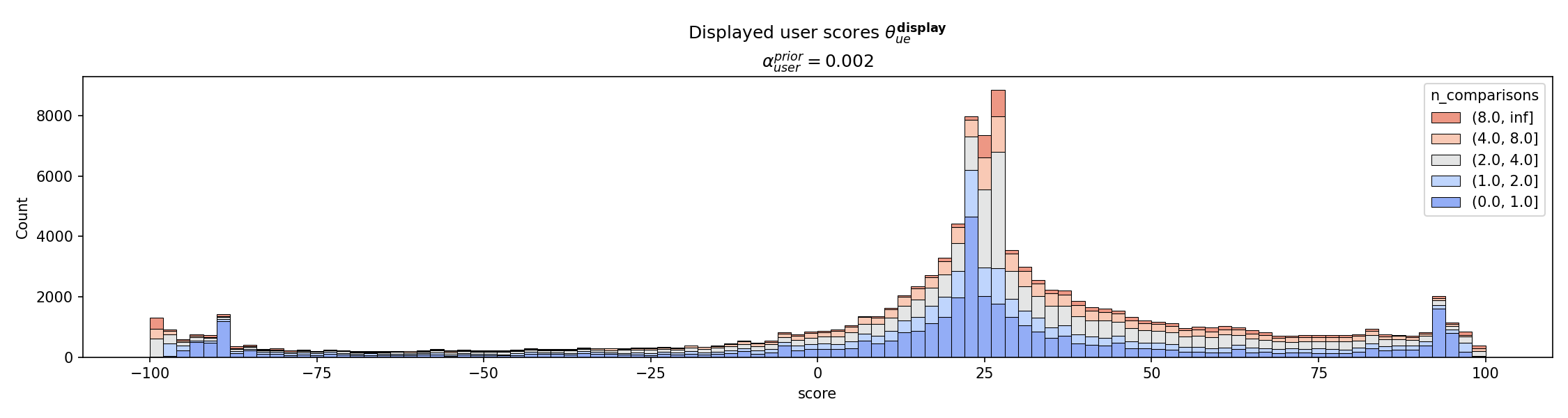} 
    \includegraphics[width=.49\linewidth]{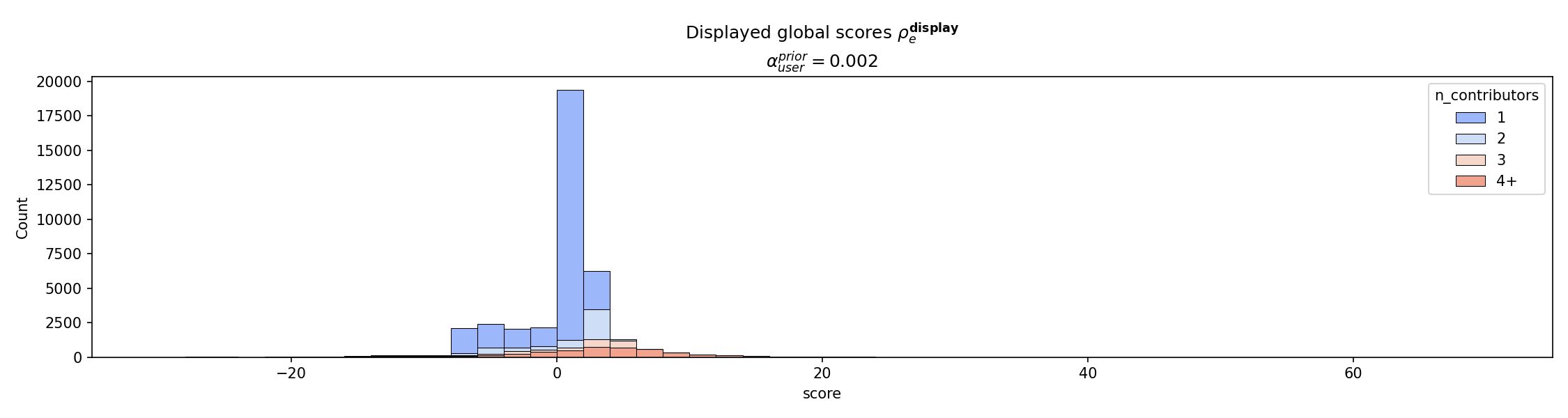} \\
    \includegraphics[width=.49\linewidth]{default_indiv_scores.png} 
    \includegraphics[width=.49\linewidth]{default_global_scores.png} \\
    \includegraphics[width=.49\linewidth]{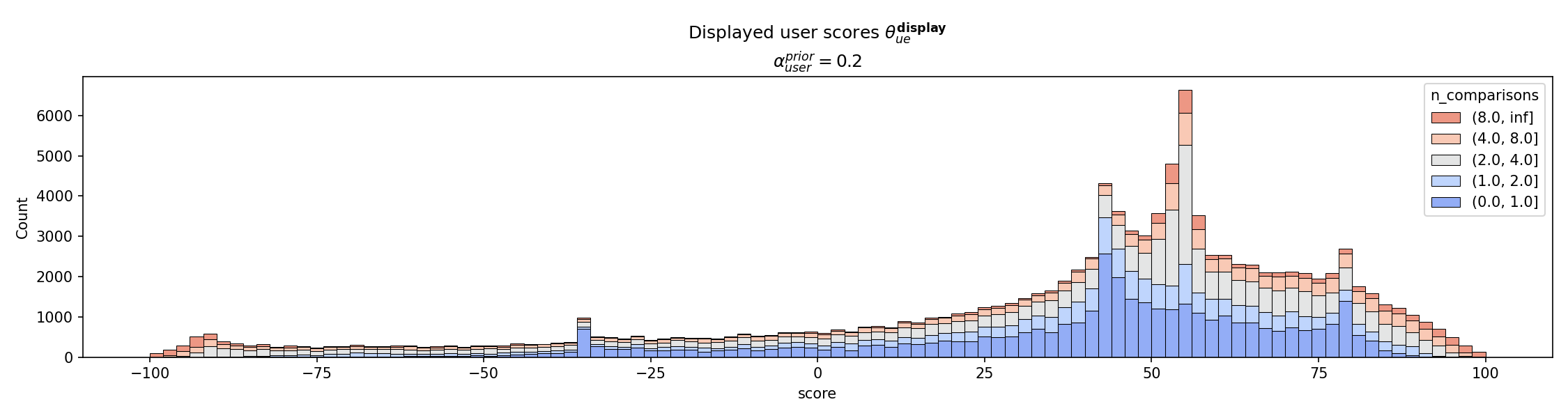} 
    \includegraphics[width=.49\linewidth]{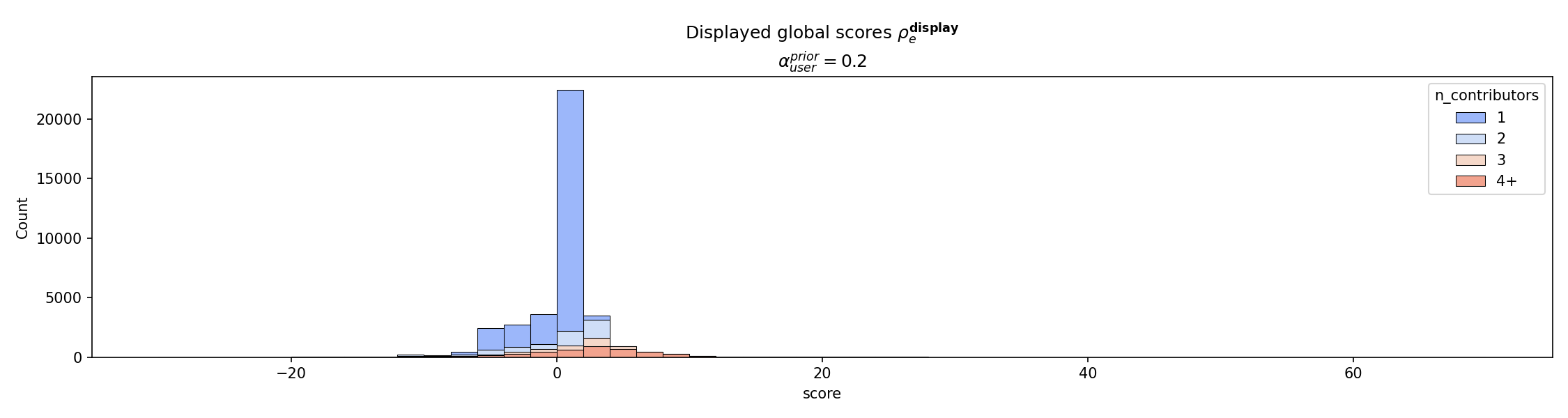}
    \caption{Impact of the prior weight $\contributorpriorweight$ on the distributions of squashed individual and global scores}
    \label{fig:hyperparameter_alpha}
\end{figure}
\subsection{The zero-shifting quantile $\zeroshiftquantile$}
The effect of the zero-shifting quantile $\zeroshiftquantile$ is clearly visible in Figure~\ref{fig:hyperparameter_q_shift}.
Namely, as the quantile $\zeroshiftquantile$ grows (from top to bottom of the figure), the distribution of individual scores is shifted closer to the middle.
We selected the middle value $\zeroshiftquantile$ as most individual scores were then around 50, 
leaving enough room on the upper end to distinguish top videos,
while also marking a clear difference between rated videos on Tournesol
and those that were never rated,
thereby enforcing a form of \emph{presumption of non-recommendability} for alternatives without evaluation.
\begin{figure}[ht]
    \centering
    \includegraphics[width=0.49\linewidth]{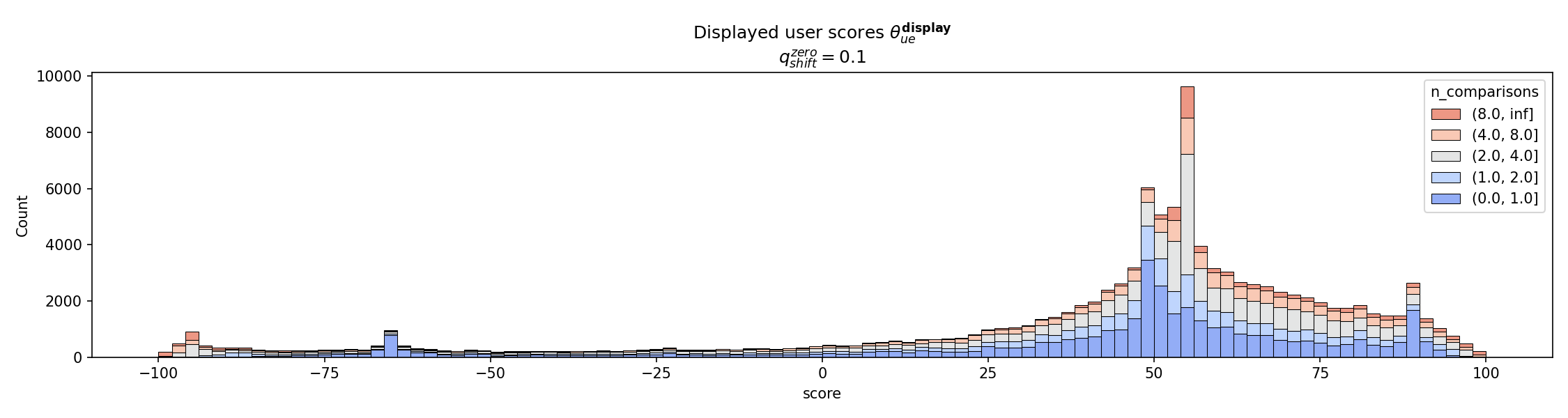} 
    \includegraphics[width=0.49\linewidth]{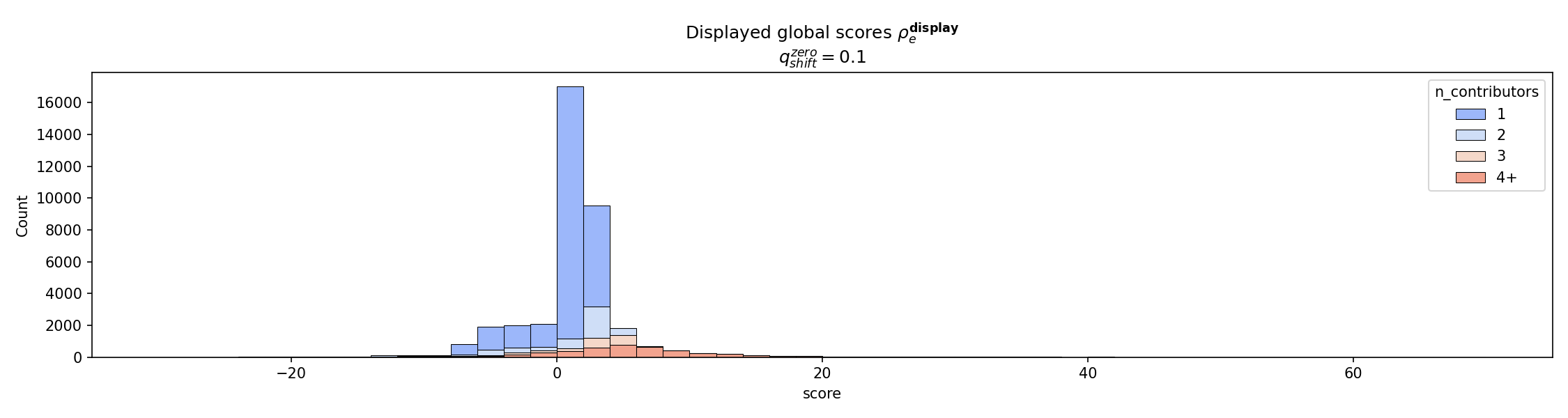} \\
    \includegraphics[width=.49\linewidth]{default_indiv_scores.png} 
    \includegraphics[width=.49\linewidth]{default_global_scores.png} \\
    \includegraphics[width=0.49\linewidth]{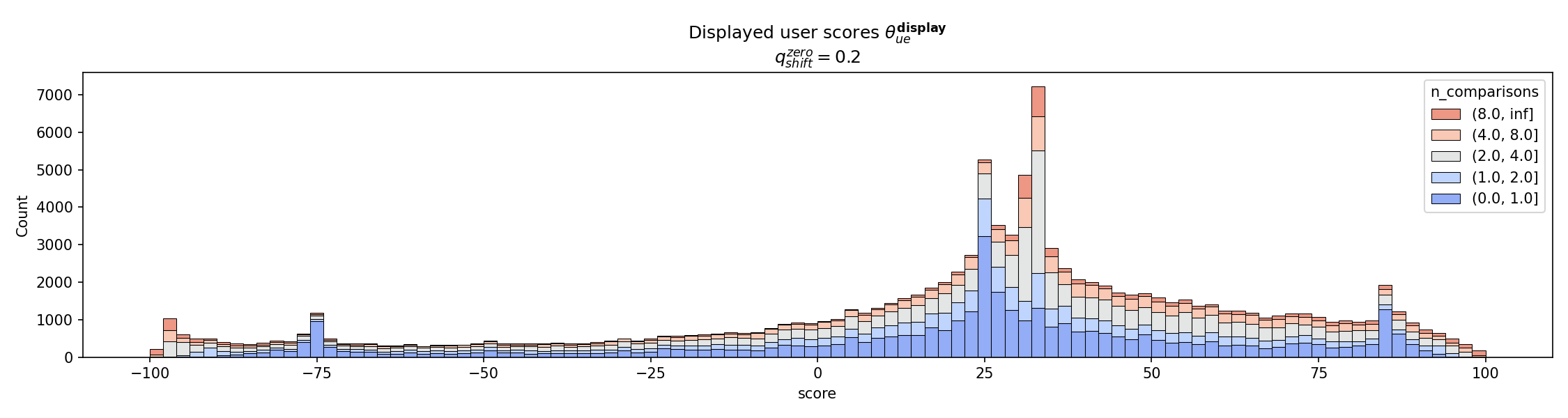} 
    \includegraphics[width=0.49\linewidth]{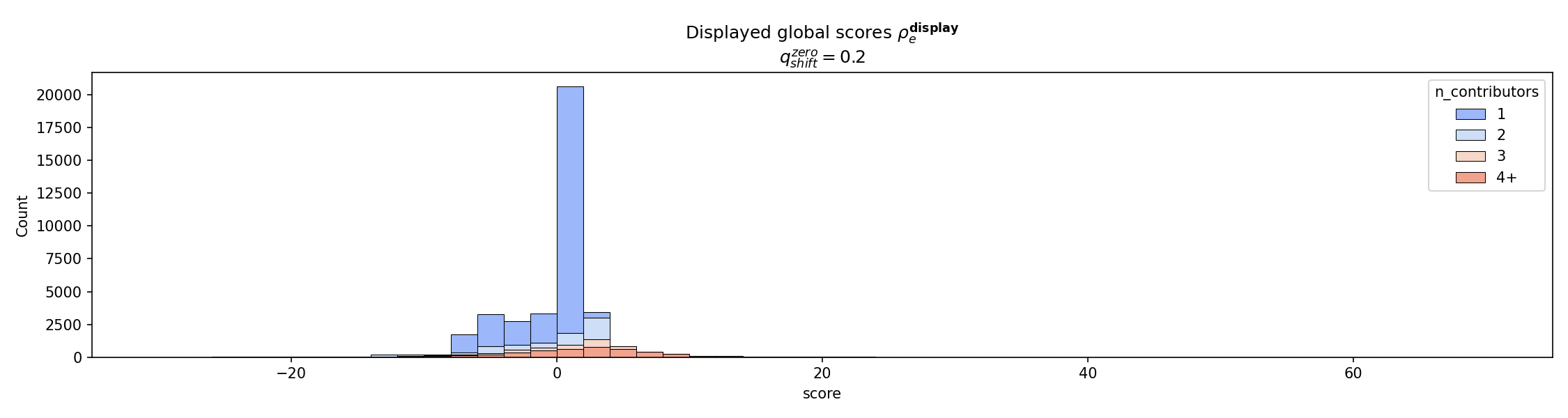} \\ 
    \caption{Impact of the zero-shifting quantile $\zeroshiftquantile$ on the distributions of squashed individual and global scores}
    \label{fig:hyperparameter_q_shift}
\end{figure}
\subsection{The collaboratively scaled individual score deviation quantile $\sssdevquantile$}
As explained the main part of the paper,
the distribution of collaboratively scaled scores is very heavy-tailed.
This is clearly evidenced by the top-left graph of Figure~\ref{fig:hyperparameter_q_dev}.
Indeed, the median deviation to the median (which corresponds to  $\sssdevquantile = 0.5$) is significantly smaller than the deviations of many scores, 
which implies that, after standardizing, shifting and squashing using  $\sssdevquantile = 0.5$,
many individual scores take the extreme values $-100$ and $100$.
We leaned on  $\sssdevquantile = 0.9$, as we deem it reasonable if at most $10\%$ of the scores take the extreme values $-100$ and $100$.
Empirically, we found that such extreme values are rarer, which makes this value reasonable.
\begin{figure}[ht]
    \centering
    \includegraphics[width=0.49\linewidth]{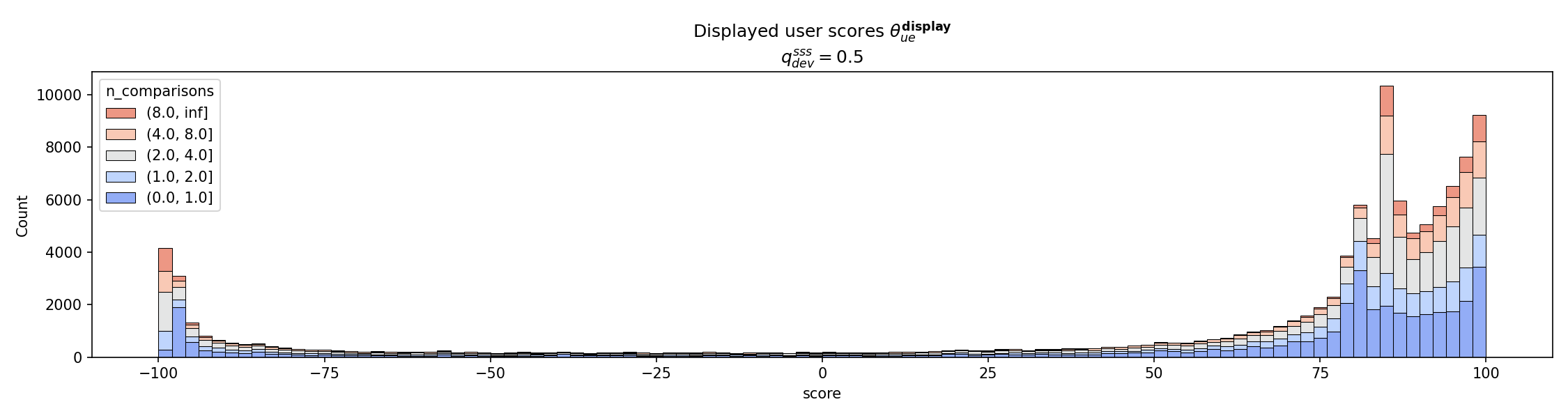} 
    \includegraphics[width=0.49\linewidth]{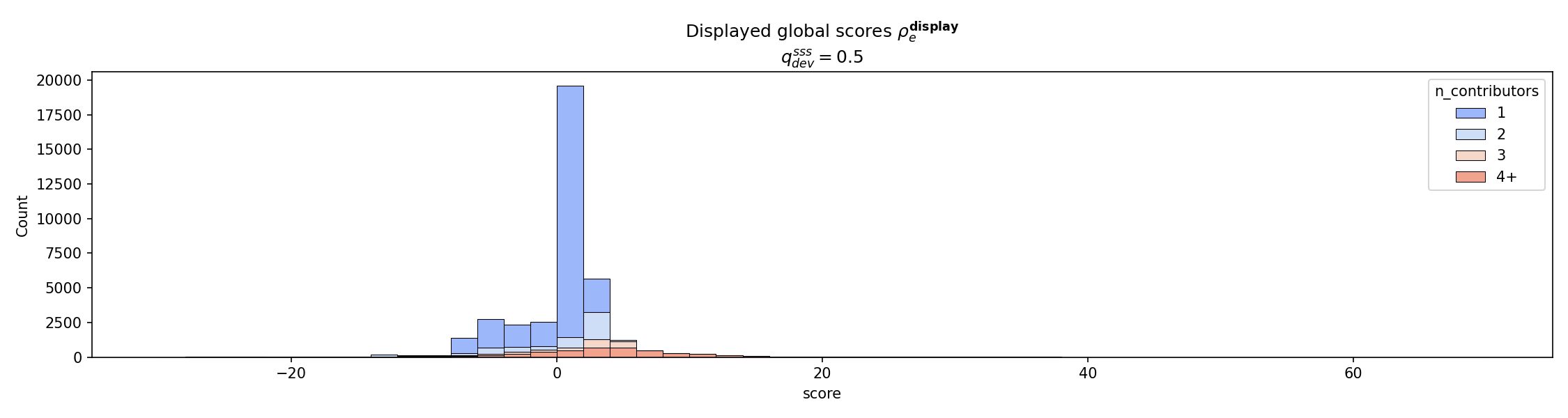} \\
    \includegraphics[width=.49\linewidth]{default_indiv_scores.png} 
    \includegraphics[width=.49\linewidth]{default_global_scores.png} \\
    \includegraphics[width=0.49\linewidth]{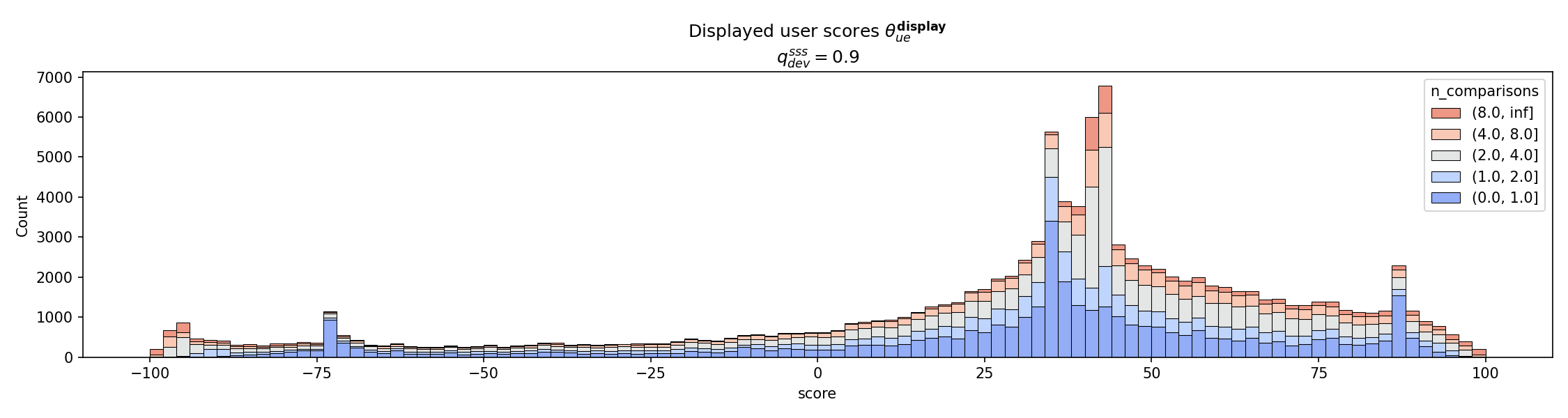} 
    \includegraphics[width=0.49\linewidth]{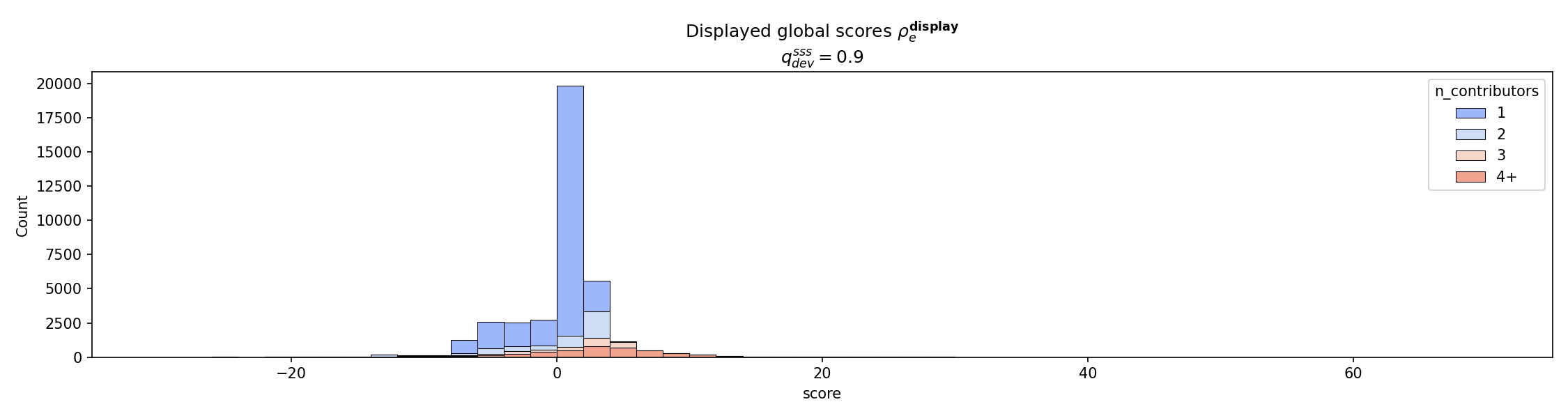} \\ 
    \caption{Impact of the collaboratively scaled individual score deviation quantile $\sssdevquantile$ on the distributions of squashed individual and global scores}
    \label{fig:hyperparameter_q_dev}
\end{figure}
\subsection{The Lipschitz resilience $\lipschitz$}
Lipschitz resilience only really matters when aggregating scores into global scores,
which is why we only plot them in Figure~\ref{fig:hyperparameter_L}.
We see that a smaller choice of $\lipschitz$ implies that videos must be scored by many contributors to have scores that really deviate from zero.
However, we feared that for $\lipschitz = 0.05$, only videos score by a very large number of users can obtain a noteworthy recommendability score.
On the other hand, for $\lipschitz = 0.2$, the global scores can already exceed 40 with only 3 contributors.
Note that Tournesol also uses a recommendability threshold, which we set at $20$.
In other words, Tournesol will not recommend any video whose global score is below this threshold.
Interestingly, given that each contributor $\contributor$ can directly influence the global scores $\globalscore$ only by $\lipschitz \votingright_\contributor$,
since the voting rights are always at most unit ($\votingright_\contributor \leq 1$),
and since the displayed scores are squashed,
for $\lipschitz = 0.1$,
the maximal score of a video with two contributors is 
$100 \frac{2 \lipschitz}{\sqrt{1 + (2 \lipschitz)^2}} \approx 19.6 < 20$.
Put differently, at least three contributors must be involved to make a content recommendable on Tournesol.
Evidently, Tournesol may modify the security hyperparameter $\lipschitz$ as the platform gains in importance.
\begin{figure}[ht]
    \centering
    \includegraphics[width=0.49\linewidth]{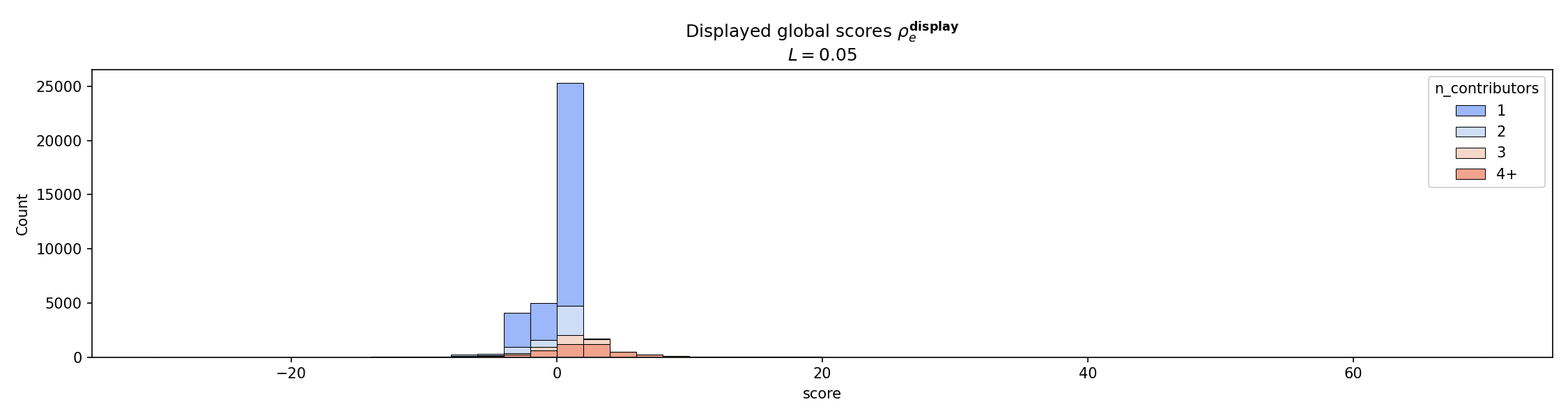} \\
    \includegraphics[width=.49\linewidth]{default_global_scores.png} \\
    \includegraphics[width=0.49\linewidth]{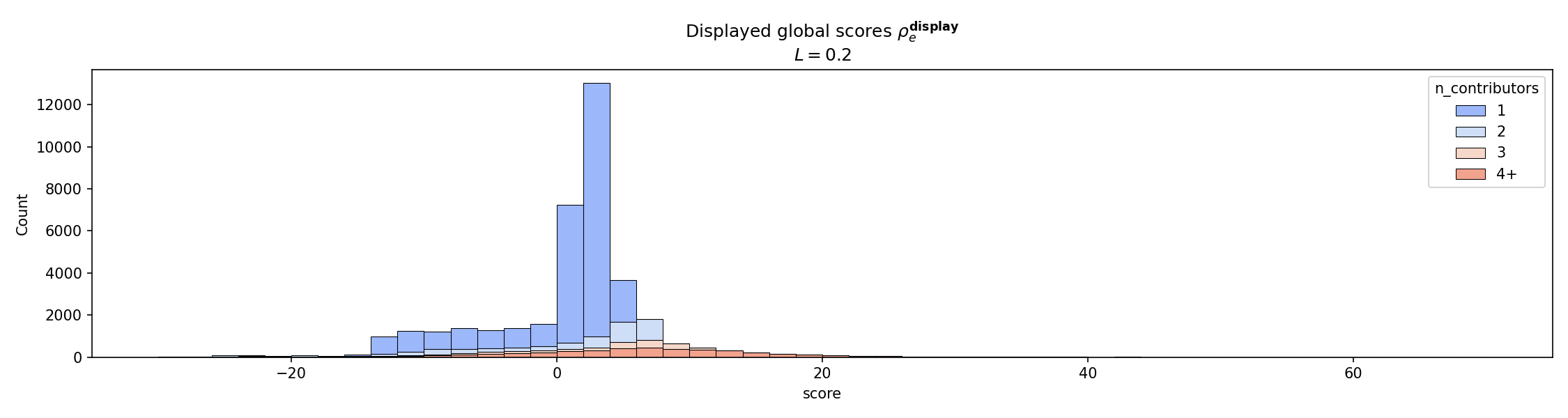} \\ 
    \caption{Impact of the Lipschitz resilience $\lipschitz$ on the distributions of squashed individual and global scores}
    \label{fig:hyperparameter_L}
\end{figure}
\section{Impact of the aggregation quantile}
\label{app:quantile_change}
This section presents experiments on the Tournesol public dataset~\cite{tournesol},
where the aggregation quantile $\alpha$ is changed, from $0.1$ to $0.8$.
Namely, we consider the default pipeline, 
but with an aggregation whose quantile changes.
Moreover, to limit the effect of low numbers of evaluations of a video,
we both consider a very large Lipschitz resilience for aggregation ($L \triangleq 100$),
and we only consider 484 videos with at least 100 comparisons from at least 10 contributors.
\begin{figure}[ht]
    \centering
    \includegraphics[width=.49\linewidth]{default_global_scores.png} \\
    \includegraphics[width=0.49\linewidth]{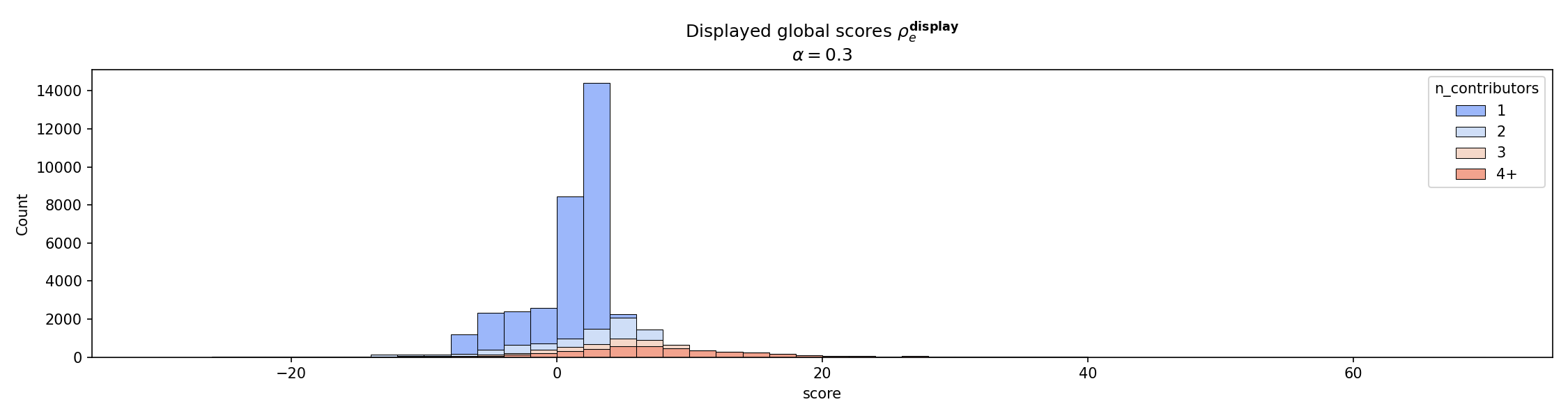} \\ 
    \includegraphics[width=0.49\linewidth]{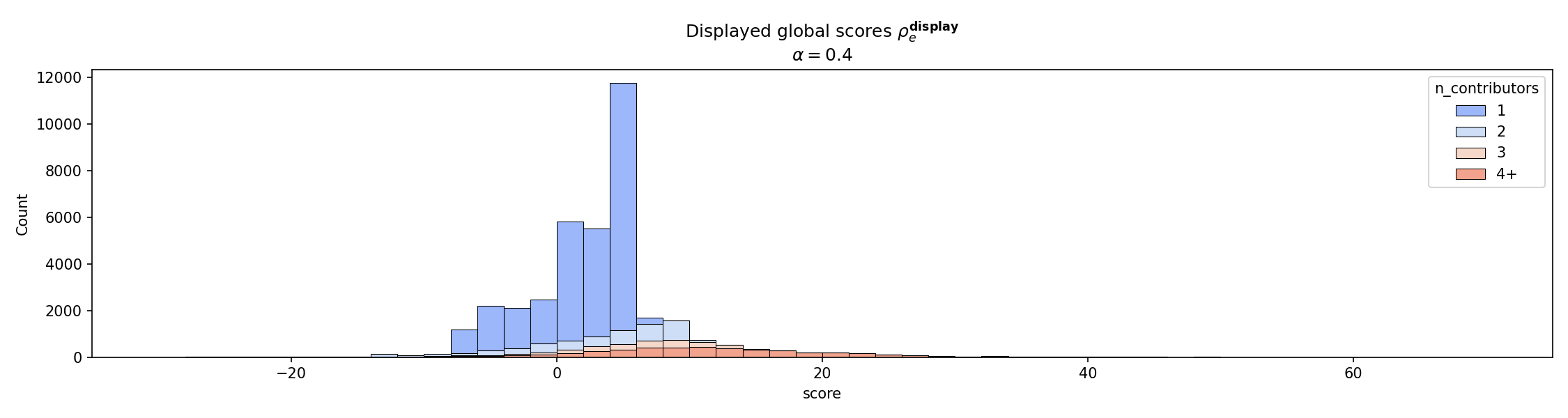} \\
    \includegraphics[width=0.49\linewidth]{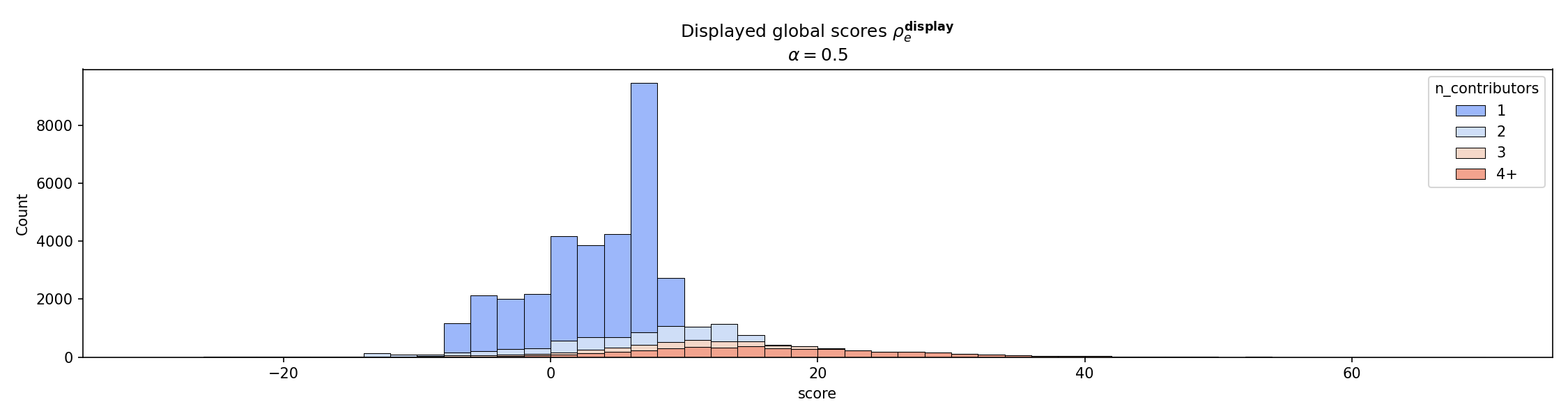} \\
    \caption{Impact of the aggregation quantile $\alpha$ on the distributions of squashed individual and global scores}
    \label{fig:hyperparameter_L}
\end{figure}
Figure~\ref{fig:quantile_change} tracks the ranking of a sample of 100 of these videos,
when $\alpha$ is changed.
\begin{figure}
    \centering
    \begin{subfigure}{0.49\linewidth}
        \includegraphics[width=\linewidth]{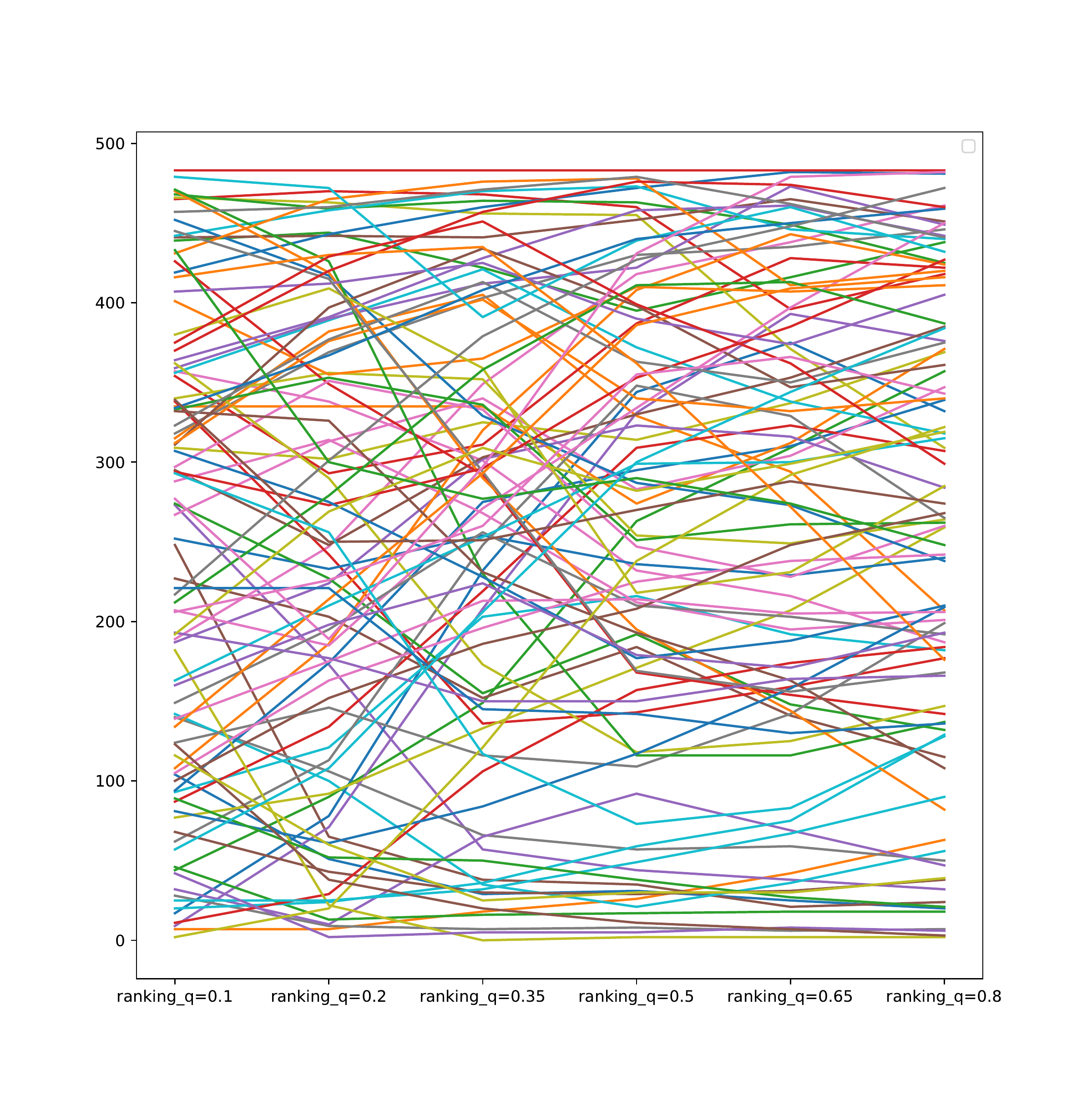}    
        \caption{Among a sample of 100 videos.}
    \end{subfigure}
    \begin{subfigure}{0.49\linewidth}
        \includegraphics[width=\linewidth]{ranking_largest_change.pdf}    
        \caption{Among the 10 videos with the largest ranking changes.}
    \end{subfigure}
    \caption{Changes in ranking}
    \label{fig:quantile_change}
\end{figure}
Perhaps unsurprisingly, the videos with the most negative ranking change are politicized,
or discuss polarizing topics such as meat consumption.
\section{Open research challenges}
\label{sec:future}
In this section, we briefly list research venues which we believe to be important for any secure and collaborative algorithmic governance system design.
\subsection{Accounting for direct assessments}
Users issue preferences between alternatives, which are converted into individual scores via the generalized Bradley-Terry model. We also want to make it possible to express assessments (likes or dislikes on the alternatives) which will be taken into account in the estimation of the scores. This presupposes adapting the BT type models to these new data, what we plan for future work.
\subsection{Distinguishing instinctive preferences from thoughtful volitions}
It has been empirically shown that more reflexive judgments, 
which are related to what philosophers sometimes call volitions~\cite{adams1992intention, zhu2004understanding},
lead to statistically different judgments.
Some studies even find that they seem more inclusive to outgroup members~\cite{agan2023automating}.
As initiated by~\cite{DBLP:conf/hci/LechiakhM23}, it would be worthwhile to construct algorithms
capable of making such a distinction, by learning systematic bias between judgments provided with more or less thought.
In particular, the Tournesol dataset contains both judgments made only along the main criterion,
as well as judgments made while considering all criteria, which area likely to be more thoughtful.
Future work should investigate whether this can be leveraged.
\subsection{Enabling liquid democracy}
One of the key limitations of the platform, is that most humans are not very active on it.
To increase (indirect) participation, liquid democracy was proposed~\cite{carroll1884principles, ramos2015liquid, DBLP:conf/sigecom/0002HJMPR23},
where each voter can delegate their vote to any other voter of their choice.
A platform like Tournesol seems like a promising venue to further develop, deploy and test such principles,
and to adjust them to security and undesirable group dynamic concerns.
\subsection{Leveraging (vouched) expertise}
Future work should investigate how to certify and leverage contributors' expertise, 
e.g. based on the topic of a content and the quality criterion under consideration.
In particular, it would be worth partnering with institutions like Amnesty or the IPCC, e.g.,
to know their members' top content recommendations.
A peer-to-peer and institution-to-peer vouching system may be a promising approach for certification.
\subsection{Online updates}
Our algorithms are unfortunately slow to run.
On Tournesol, they require 40 minutes of computation.
This means that they cannot be run whenever a contributors enters a comparison,
which prevents near-instantaneous feedbacks.
Heuristics have yet to be designed to allow this, 
which is arguably a key component for interpretability.
Moreover, the guarantees of such heuristics would need to be analyzed.
\subsection{Distributed computing}
Future work should also aim to distribute the computational workload,
e.g. by asking contributors to perform operations on their machines.
Moreover, distributed computing could allow distributed verifiability,
i.e. contributors would no longer have to trust that a central server (like Tournesol's) correctly runs our algorithms.
\subsection{Active learning}
Tournesol mostly relies on contributors to select which alternatives to compare.
Future work should investigate active learning algorithms to best help contributors to effortlessly provide highly information-valuable comparisons.
Keeping contributors motivated is particularly important and challenging.
\subsection{Robust raw score generalization.}
For security reasons, contributors currently only influence the score of the alternatives they explicitly compared.
Using machine learning, contributors' scores could be generalized,
to essentially guess how a given contributor would have likely scored the alternatives that they did not explicitly compare,
and to take this into account to compute global scores.
\subsection{Byzantine collaborative filtering.}
Another approach to score generalize relies on \emph{collaborative filtering}.
However, collaborative filtering has little security guarantees.
Future research should investigate the design of Byzantine-resilient collaborative filtering algorithms.
\subsection{Bayesian voting.}
Unfortunately, we suspect large biases in the population of contributors.
To correct these biases, a promising approach would consist of predicting 
what non-participating humans would have likely voted,
e.g. based on what similar contributors have voted.
Intuitively, this would amount to assigning more voting rights to contributors
who represent large under-represented populations.
This raises serious security concerns though,
as malicious contributors would be incentivize to claim to be from such under-represented populations 
to increase their voting rights.
\subsection{Measuring Tournesol's impact.}
Since Tournesol's end goal is to improve information prioritization,
it is important to determine the extent to which 
the daily use of Tournesol modifies users' content consumption habits.
Experiments should be conducted to do so.
\subsection{Aligned language models.}
Tournesol essentially collaboratively labels which content is desirable to repeat.
This could be used to train language models that are more aligned with human preferences.
\subsection{Provide end-to-end guarantees}
We did not provide end-to-end security guarantees.
Doing so without sacrificing accuracy is another of the many exciting future works.
But as a result, we expect Tournesol's algorithms to be regularly updated in the coming years.
Nevertheless, we believe that much of the pipeline structure presented in the paper will remain relevant,
both for Tournesol and other large-scale secure participatory projects.

\end{document}